\tikzset{ 
	protovertex/.style={
		draw,
		circle,
		inner sep=0,
		minimum size=.15cm}
}
\definecolor{TUMBlue}{HTML}{0065BD}
\newcommand{\fwnote}[1]{\todo[color=green!60]{\textbf{Future Work:}\\#1}}
\renewcommand{\fwnote}[1]{}
\newtheorem{theorem}{Theorem}[section]
\newtheorem{lemma}[theorem]{Lemma}
\newtheorem{corollary}[theorem]{Corollary}
\renewcommand{\log}{\ln}
\newcommand{\fl}{\mathbf{L}}
\newcommand{\fw}{\mathbf{W}}
\newcommand{\E}{\mathbb{E}}
\newcommand{\p}{\mathbb{P}}
\newcommand{\N}{\mathbb{N}}
\newcommand{\Z}{\mathbb{Z}}
\newcommand{\R}{\mathbb{R}}
\newcommand{\eps}{\varepsilon}
\newcommand{\loss}[1][ALG]{\fl_{\text{#1}}}
\newcommand{\waitobj}[1][ALG]{\fw_{\text{#1}}}
\newcommand{\alg}{\text{ALG}}
\newcommand{\gdy}{\text{GDY}}
\newcommand{\pat}{\text{PAT}}
\newcommand{\dep}{\mu}
\newcommand{\Tmax}{T} 
\newcommand{\distname}{sojourn distribution\xspace}
\newcommand{\distnamepl}{sojourn distributions\xspace}
\newcommand{\dtimename}{sojourn time\xspace}
\newcommand{\dtimenamepl}{sojourn times\xspace}
\newcommand{\dparaname}{degree parameter\xspace} 
\newcommand{\dparanamepl}{degree parameters\xspace} 
\newcommand{\sou}{X}
\newcommand{\arr}{A}
\newcommand{\numbarr}{a}
\newcommand{\pool}{Z}
\newcommand{\psize}{z}
\newcommand{\ag}{v}
\newcommand{\agtwo}{\hat v}
\newcommand{\tiebreak}{R} 
\renewcommand*{\thefootnote}{\fnsymbol{footnote}} 
\title{Superiority of Instantaneous Decisions in Thin Dynamic Matching Markets}
\author[1]{Johannes Bäumler}
\author[2]{Martin Bullinger}
\author[3]{Stefan Kober}
\author[4,5]{Donghao Zhu}
\affil[1]{ \small Department of Mathematics, University of California Los Angeles, USA}
\affil[2]{ \small School of Engineering Mathematics and Technology, University of Bristol, UK}
\affil[3]{ \small Department of Mathematics, Université Libre de Bruxelles, Belgium}
\affil[4]{ \small The Institute of Statistical Mathematics, Tokyo, Japan}
\affil[5]{ \small The University of Tokyo Market Design Center, Tokyo, Japan\protect\\ \vspace*{0.05cm} jbaeumler@math.ucla.edu, martin.bullinger@bristol.ac.uk, stefan.kober@ulb.be, zhu.donghao@ism.ac.jp}
\date{}
\begin{document}
	
	\maketitle

\begin{abstract}
	We study a dynamic matching setting where homogeneous agents arrive at random according to a Poisson process and randomly form edges yielding a sparse market. 
	Agents stay in the market according to a certain \distname and wait to be matched with a compatible agent by a matching algorithm. 
    When their maximum \dtimename is reached, they perish unmatched.
	The primary objective is to maximize the number of matched agents. 
	Our main result is to show that a uniformly guaranteed \dtimename suffices to get almost optimal performance of instantaneous matching.
    Interestingly, this matching policy essentially keeps as few agents in the market as possible.
    Hence, in contrast to the common paradigm that market thickness is the crucial property for obtaining strong matching performance, we show that the agents' sojourn behavior can be an equally powerful factor. 
	In addition, instantaneous matching is close to optimal with respect to minimizing waiting time. 
	We develop new techniques for proving our results going beyond commonly adopted methods for Markov processes.\\
{\bf Keywords:} \texttt{Market design}, \texttt{matching markets}, \texttt{dynamic matching}
\end{abstract}

\renewcommand*{\thefootnote}{\arabic{footnote}}

\section{Introduction}

Matching is one of the most vibrant research areas at the intersection of economics, operations research, and computer science with an abundance of applications encompassing labor markets, school choice, child adoption, dating platforms, ad allocation, ride-sharing, or kidney exchange. 
In reality, many matching scenarios have a dynamic flavor in the sense that agents arrive and get matched over time.
Moreover, there might be uncertainty over when agents arrive in the market and for how long they stay.
We contribute to the research on algorithmic decision making in dynamic stochastic matching markets.

We follow the models by \citet{AAGK17a} and \citet{ALG20a} where agents arrive over the course of a continuous time horizon according to a Poisson process.
Agents then stay in the market for a random (maximum) \dtimename $\sou$.
They are compatible with other agents according to independent Bernoulli variables similar to the random graph model by \citet{ErRe60a}.
During her sojourn, an algorithm can match an agent with a compatible agent and the pair leaves the market.
However, if an agent is unmatched at the end of her sojourn, she perishes, i.e., leaves the market unmatched.
The time line of our matching market is illustrated in \Cref{fig:procedure}.
We want our matching policies to achieve two goals:
\begin{enumerate}
    \item Small loss, i.e., few unmatched agents.
    \item Small waiting time of agents until being matched.
\end{enumerate}

\begin{figure}
    \centering
	\begin{tikzpicture}
		\pgfmathsetmacro\rightdist{13.5}
		\pgfmathsetmacro\arrive{1}
		\pgfmathsetmacro\match{6}
		\pgfmathsetmacro\depart{10.5}

		\draw[->, thick] (0,0) -- (\rightdist,0);
		\node at ($(\rightdist,-.3)$) {\footnotesize continuous time};
		\node[anchor=north] at ($(\arrive,-.2)$) {\small random arrival};
		\node[text width=2.8cm,anchor=north,align = center] at ($(\depart,-.2)$) {\small perishing without being matched};
		\node[text width=3.5cm,anchor=north,align = center] at ($(\match,-.2)$) {\small matching opportunity during sojourn};
		
		\foreach \i in {\arrive,\depart}{
		\draw ($(\i,0) + (0,-.15)$) -- ($(\i,0) + (0,.15)$);}
		\draw[gray] ($(\match,0) + (0,-.15)$) -- ($(\match,0) + (0,.15)$);
		\draw [decorate,decoration={brace,amplitude=5pt}]
		(\arrive,.3) -- (\depart,.3) node[midway, yshift = .2cm,anchor = south]
        {\small random sojourn};
		
		\draw[very thick, ->] (\arrive,1.2) -- (\arrive,.5);
		\node at (\arrive,1.4) {\small \color{TUMBlue}{greedy}};
		\draw[very thick, ->] (\depart,1.2) -- (\depart,.5);
		\node at (\depart,1.4) {\small \color{TUMBlue}{patient}};
	\end{tikzpicture}
	
    \caption{Timeline of an agent's sojourn in our stochastic matching market.
    The greedy and patient matching policies are highlighted in blue.
    \label{fig:procedure}}
\end{figure}

Following \citet{ALG20a}, we study two matching policies: the \emph{greedy algorithm} $\gdy$ matches agents as early as possible, i.e., it checks for compatible agents at an agent's arrival and matches immediately.
Hence, every pair of agents waiting in the market is incompatible.
On the other extreme, the \emph{patient algorithm} $\pat$ matches agents only when perishing is imminent.
It therefore needs information of the perishing time of an agent, at which it gets a last opportunity to match the agent.
This approach seems sensible when aiming at small loss because an agent only perishes if she is incompatible to all agents in the pool at the end of her sojourn.
In fact, the main result by \citet{ALG20a} says that $\pat$ has a significantly smaller loss than $\gdy$ (exponentially small vs inverse linear) when the \dtimename 
is exponentially distributed.\footnote{Their bounds on the loss are parameterized by a \dparaname defined as the arrival rate of compatible agents for a fixed agent in the market.
We follow this convention when describing our results.
Note that \citet{ALG20a} use the terminology ``density parameter.''
We prefer \dparaname because it measures the expected number of edges in the compatibility graph arriving in a time unit, while the edge \emph{density} gets small for large markets under a fixed \dparaname.
}
However, our paper shows that this result is sensitive to the distribution of the \dtimename, 
and that the greedy algorithm need not be worse.
If agents are guaranteed to stay in the market for a small time, then the loss of $\gdy$ is exponentially small as well.

\begin{theorem}[Informal version of \Cref{thm:epslowerbound}]\label{thm:lossinformal}
    If the \dtimename of agents exceeds any uniform lower bound, then the loss of $\gdy$ is exponentially small.\footnote{There, the exponent constant depends on the minimum sojourn guarantee.}
\end{theorem}

We believe that the condition of a minimum \dtimename is reasonable in various matching scenarios. 
Indeed, it feels unlikely for agents to leave a market immediately, i.e., before there was a chance to match them at all.
Additionally, in many centralized matching markets, binding participation even mechanically prevents immediate exit. 
Classic examples include the National Resident Matching Program for assigning doctors to hospitals in the US, where participants are bound by participation agreements and cannot accept or solicit outside offers once inside the mechanism \citep{RoPe99a}.
While in such settings, matching is often performed in coordinated batches, our results suggest that immediate matching can be performed continuously if contracts merely enforce a modest participation duration. 
Moreover, a minimum sojourn can be further encouraged via standard commitment devices, such as a non-refundable deposit or a cancellation fee.
A large empirical literature shows that even small stakes meaningfully raise commitment: for example, deposit contracts improve smoking cessation and weight-loss adherence \citep{GKZ10a,VJT+08a} and commitment savings increase bank account balances \citep{AKY06a}.\footnote{On the theoretical side, the effectiveness of imposing a toll on agents has also been observed in queuing theory \citep{Naor69a}.}

Our next result states that the loss guarantee from \Cref{thm:lossinformal} is tight in the following sense.

\begin{theorem}[Informal version of \Cref{thm:lossgenlb}]
    For any sojourn distribution with finite mean, the loss of any algorithm is at least exponential.
\end{theorem}

By contrast, we also carve out the reason for the higher, inverse linear loss derived by \citet{ALG20a} when the \dtimename 
is exponentially distributed.
If there is too much density of the sojourn distribution around~$0$, a higher loss is inevitable.

\begin{theorem}[Informal version of \Cref{thm:highloss}]
    If the \distname can be lower-bounded by any linear function in any neighborhood of~$0$, then the loss of $\gdy$ is at least inverse linear.
\end{theorem}

Note that this theorem includes the exponential distribution.

For our second objective concerning the waiting time of agents, we obtain an inverse linear waiting time of $\gdy$, regardless of the sojourn distribution.

\begin{theorem}[Informal version of \Cref{lem:agg_waiting}]
    For any sojourn distribution, the average waiting time caused by $\gdy$ is inverse linear.
\end{theorem}

Up to a constant, this is once again the best any algorithm can achieve (cf.~\Cref{prop:waitlb}).\footnote{Note that \Cref{prop:waitlb} needs two cosmetic assumptions that we discuss in \Cref{sec:waiting time}.}
To summarize, the greedy algorithm is essentially optimal for both objectives.

Our results are surprising for several reasons.
A well-known cause for achieving small loss in matching markets is to establish and maintain a \emph{thick} market, i.e., a market where a large proportion of agents is present.
This is the intuition why $\pat$ should yield small loss.
More generally, the idea of delaying matching decisions to thicken the market is a paradigm that is repeatedly promoted in the matching literature \citep[see, e.g.,][and our discussion in the related work]{ShSm01a,EKW16a,ALG20a,BLY20a,Lesh22a,LMT22a}.
Even papers that demonstrate a strong performance of instantaneous matching algorithms base their success on market thickness  \citep{Unve10a, ABJM19a, ANS22a}. 

However, our result is qualitatively different.
In our model, the greedy algorithm sustains a \emph{thin} market, i.e., the fraction of agents kept in the market is small.
More precisely, it is an inverse linear function, which is as thin as markets can be in our stochastic model (cf. \Cref{prop:thickness}).
While our results consider a large market limit with a growing arrival rate of agents, we avoid generating a thickness in matching opportunities by fixing the \dparaname, which measures the arrival rate of compatible agents. 
Consequently, the compatibility probability, and thus the edge density in the compatibility graph, tends to zero.
The resulting market therefore additionally remains \emph{sparse} even as it becomes larger.\footnote{Note that a fixed \dparaname and growing arrival rate is akin to the setup by \citet{ALG20a}.}
Our work demonstrates that agents' sojourn behavior can drive the performance of a matching policy as much as market thickness does.

Moreover, simultaneously achieving a small loss and a small waiting time may seem to be conflicting goals.
For example, \citet{MNP20a} even obtain a necessary trade-off between matching quality and waiting time.
However, they operate on a thick market, and their result does not apply in our model.
It is thus the good performance of the greedy algorithm in a thin market that facilitates a best-of-both-worlds result.

Next, we want to zoom in on the patient algorithm.
A natural question is to compare the performance of $\gdy$ to that of $\pat$ beyond the exponentially distributed \dtimename studied by \citet{ALG20a}.
First, we would expect that $\pat$ maintains a strong performance in terms of loss minimization.
We confirm this by proving an exponentially small loss for constant \dtimenamepl (cf. \Cref{lem:patient_lb}).
The proof of this result is based on a comparison with an urn model, which is fundamentally different to the proof for an exponential \dtimename by \citet{ALG20a}.
However, in terms of waiting time, $\pat$ is clearly inferior.
Intuitively, it has to cause long waiting times as matching a pair has to be initiated by some agent exhausting her maximum sojourn. 
So half of the matched agents (as well as all of the agents leaving the market unmatched) have to exhaust their maximum sojourn.
We leverage this idea and show that $\pat$ typically\footnote{We only need an assumption on the sojourn distribution that guarantees that the probability of instantaneous criticality, i.e., $\p(\sou = 0)$, is not too large.} needs a constant average waiting time, only dependent on the distribution of the \dtimename close to~$0$ (cf.~\Cref{prop:pattime}).

We complement our theoretical results by a series of simulations. 
As discussed before, our theoretical findings consider the limit case for growing arrival rates.
However, the error terms vanish at rapid rates. 
Additionally, our simulations confirm our results for a rather small arrival rate indicating their robustness. 
Moreover, we test further specifications of the greedy algorithm. In both of our algorithms, the decision to match a pair is done by selecting a compatible partner uniformly at random.
This simple heuristics 
is enough to prove our guarantees.
We perform simulations for the case of a tie-breaking in favor of partners that have the smallest remaining sojourn, finding that this still yields a small improvement.
Additionally, our simulations demonstrate a surprising phenomenon: under a constant \dtimename, the expected loss of the greedy mechanism and patient algorithms seem to be identical. 
In \Cref{sec:conclusion}, we shed light on theoretical reasons for this observation.
In this section, we also discuss extensions of our work to the case of heterogeneous agents.

On the technical side, we develop novel techniques to obtain our results. 
While most of the existing literature relies on a steady state analysis of the pool size \citep[in particular][]{ALG20a,AAGK17a,ANS22a}, this quantity is not Markovian for general sojourn distributions, and we do not make use of a stationary distribution inherent to our process.
Instead, we perform a detour by a careful direct analysis obtaining uniform bounds over time for the pool size which in turn lead to bounds for the loss.

\section{Related Literature}

In this section, we review related literature.

\subsection{Modeling Matching Markets}

The rich literature on matching markets comprises a large number of formal models. 
These differ in their assumptions on the nature of agents, 
compatibility prerequisites, 
procedural specifications, and regarding the measurement of the output quality.
First, agents may be homogeneous or heterogeneous. 
The latter is often due to having certain applications in mind. 
For instance, agents can be partitioned into two classes like workers and firms in labor markets, children and schools in school choice, or children and potential adoptive parents in adoption markets. 
Moreover, agents might assume various additional properties such as being hard or easy to match, which play for instance an important role in kidney exchange [\citealp{ABJM19a}, \citealp{ANS22a}, see also \citealp{LMT22a} for a more general trade market model with superior and inferior buyers and sellers]. 
Second, there might be various constraints on whether agents are compatible. 
For instance, it might be prohibited that agents of the same class match (which is very reasonable for the above examples of \emph{bipartite} instances), or the possible matches could be given endogenously by a compatibility relation (which might even be deduced from an exogenous factor like affiliation to some type).
Third, procedural specifications concern the chronological process of the model. Agents might arrive stochastically, in fixed time steps, or even according to an adversary. 
Then, they stay in the market for some time after which they depart according to some procedure. 
Some stylistic models even assume an indefinite sojourn of agents until they are matched \citep{AAGK17a}.
Finally, we have to measure the quality of matchings. The two most common approaches considered in the literature are measuring the quality of a matching simply by its cardinality or by maximizing a more complex predefined objective function, which is often defined specifically for the given model and the application in mind. 
The former is appropriate in models where sparse markets, like the one in our paper, evolve from a compatibility relationship of the agents.
Another reasonable setting for this assumption is kidney exchange where losing an agent is severe.
By contrast, other objective functions require a careful consideration of model parameters and may quickly lead to complex measures \citep[see, e.g.,][]{EKW16a,AACCGKMWW17,PSST22a}.
Objective functions usually include the quality of matches, but can also consider waiting time, i.e., loss suffered by spending time in the market.
Hence, they are a combination of the goals that we study as well.

\subsection{Dynamic Matching Markets}

We most closely follow the stochastic and dynamic market models proposed by \citet{AAGK17a} and \citet{ALG20a}. 
We extend their models by considering a variable sojourn distribution and get new insights in the circumstances under which greedy matching performs well.
\citet{AAGK17a} seek to model barter exchange (having in mind the important application of kidney exchange) and the model is essentially a directed version of the model by \citet{ALG20a}. 
New agents form unilateral relationships (or directed edges) with probability~$p$. 
However, rather than stochastically, they arrive in fixed time intervals and for an indefinite sojourn, and the matching technology even allows for matching a larger group of agents or chains originating from altruistic donors. 
Their goal is to minimize total waiting time and the main result is that allowing more matching possibilities significantly reduces waiting time. If the matching technology is restricted to pairs, then their model closely resembles our model, where compatibility is interpreted as having both directed edges present, amounting to a compatibility probability of $p^2$. 
The waiting time in their result matches the waiting time that we observe for the greedy algorithm.
Therefore, we complement the result by \citet{AAGK17a} for different procedural specifications. 
\citet{NaZh21a} consider a bipartite version of the model by \citet{ALG20a}, where the agents of the two partition classes arrive with two different Poisson rates.
They also consider the greedy and patient algorithm. 
Performing a Markov chain analysis, they confirm the superiority of the patient algorithm in a two-sided market, but they observe a similar performance of the two algorithms for a one-sided market where one class of the agents are merely objects.

Much of the literature on dynamic versions of matching markets originates from research in theoretical computer science on the online formation of bipartite matchings \citep{KVV90a}. 
There, the primary goal is to maximize the cardinality of the matching formed instantaneously, while procedural specifications are secondary and algorithms are measured by their performance against an adversarial arrival of agents. 
With the purpose of modeling ad allocation, the model was generalized by \citet{MSVV07a}. 
Stochastic versions of the model where a fixed set of agents arrives in a uniformly random order were studied in depth \citep{FMMM09a,MGS12a,KRTV13a,EFGT22a,BRS26a}.
An adversarial arrival of agents is also considered by \citet{EKW16a} who consider an objective function combining the quality of matched agents and their waiting time, and by \citet{ABD+19a} who consider a model with arrivals and departures after constant time.

\subsection{Market Thickness and Causes of Strong Matching Performance}

The paradigm of waiting to create further possibilities, i.e., the desire for market thickness, observed by \citet{ALG20a} is ubiquitous in the literature. 
This phenomenon often occurs in markets with heterogeneous types and unbounded sojourn of agents, and we identify two main circumstances for its appearance. 
The first encompasses markets with the concurrent objectives to maximize matching quality while minimizing waiting time, and the second concerns settings, where agents are offered matching opportunities which they can accept or deny, similar to the secretary problem \citep{Ferg89a,Bear06a}.

Regarding the former, \citet{BLY20a} model child adoption by a bipartite matching market with two agent types and identify thresholds to decide when to conduct low-quality matches. In a similar vein, \citet{LMT22a} consider a bipartite trade model with a superior and inferior type of buyers and sellers, and determine how many agents to keep in the market. Interestingly, they find that few stored agents lead to a good market performance, but this result has to be interpreted with respect to their market design which requires to match many agents early. \citet{BRS+22a} consider a bipartite market with exponentially distributed \dtimenamepl. 
Their focus is on the distribution of matching values and they identify certain threshold rules which achieve a desired market thickness.
In models with adversarial arrival of agents, \citet{EKW16a}, \citet{CIL+20a}, and \citet{PSST22a} consider the fine-tuned timing of matching decisions.

In the second type of markets, the challenge is to balance information gains through rejecting matches with the danger of turning down a promising match. 
There, \citet{Lesh22a} studies a queuing model applied to an assignment problem, and finds that it is beneficial to decline mismatches. This can have both positive aspects for the declining agent who might obtain a better object, and for other agents in the market who might get served earlier. 
Furthermore, \citet{ACT21a} consider a matching market with arrivals and departures in discrete time steps and investigate cutoff strategies for accepting a match. 
Loosely related, \citet{ShSm01a} consider a model for labor markets where heterogeneous agents incur a cost for searching a suitable match. 
They find that agents of high productivity should wait to improve their matching quality. 
Also, note that the matching problem in the above paper by \citet{PSST22a} is equivalent to an adversarial version of the secretary problem.

There is far less evidence for the antagonistic principle of taking instantaneous decisions, with the notable exception of the research on kidney exchange, originating from seminal contributions by \citet{RSU04a,RSU05a}. 
In this line of research, agents usually stay indefinitely in the market and the performance is measured by means of waiting time. \citet{Unve10a} anticipates the conclusion by \citet{AAGK17a}, namely the optimality of greedy matching. 
In his model, the role model for \citet{AAGK17a}, he considers arrivals by a Poisson process and an indefinite sojourn, but the model contains agents of homogeneous types and the waiting time objective is specifically designed for the given setting. Follow-up work by \citet{ABJM19a} offers a closer look at the case of easy-to-match and hard-to-match agents and their prioritization in greedy-type algorithms. These theoretical findings are in accordance with simulations based on real-world data \citep{ABB+18a}. \citet{ANS22a} show the optimality of greedy matching in a similar setting. All of this work has in common that the reason for the optimality of greedy matching is market \emph{thickness}.
By contrast, our work presents the optimality of an instantaneous algorithm measured by the \emph{loss} of agents in a sparse market that is \emph{thin} (for most of the time). 
Recently, \citet{KAG25a} consider a model with heterogeneous types whose compatibility is globally fixed. They show that, under certain conditions on the compatibility structure, greedy matching is optimal with respect to regret.

Finally, \citet{ADSW25a} consider a bipartite model with heterogeneous agents where the objective function is a tradeoff of matching value and waiting costs. Since the model is quite complex model, a Markovian analysis is not feasible. 
They circumvent this obstacle by finding optimal matching rates by solving an optimization problem, and identify an algorithm that mimics these rates. 
Interestingly, this paper is the only related paper we are aware of which investigates the influence of the \dtimename.

\section{Model}\label{sec:model}

In this section, we present our formal model.
First, we introduce our model for a dynamic matching market, and then discuss our objectives.

\subsection{Matching Markets}
Following the models by \citet{AAGK17a} and \citet{ALG20a}, we consider a continuous-time matching market in the time window $[0,\Tmax]$ for some \emph{maximum time} $\Tmax>0$. 
Agents arrive with Poisson rate $m$ and enter the market. 
We label the agents $v_1, v_2,\ldots$ in the order of their arrival. 
Given times $s,t\in [0,\Tmax]$ with $s<t$, we denote by $\arr_t$ the set of \emph{agents arriving at time $t$} and by $\arr_{\left[s,t\right]}$ and $\numbarr_{\left[s,t\right]}$ the set and number of agents arriving in the time interval $\left[s,t\right]$, respectively. 
We denote by $\pool_t$ the (random) set of agents present \emph{after} processing all events\footnote{These include random events like the arrival of an agent as well as algorithmic decisions.} up to and including time $t$ and refer to it as the \emph{pool} at time $t$.
We write $\pool_{t-}$ for the pool \emph{just before} processing events at $t$.

Hence, upon her arrival, an agent enters the pool unless she gets matched immediately.
We denote the size of $\pool_t$ by $\psize_t$, i.e., $\psize_t = |\pool_t|$. 
We assume that the market is initially empty, i.e., $\pool_0 = \emptyset$. 

While we will later consider a growing arrival rate, we want to maintain a market with similar matching opportunities.
This is governed by a \emph{\dparaname}~$d$.
Once an agent enters the pool, we decide on her \emph{compatibility} with agents already in the pool by independent Bernoulli random variables with parameter $p = \frac dm$. 
Hence, for each agent in the pool, compatible agents arrive at Poisson rate $d = p \cdot m$. 
Formally, we model compatibility as follows. 
We assume that agent~$\ag_i$ carries an independent $\text{Bernoulli}\left(p\right)^{\otimes \N}$ distributed random variable $U(\ag_i)$. 
There, $\otimes \N$ denotes the product space. Let the pool before the arrival of $\ag_i$ consist of the agents $\left\{\ag_{l_1}, \ldots, \ag_{l_K}\right\}$, where~$K$ is the size of the pool just before the arrival of $\ag_i$, and $1\leq l_1 < l_2 < \ldots < l_K < i$ are all integers. 
At arrival, $\ag_i$ forms an edge with $\ag_{l_j}$ if and only if $U_j(\ag_i)=1$, where $U_j(\ag_i)$ is the $j$th component of $U(\ag_i)$. 
So in particular, only the first $K$ many components of $U(\ag_i)$ are of interest.
Note that compatibility is symmetric and persistent, i.e., it is mutual and does not change once established.

While an agent~$\ag_i$ is in the pool, she can be matched with a compatible agent in the pool, and they leave the pool together. Otherwise, she leaves the pool when her random (maximum) \emph{\dtimename} $\sou_i$ lapses. 
We assume 
$\sou_1, \sou_2,\dots$
are i.i.d.\@ and independent of arrivals and compatibilities, with distribution 
$\mu$ on $[0,+\infty]$.
We call $\mu$ the \emph{\distname} and use
$\sou$ (without subscript) as a generic copy of the sojourn-time random variable.
We refer to the special case of $\sou$ where $\p(\sou = 1) = 1$, i.e., agents sojourn in the market for one time unit with probability~$1$, as a \emph{unit \dtimename}.
In this case, we have $\mu = \delta_1$, i.e., it is the degenerate probability measure with all mass at~$1$.

The last point, where an agent $\ag_i$ joining at time $t$ can be matched, is at time $t + \sou_i$, and we call an agent \emph{critical} at the moment when she reached her last time to be matched. 
The time period an agent spends in the pool is called her \emph{sojourn} and we say that an agent \emph{perishes} if she leaves the pool unmatched.

Agents are matched by certain \emph{matching algorithms} (or \emph{matching policies}), which decide when two compatible agents present in the current pool leave the market as a pair.
In principle, multiple pairs could be matched at the same time (as it is often done in batching algorithms, for instance for kidney exchange), and no agent can be part of more than one pair.
Note that algorithms have no other capability other than matching compatible agents.
For example, they cannot remove single unmatched agents to manipulate the waiting time, i.e., the time spent in the market.

Two important matching policies are the greedy algorithm and patient algorithm \citep{ALG20a}. 
On the one hand, the \emph{greedy algorithm} $\gdy$ is defined by the policy that every arriving agent is instantaneously matched to one of her compatible agents in the pool uniformly at random, or joins the pool if there exists no compatible agent. 
On the other hand, the \emph{patient algorithm} $\pat$ is defined to let agent~$\ag_i$ wait until she exhausts her maximum sojourn at time $t+\sou_i$.
Then $\pat$ checks $\pool_{(t+\sou_i)-}$, i.e., the pool at the beginning of this time step, and matches $\ag_i$ with a compatible agent uniformly at random, or lets her perish if no such agent exists.
Hence, before becoming critical, an agent only leaves the market when being matched to another critical agent.
Note that, because we have Poisson arrivals and independent \dtimenamepl, almost surely, for every time~$t$, it holds that at most one agent arrives or perishes at time~$t$.
Therefore, we disregard a case analysis for the possibility of multiple simultaneous exogenous events.

Some of our results hold for arbitrary algorithms.
In this case, they hold for any algorithm that has full knowledge of all current and future events, i.e., knowledge of future agent arrivals and all agents exact sojourn times.

\subsection{Objectives}

Our main objective is to minimize the loss of agents. 
Given an algorithm $\alg$ and an arrival rate $m$, we denote by $\alg(m,\Tmax)$ the set of agents matched until time $\Tmax$, i.e., $$\alg(m,\Tmax) := \{\ag_i\colon \ag_i \text{ matched until time } \Tmax\}\text.$$ 
Then, the \emph{loss of $\alg$ until time} $\Tmax$ is defined by
	$$\loss(m,\Tmax) := \frac{\E[|\arr_{[0,\Tmax]} \setminus (\alg(m,\Tmax) \cup \pool_\Tmax)|]}{m\Tmax}\text.$$

In this expression, the denominator is equal to the expected number of agents arriving in the time window $[0,\Tmax]$. 
Note that the loss of an algorithm also depends on the \dparaname~$d$ and the \distname $\sou$. 
As these will be fixed when we compute limits, we exclude them from our notation. 
However, we occasionally add the probability measure of the \distname as a subscript of probabilities or expectations to avoid ambiguity.

Following \citet{ALG20a}, we are interested in the limiting behavior for $m$ and $\Tmax$ tending to infinity. 
Therefore, we define the \emph{loss of $\alg$} as
$$\loss := \limsup_{m,\Tmax\to \infty}\loss(m,\Tmax)\text.$$

Note that considering the limiting behavior is mostly cosmetic. Error terms depending on $m$ and~$\Tmax$ decay quickly in our theoretical analysis, and our simulations show the robustness of our results for reasonably small arrival rates (see \Cref{fig:simu}).\footnote{For instance, the error terms in the analysis of the greedy algorithm in \Cref{lem:pool_greedy_upper_bound2}, \Cref{coro:greedy_pool_upper_bound}, and \Cref{lem:pool_greedy_upper_bound} are $m^{-9}$.}

Our next objective is the waiting time of agents, i.e., the time that agents have to spend in the pool.
For a given algorithm $\alg$, the 
\emph{average waiting time} until time $\Tmax$ is defined by
\begin{align*}
	\waitobj(m,\Tmax)
    := \frac 1{m\Tmax}\E\left[\sum_{i\ge 1}\int_{0}^{\Tmax} \mathbf{1}_{\ag_i \in Z_s} ds\right]\text.
\end{align*}

There, $\mathbf{1}$ denotes the indicator function.
Also, as in the definition of the loss, we suppress the dependence from the distribution of \dtimenamepl.
In other words, $\waitobj(m,\Tmax)$ is the sum over all agents of the expected time spent in the pool until time $\Tmax$ divided by $m\Tmax$, which is the expected number of agents arriving until time $\Tmax$.
Note that the integral implicitly depends on $m$ because the arrival rate influences the set of agents that has arrived until time $\Tmax$.

\section{Analysis of the Greedy Algorithm}\label{sec:gdyanalysis}

In this section, we discuss our results for the greedy algorithm.

\subsection{Loss Guarantee of the Greedy Algorithm}\label{sec:lossguarantee}
We start with our first objective, namely minimizing loss.
We will prove the following theorem.

\begin{restatable}{theorem}{epsgreedy}\label{thm:epslowerbound}
	Let $\eps>0$ and assume that the
    \dtimename $\sou$ satisfies $\p(\sou < \eps) = 0$.
    Then, for $d\geq 2$, we have
	\begin{equation*}
	\loss[\gdy] \leq e^{-\frac{\eps d}{2\log(2)}}\text.
	\end{equation*}
\end{restatable}

We remark that the theorem also holds 
for ex ante heterogeneous agents whose \dtimenamepl have different distributions but individually satisfy the respective assumptions.
In other words, we can replace the assumption $\p(\sou < \eps) = 0$ by $\p(\sou_i < \eps) = 0$ for all $i\ge 1$, assuming arbitrary distributions satisfying this assumption.\footnote{The same is true for all loss bounds in \Cref{sec:inevitableloss}.}

Towards a proof, we will first argue that the assumption in \Cref{thm:epslowerbound} of an arbitrary lower bound on the minimum \dtimename can be reduced to the special case of a bound of one time unit. 
This relies on a scaling invariance that connects the loss and \dtimename. 
To state the result, we need to express the loss with respect to the parameters that we usually suppress. 
Therefore, given a probability measure $\hat \mu$ and a \dparaname $\hat d$, let $\loss(\hat \mu,\hat d)$ denote the loss of $\alg$ with respect to a \dtimename with distribution $\hat \mu$ and \dparaname $\hat d$.

\begin{restatable}{lemma}{timechange}\label{lem:timechange}
	Let $\dep$ be a probability measure on $[0,+\infty]$, let $c>0$, and let $\nu$ be the probability measure defined by
    $\nu\left( \left[ a,b \right] \right) = \dep\left( \left[ ca,cb \right] \right)$ for all $a,b\in [0,+\infty]$.\footnote{While the specification of $\nu$ on intervals is sufficient, one can also define it by $\nu(B):=\dep(cB)$ for all Borel sets $B\subseteq[0,\infty]$, 
    i.e., $\nu$ is the pushforward of $\dep$ by $x\mapsto x/c$.}
    Then,
	\begin{align*}
	\loss[\gdy]\left(\nu,d \right) = \loss[\gdy]\left(\dep,cd \right)\text.
	\end{align*}
\end{restatable}

\begin{proof}
	The claim follows by a change of time. 
    If we consider the model with \dtimename distributed according to $\nu$ and speed up time by a factor of $c$, then agents arrive at Poisson rate $cm$ and the sped up \dtimename $\sou$ 
    satisfies
	\begin{align*}
	\p\left( \sou \leq a \right) = \nu\left( \left[ 0, a/c \right] \right) = \dep\left( \left[ 0, a \right] \right) \text.
	\end{align*}
	Thus, $\sou$ is distributed according to $\dep$. 
    However, two agents are still compatible with a probability of
	$\frac{d}{m} = \frac{cd}{cm}$.
	So, effectively we see a model with \dtimenamepl distributed according to $\dep$ and \dparaname $cd$.
\end{proof}

We will use this lemma to argue that it suffices to prove the following weaker variant of \Cref{thm:epslowerbound}.

\begin{restatable}{theorem}{gdyLossUB}\label{thm:greedy_loss_upper_bound}
	Assume that the
    \dtimename $\sou$ satisfies $\p(\sou < 1) = 0$.
    Then, for $d\geq 2$, it holds that
	\begin{equation}\label{eq:unitlossbound}
	\loss[\gdy] \leq e^{-\frac{d}{2\log(2)}}\text.
	\end{equation}
\end{restatable}

Indeed, to transition from a guaranteed sojourn time of~$1$ to a guaranteed sojourn time of $\eps$, we can scale the distribution of $\sou$ by $\eps$.
By \Cref{lem:timechange}, this leads to loss according to the \dparaname $\eps d$.
Hence, we obtain \Cref{thm:epslowerbound}.
In the remaining subsection, we discuss the proof of \Cref{thm:greedy_loss_upper_bound}.
We start with an outline of the proof in \Cref{sect:gdysketch}, and then go over key technical steps.
We defer the technical proofs of some steps to \Cref{app:greedyloss}.

\subsubsection{Outline of the Proof of \Cref{thm:greedy_loss_upper_bound}\label{sect:gdysketch}}

The main problem in the analysis of any other model than the one with exponentially distributed \dtimenamepl is the lack of Markovianity of the pool size. 
Knowing at which exact time points the agents in the current pool arrived, already reveals some information at which time points in the future we expect more or less agents to get critical and perish.
So the knowledge of the entire history $\left(\psize_t\right)_{t \leq t_0}$ typically reveals more information than the pool size $\psize_{t_0}$, which shows that $\left(\psize_t\right)_{t \geq 0}$ is not necessarily Markovian.
Hence, we cannot obtain bounds on the pool size merely by analyzing its stationary distribution.
Instead, we perform a direct analysis of the pool size which will be crucial to investigate the loss of the greedy algorithm.

Determining an upper bound for the pool size follows from the following idea: Assume that the pool has a size of $Cm$ for some constant $C>\frac{\log(2)}{d}$ and an agent arrives. 
On the one hand, there is a probability of
\begin{equation}\label{eq:simplebound}
\left( 1-\frac{d}{m} \right)^{Cm} \leq e^{-Cd} < \frac{1}{2} 
\end{equation}
that the new agent joins the pool causing the pool size to increase by $1$.\footnote{ 
The first inequality of \eqref{eq:simplebound} is a standard inequality. For completeness, it is proved in \Cref{lem:exp-estimate}.}
On the other hand, there is a probability of strictly more than $\frac{1}{2}$ that the agent matches and thus the pool size decreases by $1$. Consequently, for a sufficiently large pool size, the arrival of new agents decreases the pool size in average. Besides the arrival of new agents, existing agents in the pool may also perish and cause a further decrease of the pool size. 
This analysis shows that we do not expect the pool size to be much larger than $\frac{\log(2)}{d} m$ for most of the time. 
This key step is shown in  \Cref{coro:greedy_pool_upper_bound} and \Cref{lem:pool_greedy_upper_bound}. 
An important technique towards \Cref{coro:greedy_pool_upper_bound} is to compare an arbitrary \dtimename with the case when agents do not perish at all, i.e., with the case of an infinite \dtimename. For this, we will create a coupling of the respective markets.

Given a uniform bound on the pool size, we can also bound the loss under various assumptions on the \dtimename. 
For instance, assume unit \dtimenamepl. 
Further, suppose that a new agent, called $\ag$, enters the pool at time $t$ without being matched right away and that the pool size is bounded from above by $\frac{\log(2)}{d}m$ for all times in $\left[ t,t+1 \right]$.
This agent will stay for (at most) one time unit.
Moreover, we will show that each agent arriving during this time has a probability of 
$\Omega\left(\frac{d}{m}\right)$ to match with $\ag$, i.e., with this probability she is compatible with $\ag$ and $\ag$ is chosen as the uniformly random partner. 
Hence, the probability that $\ag$ perishes without being matched by another arriving agent crucially depends on the number of agents arriving during her sojourn.
For unit sojourn times, these are about $m$ agents.
Consequently, the probability of perishing is bounded by an expression of the form
$\left( 1- \Theta\left(\frac{d}{m}\right) \right)^{m} = e^{-\Theta(d)}$,
which explains the exponentially small loss under a constant \dtimename.

\subsubsection{Comparison of Sojourn Distributions}\label{subsec:greedy_compare}
In this section, we compare the pool sizes under the greedy algorithm for different distributions of \dtimenamepl and under different initial conditions.

In the first lemma, we use a coupling method to compare the evolution of the pool size under any \dtimename with the case of an infinite \dtimename.
Therefore, let $\sou_\infty$ be given by $\p(\sou_{\infty} = +\infty) = 1$, i.e., this is the random variable representing infinite sojourn.
Given a time~$t$, denote by $\pool_t^\infty$ and $\psize_t^\infty$ the pool and the pool size for \dtimenamepl according to $\sou_\infty$, respectively. 
Note that our coupling lemma leaves a lot of freedom for the starting pool.
For instance, it does not require empty pools in the beginning.

\begin{restatable}{lemma}{coupling}\label{lem:coupling1}
    Let $\sou$ be an arbitrarily distributed \dtimename.
    Consider the pools $\pool_t$ and $\pool_t^\infty$ at time $t$ for the greedy algorithm with 
    \dtimenamepl $\sou$ and $\sou_\infty$, respectively,
    and let $\pool_0$ and $\pool_0^\infty$ be two starting pools with $\psize_0\leq \psize_0^\infty$. 
    Then, there exists a coupling of $\pool_t$ and $\pool_t^\infty$ such that
	\begin{align}\label{eq:coupling1}
	\psize_t \leq \psize_t^\infty + 1
	\end{align}
	for all $t\geq 0$.
\end{restatable}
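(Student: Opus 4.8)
The plan is to build the coupling event by event while carrying along a structural invariant from which \eqref{eq:coupling1} is immediate. Run both markets on the \emph{same} Poisson arrival stream (so the agents $\ag_1,\ag_2,\dots$ arrive at the same times in both) and let the $\dep$-market perishing times be independent. I will maintain, at every event time~$t$ --- an arrival in either market, or a perishing in the $\dep$-market; almost surely these form a discrete, collision-free set --- an injective map $\psi_t\colon\pool_t\to\pool_t^\infty\cup\{\star\}$ sending at most one agent to a formal symbol $\star$; that agent, if it exists, I call the \emph{surplus agent}~$w_t$. Since $\psi_t$ restricted to $\pool_t\setminus\{w_t\}$ injects into $\pool_t^\infty$, this gives $\psize_t\le|\pool_t\setminus\{w_t\}|+1\le\psize_t^\infty+1$, which is \eqref{eq:coupling1}, so it suffices to maintain the invariant. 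Crucially, $\psi_t$ also governs how the next arrival's coins are coupled: if $\psi_t(\ag)=\ag'\in\pool_t^\infty$, then the arriving agent is compatible with $\ag$ in the $\dep$-market iff it is compatible with $\ag'$ in the $\dep_\infty$-market; the surplus agent $w_t$ is moreover synchronised in the same way with one designated \emph{spare} agent of $\pool_t^\infty\setminus\psi_t(\pool_t)$ whenever that set is non-empty; every other spare agent receives a fresh independent $\mathrm{Bernoulli}(p)$ coin. One checks that, marginally, this reproduces the genuine $\gdy$ law in each market. For the base case, $\psize_0\le\psize_0^\infty$ lets us choose an injection $\pool_0\hookrightarrow\pool_0^\infty$ with no surplus agent.

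For the inductive step I distinguish event types. A perishing in the $\dep$-market removes one agent from $\pool_t$; restricting $\psi_t$ preserves the invariant (and removes the surplus agent if that is the one perishing). For an arrival $\ag_i$, branch on whether $\ag_i$ matches immediately in the $\dep$-market and in the $\dep_\infty$-market. If $\ag_i$ joins both pools, extend $\psi_t$ by $\ag_i\mapsto\ag_i$. If $\ag_i$ matches in both, couple the two uniformly random partner choices through $\psi_t$, matching $\psi_t$-corresponding agents whenever possible; a short computation confirms each market still picks its partner uniformly from its own compatible set, and in each sub-case $\psi_t$ can be updated by deleting the two matched agents, relabelling at most one preimage (possibly transferring the surplus role to the preimage of a matched image), so that injectivity, the ``at most one surplus agent'' bound, and the coin-synchronisation all persist; the coin-coupling is what guarantees that an agent deleted from $\pool_t^\infty$ which was not a $\psi_t$-image cannot clash with a surviving image. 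The remaining combination ``$\ag_i$ matches in the $\dep$-market but not in $\dep_\infty$'' only arises when there is no spare agent at all, and is then easily seen to keep $\psize_t\le\psize_t^\infty$.

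The crux is the case ``$\ag_i$ joins $\pool_t$ but matches in $\pool_t^\infty$'', in which the $\dep$-pool grows by one while the $\dep_\infty$-pool shrinks by one --- exactly where the ``$+1$'' slack is consumed. Here $\ag_i$'s $\dep_\infty$-partner must be a spare agent: by the coin-coupling, being incompatible with everything in $\pool_t$ forces incompatibility with every $\psi_t$-image and with the surplus agent's synced spare (which exists, since otherwise $\ag_i$ would have no $\dep_\infty$-partner either). If there is currently no surplus agent, $\ag_i$ simply becomes the new one. If there already is a surplus agent $w_{t^-}$, we must not end up with two agents outside the image of $\psi$, and this is exactly what the synced-spare device delivers: since $\ag_i$ is incompatible with $w_{t^-}$, hence with $w_{t^-}$'s synced spare, yet did match in $\pool_t^\infty$, there must be \emph{at least two} spare agents there; one is consumed by the match, and the other absorbs $\ag_i$ into the injection, so $w_{t^-}$ remains the unique surplus agent (its spare re-designated if needed). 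I expect this last case, together with the routine-but-fiddly verification that injectivity and coin-synchronisation survive in every branch and the uniform-marginal check for the coupled partner choices, to be the only real work; conceptually it all comes down to the monotonicity that a larger pool only makes new arrivals match at least as readily.
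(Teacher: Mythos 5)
Your proposal is correct, but it takes a genuinely different and substantially heavier route than the paper. The paper exploits a structural feature of its model that you did not use: the compatibility coins $U_j(\ag_i)$ of an arriving agent are indexed by the \emph{position} $j$ in the current pool, not by the identity of the pool member. Running both markets on the same arrival stream and the same vectors $U(\ag_i)$, this makes the match/no-match decision at arrival a function of the pool \emph{size} alone. Hence if $\psize_t = \psize_t^\infty$, the arrival matches in one market iff it matches in the other (the two sizes move in lockstep), and if $\psize_t = \psize_t^\infty + 1$, not matching in the larger $\dep$-pool (all of $U_1,\dots,U_{\psize_t}$ are $0$) forces not matching in the smaller $\dep_\infty$-pool. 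The remaining case $\psize_t < \psize_t^\infty$ is safe because each arrival changes each size by at most one. That is the entire argument, with no need to couple the uniform partner choices, track which agent goes where, or handle a surplus token.

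Your coupling instead maintains an identity-level injection $\psi_t$ with a single surplus agent and a designated synced spare, couples partner choices through $\psi_t$, and carries out a case analysis culminating in the two-spare argument. Having checked the cases, I believe the invariant and the marginal-correctness claims do go through; in particular, the critical ``join $\pool_t$, match in $\pool_t^\infty$'' case works for exactly the reasons you give (no surplus forces $\psize_t^\infty \geq \psize_t + 1$ via a spare; a surplus plus the synced-spare device forces at least two spares). The trade-off is clear: your construction does not rely on the position-indexed coin convention and would survive if coins were attached to agent identities, so it is more portable across model variants; but in this model it re-derives by hand what the paper's indexing gives for free, and the ``short computation'' confirming uniform partner marginals, plus the bookkeeping of re-designating spares after each match, is real additional work that the paper's proof entirely sidesteps. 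If you were to write your version up, the uniform-marginal check for the coupled partner selections and the preservation of coin-synchronisation after each update of $\psi_t$ are the two points that would need to be spelled out carefully rather than asserted.
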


In particular, \Cref{lem:coupling1} implies that when running the greedy algorithm without perishing (infinite \dtimename) and one starts with two different initial pool sizes, then the process that started with the lower pool size can surpass the other one by at most~$1$. 

A key feature of the distribution~$\sou_\infty$ is that the pool size \emph{is} Markovian under this measure, as the evolution of the pool size depends only on the current size of the pool and the arriving agents.
The transition rates $(r_{k \to k+1})_{k\in \N_{\ge 0}}$ and $(r_{k \to k-1})_{k\in \N_{\ge 1}}$ are given by
\begin{align}\label{eq:transition-rates}
r_{k \to k+1} = m \left( 1- \frac{d}{m} \right)^{k} \ \text, \ r_{k\to k-1} = m \left[ 1- \left( 1- \frac{d}{m} \right)^{k} \right] \text.
\end{align}
Interestingly, we have that $r_{0 \to 1} = m$ and, for all $k\ge 1$, that
$r_{k \to k+1} + r_{k \to k-1} = m$.
Thus, if the pool size is larger than $0$, then the process waits an exponential time with expectation $1/m$ and then jumps to $k+1$ or $k-1$ with corresponding probabilities. 
At time~$0$, the process also waits for an exponential time with expectation $1/m$ and then jumps to $1$. 
In particular, the transition probabilities $\left(p(k,l)\right)_{k,l\in \N_{\ge 0}}$ of the underlying discrete-time Markov chain are simply given by
\begin{align}\label{eq:transition probs}
p(k , k+1) =  \left( 1- \frac{d}{m} \right)^{k} \text{ and } p(k , k-1) = 1- \left( 1- \frac{d}{m} \right)^{k}
\end{align}
for $k\in \N_{\ge 1}$, $p(0,1) = 1$, and all other probabilities are $0$. 
These transition probabilities give rise to an irreducible and reversible Markov chain (every Markov chain with state space $\N_{\ge 0}$ making jumps between nearest neighbors only is reversible). We claim that there exists a stationary distribution $\rho$ for this Markov chain. Indeed, one can construct a stationary measure with the aid of the measure~$\tilde \rho$ defined by $\tilde{\rho}(0)=1$ and $\tilde{\rho}(k+1)= \rho(k)\frac{p(k,k+1)}{p(k+1,k)} = \prod_{i=0}^k \frac{p(i,i+1)}{p(i+1,i)}$. 
This is {\sl not} a probability measure, yet. However, the measure~$\tilde{\rho}$ can be normalized as $\sum_{k=0}^\infty \prod_{i=0}^k \frac{p(i,i+1)}{p(i+1,i)} < \infty$, which follows because $p(i,i+1)<1/3$ for all large enough $i$.\footnote{In fact, choose $i_0$ with $(1-\frac{d}{m})^{i_0}\le \frac{1}{3}$. Then for $i\ge i_0$, it holds that
$\frac{p(i,i+1)}{p(i+1,i)}=\frac{(1-\frac{d}{m})^{i}}{1-(1-\frac{d}{m})^{i+1}}\le \frac{1/3}{2/3}=\frac12\text.$
Hence, $\sum_{k=0}^\infty\prod_{i=0}^k\frac{p(i,i+1)}{p(i+1,i)}<\infty$ by comparison with a geometric series.
}

The existence of a stationary distribution tells us that the pool size does not go off to infinity as time grows. The balance equation of the stationary distribution for $j\in \N_{\ge 0}$ reads
\begin{align}\label{eq:balance_equations}
\rho(j) p(j,j+1) = \rho(j+1) p(j+1,j)\text,
\end{align}
and we will give bounds on the stationary distribution in \Cref{lem:pool_greedy_upper_bound2} below.

\subsubsection{Pool Size Bounds for the Greedy Algorithm}\label{subsec:Greedy_pool size_bound}

In this section, we provide bounds for the pool size of the greedy algorithm. The key idea is to distinguish different ranges of the pool size constraint by carefully chosen constants. For this, we define the constants $C_1 = C_1(m)$, $C_2 = C_2(m)$, and $C_3 = C_3(m)$ by
\begin{align*}
C_1 = 1+ \frac{10}{\log(2)\log(m)} \text, \ 
C_2 = C_1+2 \frac{d \log(m)^2}{m \log(2)} \text{, and }
C_3 = C_1 + 4 \frac{d \log(m)^2}{m \log(2)} \text.
\end{align*}

First, we show how to control the stationary measure under infinite \dtimenamepl. 
As a consequence, by Lemma \ref{lem:coupling1}, we obtain bounds for the pool size under an arbitrary \distname. 

\begin{restatable}{lemma}{greedyUbPool}\label{lem:pool_greedy_upper_bound2}
	Let $\rho$ be the stationary measure of the pool size without perishing. Then,
	\begin{align*}
	\rho \left( \left(\frac{C_1 \log(2) m}{d} + \frac 32\log(m)^2 ,+\infty \right) \cap \N \right) \leq m^{-9}
	\end{align*}
	for $m$ large enough.
\end{restatable}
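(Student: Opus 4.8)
The plan is to read everything off from the balance equations \eqref{eq:balance_equations}. Iterating $\rho(j+1)p(j+1,j)=\rho(j)p(j,j+1)$ upward from an anchor index $j_0$ gives
\[
\rho(n)=\rho(j_0)\prod_{j=j_0}^{n-1}\frac{p(j,j+1)}{p(j+1,j)}\qquad(n\ge j_0),
\]
and since $\rho$ is a probability measure we have $\rho(j_0)\le 1$, so it suffices to control the product. I would take $j_0:=\lceil\log(2)m/d\rceil$, which is essentially the index where $p(j,j+1)=(1-d/m)^j$ drops below $\tfrac12$, and write $q_j:=p(j,j+1)/p(j+1,j)$. The first technical step is the pointwise bound $q_{j_0+s}\le e^{-sd/m}$ for all $s\ge 0$: for $j\ge j_0$ one has $(1-d/m)^j\le e^{-jd/m}\le e^{-\log 2}=\tfrac12$ (by \Cref{lem:exp-estimate}), hence $1-(1-d/m)^{j+1}\ge 1-(1-d/m)^j\ge\tfrac12$ and therefore $q_j\le 2(1-d/m)^j\le 2e^{-jd/m}$; substituting $j=j_0+s$ and using $2e^{-j_0d/m}\le 2e^{-\log 2}=1$ gives the claim. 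Telescoping then yields, for all $s\ge 0$,
\[
\rho(j_0+s)\ \le\ \prod_{s'=0}^{s-1}e^{-s'd/m}\ =\ \exp\!\left(-\frac{d}{2m}\,s(s-1)\right).
\]

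Next, write $n^\star:=\frac{C_1\log(2)m}{d}+\frac32\log(m)^2$ for the cutoff and $S:=n^\star-j_0$. Since $C_1\log 2=\log 2+10/\log(m)$ and $j_0\le\log(2)m/d+1$, we get $S\ge\frac{10m}{d\log(m)}+\frac32\log(m)^2-1$. Summing the previous display over the tail,
\[
\rho\big((n^\star,\infty)\cap\N\big)\ \le\ \sum_{s\ge S}\exp\!\left(-\frac{d}{2m}\,s(s-1)\right),
\]
and bounding $s(s-1)$ below by its tangent line at $s=S$ turns the right-hand side into a geometric series with first term $\exp(-\frac{d}{2m}S(S-1))$ and ratio $\exp(-\frac{d}{2m}(2S-1))$. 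Using $\frac{1}{1-e^{-t}}\le 1+\frac1t$ and $S\ge\frac{10m}{d\log(m)}$ this is at most $\big(1+\frac{2m}{dS}\big)\exp(-\frac{d}{2m}S(S-1))\le\log(m)\exp(-\frac{d}{2m}S(S-1))$ for $m$ large.

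It then remains to show the exponent is at least, say, $15\log(m)$. This is the step that exploits the precise form of $C_1$ together with the $\frac32\log(m)^2$ term in the cutoff: splitting $S\ge A+B$ with $A=\frac{10m}{d\log(m)}$ and $B=\log(m)^2$ (valid for $m$ large), AM--GM gives $S^2\ge 4AB=\frac{40m\log(m)}{d}$, so $\frac{d}{2m}S(S-1)=\big(1-\tfrac1S\big)\frac{d}{2m}S^2\ge(1-o(1))\cdot 20\log(m)\ge 15\log(m)$ once $m$ is large (note $S\ge\log(m)^2\to\infty$). Hence $\rho((n^\star,\infty)\cap\N)\le\log(m)\,m^{-15}\le m^{-9}$ for $m$ large, as claimed.

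The main obstacle is exactly this last balancing act. The cutoff $n^\star$ exceeds the ``critical'' pool size $\log(2)m/d$ only by the lower-order amount $\frac{10m}{d\log(m)}+\frac32\log(m)^2$, so there is no slack: the whole $m^{-9}$ decay must be extracted from this narrow window, and the quadratic-in-$s$ growth of the exponent is wide enough only because the two summands of $S$ multiply favourably under AM--GM — crucially, this remains true no matter how $d$ scales with $m$ (if either summand were dropped, the bound would fail for some regime of $d$). Some routine care is also needed for the integrality of $j_0$ and $n^\star$ and for the degenerate case where $d$ is close to $m$, but neither affects the argument.
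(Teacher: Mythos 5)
Your proof is correct, and it takes a genuinely different route from the paper's. The paper works directly above the threshold $\lceil C_1\log(2)m/d\rceil$, where the slack built into $C_1=1+\tfrac{10}{\log(2)\log m}$ makes the \emph{uniform} bound $p(j,j+1)/p(j+1,j)\le e^{-10/\log m}$ hold; it then sums the resulting plain geometric decay, with the cutoff $\tfrac32\log(m)^2$ supplying the $\log(m)^2$ steps needed to turn the rate into an $m^{-10}$ factor (the $\log(m)/5$ from the geometric sum eats one power). You instead anchor at the bare critical index $j_0=\lceil\log(2)m/d\rceil$, exploit that the ratio bound $q_{j_0+s}\le e^{-sd/m}$ \emph{improves} with distance $s$, telescope to a Gaussian-type tail $\exp(-\tfrac{d}{2m}s(s-1))$, and then extract the needed decay by applying AM--GM to the two additive pieces $\tfrac{10m}{d\log m}$ and $\tfrac32\log(m)^2$ of the cutoff distance $S$. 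Both arguments are valid and, in fact, both are uniform in $d$: the paper's rate $e^{-10/\log m}$ and the number of steps $\log(m)^2$ do not depend on $d$, so the paper's proof would also survive $d$ scaling with $m$. What your route buys is a cleaner conceptual reading of \emph{why} the specific shape of the cutoff $\frac{C_1\log(2)m}{d}+\frac32\log(m)^2$ is exactly right -- the $d$-dependent buffer and the $\log^2$ buffer contribute multiplicatively under the quadratic exponent -- and it also yields a slightly stronger intermediate bound of order $m^{-20}\log m$. What it costs is a bit more bookkeeping (the tangent-line estimate for the tail and the extra care with $\lceil\cdot\rceil$), whereas the paper's uniform-rate argument is shorter because it reduces immediately to a single geometric series.
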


Combining the bounds under infinite sojourns with our coupling result, we can bound the pool size of the greedy algorithm under an arbitrary \dtimename.

\begin{restatable}{corollary}{corGdyPoolUb}\label{coro:greedy_pool_upper_bound}
	Assume that the \dtimename $\sou$ is according to an arbitrary distribution $\mu$.
    Then, it holds for all times~$t$ and arrival rates~$m$ large enough that
	\begin{align*}
	\p_\dep \left( \psize_t > \frac{C_2 \log(2)}{d}m \right) \leq m^{-9}.
	\end{align*}
\end{restatable}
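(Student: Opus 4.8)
The plan is to combine the coupling result of \Cref{lem:coupling1} with the stationary tail bound of \Cref{lem:pool_greedy_upper_bound2}. Since the market starts empty, we have $\psize_0 = 0$, so we may couple the pool $\pool_t$ under the arbitrary departure measure $\dep$ with an auxiliary copy $\pool_t^\infty$ of the pool under $\dep_\infty$ whose \emph{initial} size $\psize_0^\infty$ is drawn from the stationary distribution $\rho$ of the infinite-departure chain. The hypothesis $\psize_0 \leq \psize_0^\infty$ of \Cref{lem:coupling1} then holds trivially, and we may apply the lemma (conditioning on the value of $\psize_0^\infty$, and noting that the evolution of $\psize_t^\infty$ depends only on the pool size, so we are free to populate $\pool_0^\infty$ with arbitrary dummy agents). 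Because the Markov chain governing $\psize_t^\infty$ has all jump rates bounded by $m$, it is non-explosive, and starting it from the stationary distribution $\rho$ gives $\psize_t^\infty \sim \rho$ for \emph{every} $t \geq 0$; this is what will make the final bound uniform in $t$.

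With this coupling in hand, the first step is to observe that for every time $t$ and every threshold $x$,
\begin{align*}
\p_\dep\!\left( \psize_t > x \right) \;\leq\; \p\!\left( \psize_t^\infty + 1 > x \right) \;=\; \rho\!\left( \{ k \in \N : k > x - 1 \} \right)\text,
\end{align*}
where the inequality is \eqref{eq:coupling1} and the equality uses $\psize_t^\infty \sim \rho$ together with the fact that $\psize_t^\infty$ is $\N$-valued. It then remains to check that the threshold $x = \frac{C_2 \log(2)}{d}\, m$ is large enough to feed into \Cref{lem:pool_greedy_upper_bound2}, i.e., that $x - 1 \geq \frac{C_1 \log(2)\, m}{d} + \frac{3}{2}\log(m)^2$ for $m$ large. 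Using $C_2 = C_1 + 2\frac{d \log(m)^2}{m \log(2)}$, we compute
\begin{align*}
x - 1 - \frac{C_1 \log(2)\, m}{d} \;=\; \frac{(C_2 - C_1)\log(2)}{d}\, m - 1 \;=\; 2\log(m)^2 - 1 \;\geq\; \frac{3}{2}\log(m)^2
\end{align*}
for all sufficiently large $m$. Hence $\{ k \in \N : k > x - 1 \} \subseteq \big( \frac{C_1 \log(2)\, m}{d} + \frac{3}{2}\log(m)^2, \infty \big) \cap \N$, and \Cref{lem:pool_greedy_upper_bound2} gives $\rho(\{k : k > x-1\}) \leq m^{-9}$, which is exactly the claimed bound.

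I do not expect a real obstacle here; this is essentially a bookkeeping corollary. The two points that require a little care are (i) the legitimacy of starting the auxiliary $\dep_\infty$-process from its stationary distribution rather than empty, which is justified by conditioning on $\psize_0^\infty$ and by the size-Markovianity of $\psize_t^\infty$ noted above, and (ii) tracking the additive $+1$ loss incurred by the coupling. Point (ii) is precisely why the constant $C_2$ was defined with the slack term $2\frac{d\log(m)^2}{m\log(2)}$: the resulting gap of $2\log(m)^2$ comfortably absorbs both the $+1$ from the coupling and the $\frac{3}{2}\log(m)^2$ correction in \Cref{lem:pool_greedy_upper_bound2}.
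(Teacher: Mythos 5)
Your proposal is correct and matches the paper's proof essentially step for step: start the $\dep$-pool empty and the $\dep_\infty$-pool at stationarity, invoke \Cref{lem:coupling1} to get $\psize_t \leq \psize_t^\infty + 1$, use stationarity to keep $\psize_t^\infty \sim \rho$ for all $t$, and then feed the threshold into \Cref{lem:pool_greedy_upper_bound2}. The only difference is that you spell out the threshold arithmetic ($x-1 \ge \frac{C_1\log(2)m}{d} + \frac{3}{2}\log(m)^2$) that the paper leaves implicit.
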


\begin{proof}
	Assume that $\pool_0$ is the empty pool and $\pool_0^\infty$ has size distributed according to the stationary distribution~$\rho$. Then $\psize_0 \leq \psize_0^\infty$, so we can couple the processes according to \Cref{lem:coupling1} such that $\psize_t \leq \psize_t^\infty+1$. The pool size $\psize_t^\infty$ is still distributed according to $\rho$ by stationarity, so we get that for large enough $m$
	\begin{align*}
	\p_\dep\left( \psize_t > \frac{C_2 \log(2)}{d}m \right) \leq 
	\p_{\dep_\infty}\left( \psize_t^\infty > \frac{C_1 \log(2)}{d}m + \frac 32 \log(m)^2 \right) \leq m^{-9}\text.
	\end{align*}
	
	There, the first inequality follows from \Cref{lem:coupling1} and the second inequality from \Cref{lem:pool_greedy_upper_bound2}.
\end{proof}

Our second key insight is that if the pool size is sufficiently small, then the probability of reaching a much larger pool size within one time unit is very low.

\begin{restatable}{lemma}{poolgdyub}\label{lem:pool_greedy_upper_bound}
	Assume that the \dtimename $\sou$ is according to an arbitrary distribution $\mu$.
	Further, assume that $\pool$ is a pool of size at most $C_2 \frac{\log(2)}{d} m$. 
    Then, for large enough $m$, it holds that
	\begin{align*}
		& \p_{\dep}\left( \psize_t \leq C_3 \frac{\log(2)m}{d} \text{ for all } t \leq 1 \  \big| \ \pool_0=\pool  \right) \geq 1 - m^{-9}\text.
	\end{align*}
\end{restatable}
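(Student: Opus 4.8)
The plan is to reduce the statement to the reference process under $\dep_\infty$, which \emph{is} Markovian, and then run an excursion argument for a nearest-neighbour random walk. First I would apply \Cref{lem:coupling1} with both starting pools equal to $\pool$, so that $\psize_0 = \psize_0^\infty = |\pool|$, obtaining a coupling with $\psize_t \le \psize_t^\infty + 1$ for all $t \ge 0$. Since $|\pool| \le C_2 \frac{\log(2)}{d} m$ is an integer, it then suffices to show that, with probability at least $1-m^{-9}$, the birth--death chain $(\psize_t^\infty)_{t \ge 0}$ started from $\psize_0^\infty = |\pool| \le \ell := \lfloor C_2 \frac{\log(2)}{d} m \rfloor$ never reaches the level $\ell' := \lfloor C_3 \frac{\log(2)}{d} m \rfloor$ during $[0,1]$. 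Because under $\dep_\infty$ every holding time is $\mathrm{Exp}(m)$, the number $R$ of transitions in $[0,1]$ is $\mathrm{Poisson}(m)$, so a Chernoff bound yields $\p(R > 2m) \le (e/4)^m$, negligible next to $m^{-9}$; from here on I condition on $R \le 2m$ and work with the embedded discrete chain with transition probabilities \eqref{eq:transition probs}.

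The quantitative input is that, by \Cref{lem:exp-estimate}, in any state $k \ge C_1 \frac{\log(2)}{d} m$ the chain moves up with probability $p(k,k+1) = (1-d/m)^k \le 2^{-C_1} = \frac{1}{2} e^{-10/\log(m)} < \frac{1}{2}$; that is, above this level it is a nearest-neighbour walk whose up-probability is uniformly bounded by $q := 2^{-C_1}$. Note that $\ell' - \ell \ge 2\log(m)^2 - 1 =: N$ and that both $\ell$ and $\ell'$ exceed $C_1 \frac{\log(2)}{d} m$ for $m$ large. If the chain reaches $\ell'$ within its first $2m$ steps, then, being a nearest-neighbour walk started at or below $\ell < \ell'$, it must visit $\ell$ before that hitting time; taking the \emph{last} such visit, the ensuing excursion climbs from $\ell$ to $\ell'$ without ever returning to $\ell$, hence stays in $\{\ell+1,\dots,\ell'\}$ and is down-biased with up-probability $\le q$ throughout. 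Comparing with the walk whose up-probability equals $q$ identically (which, by monotonicity of the gambler's-ruin probability in the step bias, hits the upper barrier with at least the same chance), the probability that a single such excursion reaches $\ell'$ is at most $\frac{\rho-1}{\rho^{N}-1}$ with $\rho = (1-q)/q = 2 e^{10/\log(m)} - 1 \ge 1 + \frac{20}{\log(m)}$.

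Since $\log \rho \ge \frac{10}{\log(m)}$ for $m$ large, we get $\rho^{N} \ge \exp\!\big( \frac{10}{\log(m)} (2\log(m)^2 - 1) \big) \ge m^{19}$, so one excursion overshoots with probability at most $m^{-18}$. There are at most $2m$ visits to $\ell$ among the first $2m$ steps, so a union bound over the corresponding excursions, justified by the strong Markov property, shows that $(\psize_t^\infty)$ reaches $\ell'$ during $[0,1]$ with probability at most $2m\cdot m^{-18} + (e/4)^m \le m^{-9}$ for $m$ large. Unwinding the coupling $\psize_t \le \psize_t^\infty + 1$ then gives $\p_\dep(\psize_t \le C_3 \frac{\log(2)}{d} m \text{ for all } t \le 1 \mid \pool_0 = \pool) \ge 1 - m^{-9}$, as claimed.

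The step requiring the most care is the excursion bound: within a single time unit there are $\Theta(m)$ potential excursions away from the base level $\ell$, so each must fail to climb by $2\log(m)^2$ with probability comfortably below $1/m$. This is exactly what the design of the constants buys — the excess from $C_2$ to $C_3$ amounts to a height gap of $2\log(m)^2$, and the tiny bias $2^{-C_1} = \frac{1}{2} e^{-10/\log(m)}$ makes $\rho^{2\log(m)^2}$ a large polynomial in $m$ (essentially $m^{20}$), which dominates the $2m$ union bound. Some further bookkeeping is needed for the integrality of the thresholds and for the harmless case where $|\pool|$ lies below $C_1 \frac{\log(2)}{d} m$, in which the chain is initially up-biased but still must upcross $\ell$ before reaching $\ell'$.
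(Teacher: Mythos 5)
Your argument is correct, and it takes a genuinely different route from the paper. The paper's proof works \emph{directly} with the $\dep$-process: it observes that perishings can only decrease the pool, so whenever a new agent arrives at a pool of size at least $M=\lceil C_2\frac{\log(2)}{d}m\rceil$, the pool size behaves like a one-step random walk with down-drift, and it then bounds the excursion probability by invoking \Cref{lem:random walk with drift}, obtaining $m^{-10}$ per excursion, and finally sums $k\,m^{-10}\,\p(\numbarr_{[0,1]}=k)$ over $k$ to reach $m^{-9}$. You instead route the whole argument through the coupling of \Cref{lem:coupling1}, reducing to the Markovian $\dep_\infty$ chain where the random-walk comparison is exact rather than implicit, then condition on the Poisson count of transitions being at most $2m$ and do a cruder union bound over excursions with a sharper gambler's-ruin estimate (roughly $m^{-18}$ per excursion because you use the full $2\log(m)^2$ gap, whereas the paper conservatively uses $\log(m)^2$). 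The trade-off is clear: your detour via the coupling avoids the slightly informal ``perishings only help so we may compare with a drifted walk'' step that the paper leaves implicit, at the cost of introducing the separate control on the number of transitions and a $+1$ slack from \eqref{eq:coupling1} that you must absorb into the integer gap between $\ell$ and $\ell'$ — which you correctly acknowledge as bookkeeping. Both strategies deliver $m^{-9}$ comfortably.
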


\subsubsection{Loss Bound for the Greedy Algorithm}\label{subsec:loss_bound_greedy}

Before we leverage the bounds on the pool size to obtain bounds on the loss, we provide a useful lemma that lets us express the loss as an integral over the probability of perishing. 
This lemma holds for arbitrary matching algorithms and we will apply it again for the analysis of the patient algorithm. 
It follows from an application of Mecke's equation \citep[see, e.g.,][Theorem~4.1]{LaPe17a}. 

\begin{restatable}{lemma}{auxloss}\label{lem:aux-loss}
	Let $\alg$ be a matching algorithm and assume that the \dtimename $\sou$ is distributed according to an arbitrary distribution $\dep$.
    Then, 
	\begin{align*}
		\loss(m,\Tmax) = \frac 1\Tmax \int_{0}^{\Tmax} \p_{\dep}\left(\text{an agent arriving at time $t$ perishes before time $\Tmax$} \right) dt \text.
	\end{align*}
    
	Further, if the sojourn time is almost surely finite, i.e., if $\mu \left( \{+\infty\} \right) = 0$, then
	\begin{align*}
	\loss = \limsup_{m,\Tmax\to \infty} \frac 1{\Tmax} \int_{0}^{\Tmax} \p_{\dep}\left(\text{an agent arriving at time $t$ perishes} \right) dt \text.
	\end{align*}
\end{restatable}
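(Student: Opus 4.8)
The plan is to express the expected number of perishing agents as a sum over all arriving agents, then convert this sum into an integral using the Mecke equation for Poisson point processes. First I would observe that an agent $v_i$ is counted in $\arr_{[0,T]} - \text{ALG}(m,T) - \pool_T$ precisely when $v_i$ arrives in $[0,T]$, is not matched by time $T$, and is not in the pool at time $T$ --- equivalently, when $v_i$ arrives in $[0,T]$ and perishes before time $T$. Hence the numerator of $\loss(m,T)$ equals $\E\bigl[\sum_i \mathbf{1}\{v_i \text{ arrives in } [0,T] \text{ and perishes before } T\}\bigr]$. Since the arrival times form a Poisson point process on $[0,\infty)$ with intensity $m$, and the event ``the agent arriving at this point perishes before $T$'' is a measurable function of that point together with the independent marks (the compatibility vectors $U(v_i)$ and the maximum sojourn times $\sou_i$) attached to all points, the Mecke equation applies.

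Concretely, I would apply Mecke's formula in the form $\E\bigl[\sum_{x \in \eta} f(x,\eta)\bigr] = \int \E[f(x,\eta + \delta_x)]\, m\, dx$, where $\eta$ is the arrival process and $f(x,\eta)$ is the indicator that the agent arriving at time $x$ perishes before $T$ (with $f \equiv 0$ for $x > T$). The slightly delicate point is that whether a given agent perishes depends on the \emph{entire} realization of the process --- earlier agents affect the pool she joins, and later agents are potential matches for her --- so $f$ genuinely depends on all of $\eta$, not just on $x$. But this is exactly the setting Mecke's equation is designed for: adding the extra point $\delta_x$ and its independent mark reconstructs the law of the process conditioned on an arrival at time $x$, and $\E[f(x,\eta+\delta_x)]$ is by definition $\p_{\dep}(\text{an agent arriving at time } t = x \text{ perishes before } T)$. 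This yields
\begin{align*}
\E[\arr_{[0,T]} - \text{ALG}(m,T) - \pool_T] = \int_0^T \p_{\dep}\left(\text{an agent arriving at time $t$ perishes before time $T$}\right) m\, dt,
\end{align*}
and dividing by $mT$ gives the first claimed identity.

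For the second identity, I would take $\limsup_{m,T\to\infty}$ of $\frac{1}{T}\int_0^T \p_{\dep}(\text{agent arriving at } t \text{ perishes before } T)\, dt$. The integrand is monotone in $T$ (perishing before a later deadline is implied by perishing before an earlier one, so the probability is non-decreasing in $T$ and bounded above by $\p_{\dep}(\text{agent arriving at } t \text{ perishes})$, the probability of ever perishing). One then argues that replacing ``perishes before time $T$'' by ``perishes'' changes the average by a vanishing amount as $T\to\infty$: the discrepancy at time $t$ is $\p_{\dep}(\text{perishes, but only after time } T)$, which is nonzero only for $t$ within the support-tail of $\dep$ below $T$, i.e.\ for $t$ such that $t + \sou_i$ can exceed $T$ with non-negligible probability; a dominated-convergence / tail argument bounds the contribution of these $t$ to the normalized integral by something tending to $0$. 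The main obstacle, I expect, is making this last swap fully rigorous for a completely arbitrary departure measure $\dep$ on $[0,+\infty]$ --- in particular when $\dep$ has an atom at $+\infty$ (agents that never perish on their own), so that ``perishes'' must be interpreted as ``is eventually matched or not'' carefully; the cleanest route is to note that the normalized boundary contribution is at most $\frac{1}{T}\int_0^T \p_{\dep}(t + \sou > T)\, dt$-type term plus the probability mass of never-perishing agents, both of which are handled by the $\limsup$.
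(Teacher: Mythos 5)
Your proof of the first identity is correct and follows the paper's route: rewrite the numerator of $\loss(m,T)$ as the expected number of agents who arrive in $[0,T]$ and perish by time $T$, then apply Mecke's equation to the Poisson arrival process (with all the independent per-agent randomness—departure times, compatibility vectors, tie-breaking—treated as marks or, as the paper does, conditioned out first via the tower property). The paper is a bit more explicit in separating out the mark variable $Z$ and conditioning on $\sigma(Z)$ before and after applying Mecke, but this is the same argument.

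For the second identity, your strategy (bound the per-$t$ discrepancy $\p_\dep(\text{perishes, but only after }T)$ and show its normalized integral vanishes) is also the paper's strategy, but your proposed "cleanest route" contains a genuine confusion around the atom at $+\infty$. You bound the discrepancy by something like $\frac 1T\int_0^T \p_\dep(t+\sou>T)\,dt$ \emph{plus} "the probability mass of never-perishing agents," and then claim both are "handled by the $\limsup$." This does not work: if $\dep(\{+\infty\})=c>0$, then $\p_\dep(t+\sou>T)\ge c$ for every $t$ and the $\limsup$ of a constant is that constant, so nothing goes to zero. The point you are missing is that this extra term should not be there at all: an agent with $\sou_i=+\infty$ \emph{never} perishes (she either gets matched or stays in the pool forever), so such agents contribute zero to the discrepancy $\p_\dep(\text{perishes, but only after }T)$. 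The correct bound is therefore $\dep\left(\left[T-t,\infty\right)\right)$—half-open, \emph{excluding} $+\infty$—which does vanish as $T\to\infty$ for each fixed ratio $t/T<1$. The paper makes this precise by substituting $t=sT$, bounding the pointwise discrepancy by $\dep\left(\left[(1-s)T,\infty\right)\right)\le 1$, applying bounded convergence on $s\in[0,1]$, and observing that the bound is independent of $m$, so the convergence is uniform in $m$ as required when taking $\limsup_{m,T\to\infty}$. That last uniformity-in-$m$ step is worth stating explicitly, which your write-up also omits.
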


Finally, we can apply our insights to bound the loss of the greedy algorithm and prove \Cref{thm:greedy_loss_upper_bound}.

\gdyLossUB*

\begin{proof}
	By \Cref{lem:aux-loss}, we have 
	\begin{align}\label{eq:integral greedy upper bound}
	\loss[\gdy] = \limsup_{m,\Tmax\to \infty} \frac 1 {\Tmax}
	\int_0^{\Tmax} \p_\dep \left( \text{an agent arriving at time $t$ perishes} \right) dt\text.
	\end{align}
    
	We now define a certain ``good'' event $\mathcal{G}_t$ on which we will show that the probability that an agent perishes is relatively low. 
    Let
	\begin{align*}
	\mathcal{G}_t := 
	&
	\left\{ \psize_t \leq  \frac{C_2 \log(2)}{d}m \right\}
	\cap \left\{ \psize_s \leq  \frac{C_3 \log(2)}{d}m \forall s\in \left[t,t+1\right] \right\}
	\cap \left\{ \numbarr_{\left[t,t+1\right]} \geq m- 30\log(m)m^{1/2}\right\} \text.
	\end{align*}
    
	We first want to bound the probability of the complement~$\mathcal{G}_t^c$. By a union bound, we get
	\begin{align*}
	\p_\dep \left( \mathcal{G}_t^c \right) \leq &
	\p_\dep \left( \psize_t >  \frac{C_2 \log(2)}{d}m  \right) +\p_\dep \left( \numbarr_{\left[t,t+1\right]} < m- 30\log(m)m^{1/2} \right) \\
	&
	+\p_\dep \left( \left\{ \psize_s \leq  \frac{C_3 \log(2)}{d}m \text{ for all } s\in \left[t,t+1\right] \right\}^c \cap \left\{ \psize_t \leq  \frac{C_2 \log(2)}{d}m \right\}  \right) \ .
	\end{align*}
    
	By \Cref{coro:greedy_pool_upper_bound}, the probability of the first summand is bounded by $m^{-9}$. 
    For the second summand, note that $\numbarr_{\left[t,t+1\right]}$ is a Poisson random variable with parameter $m$ and thus we can use a Chernoff bound, see \Cref{lem:Chernoff}, obtaining
	\begin{align}
		\p_\dep \left( \numbarr_{\left[t,t+1\right]} < m- 30\log(m)m^{1/2} \right) 
		&
		= \p_\dep \left( \numbarr_{\left[t,t+1\right]} < m \left( 1- \frac{30\log(m)}{m^{1/2}} \right) \right)\notag\\
		&
		\leq e^{- \frac{m}{3} \left(\frac{30\log(m)}{m^{1/2}}\right)^2} \leq m^{-10}\text. \label{eq:poolovertimebound}
	\end{align}
    
	There, the last two inequalities hold for $m$ large enough. 
    The last summand is bounded by
	\begin{equation*}
	\p_\dep \left( \psize_s >  \frac{C_3 \log(2)}{d}m \text{ for some } s\in \left[t,t+1\right]  \ \big| \  \psize_t \leq  \frac{C_2 \log(2)}{d}m   \right) \leq m^{-9}
	\end{equation*}
	by \Cref{lem:pool_greedy_upper_bound}. 
    Together, this shows that $\p_\dep \left( \mathcal{G}_t^c \right) \leq m^{-8}$ for large enough $m$.
	
	Now assume that the event $\mathcal{G}_t$ holds and an agent $\ag$ arrives at time $t$. 
	We would like to derive the probability that $\ag$ does neither get matched directly nor between time $t$ and $t+1$. 
    Whenever a new agent $\agtwo$ arrives at time $s\in(t,t+1)$ and at a pool of size $\psize_s$, then the probability that $\agtwo$ matches with $\ag$ equals
	\begin{align*}
	\frac{1}{\psize_s} \left[ 1-\left(1-\frac{d}{m}\right)^{\psize_s} \right]\text.
	\end{align*}
    
	Indeed, there is a probability of $ 1-\left(1- \frac{d}{m} \right)^{\psize_s}$ that the agent $a$ matches at arrival, and all vertices in the pool have the same probability of getting matched. 
    The function $x \mapsto \frac{1}{x} \left[ 1-\left(1-\frac{d}{m}\right)^{x} \right]$ is decreasing for $x>0$ which can be seen as follows. Write $b=1-\frac{d}{m}$. The derivative of this function is given by $\frac{b^x-xb^x \ln(b)-1}{x^2}$. As $1-x\ln(b) \leq e^{-x \ln(b)} = b^{-x}$
	we already have that
	\begin{align*}
		\frac{b^x-xb^x \ln(b)-1}{x^2} = \frac{b^{x}}{x^2} \left( 1-x \ln(b) - b^{-x}  \right) \leq  0\text,
	\end{align*}
	which shows that the function is decreasing. 
    
    Using our upper bound on the pool size on the event $\mathcal{G}_t$, we obtain
	\begin{align*}
		\frac{1}{\psize_s} \left[ 1-\left(1-\frac{d}{m}\right)^{\psize_s} \right] \geq \frac{d}{C_3 \ln(2) m} \left[ 1-\left(1-\frac{d}{m}\right)^{ \frac{C_3 \ln(2) m}{d} } \right] 
	\end{align*}
	for all $s\in \left[t,t+1\right]$. 
    
    As $C_3>1$ we have that
	\begin{align*}
		\left(1-\frac{d}{m}\right)^{ \frac{C_3 \ln(2) m}{d} } \leq \left(1-\frac{d}{m}\right)^{ \frac{ \ln(2) m}{d} } \overset{\text{\Cref{lem:exp-estimate}}}{\leq} \frac{1}{2}\text.
	\end{align*}
	
    Thus, the probability that agent $\agtwo$ matches with $\ag$ is at least
	\begin{align*}
		\frac{1}{\psize_s} \left[ 1-\left(1-\frac{d}{m}\right)^{\psize_s} \right] \geq \frac{d}{C_3 \ln(2) m} \left[ 1-\left(1-\frac{d}{m}\right)^{ \frac{C_3 \ln(2) m}{d} } \right] \geq \frac{d}{2 C_3 \ln(2) m}\text.
	\end{align*}

    Finally, consider the event $\mathcal{A}_t := \left\{ \text{an agent arriving at time $t$ perishes}  \right\}$. 
	Using $\numbarr_{\left[t,t+1\right]}\geq m-30\log(m)m^{1/2}$ and the assumption on the support of $\dep$, implying that an agent can only perish after a sojourn of at least one time unit, we can further bound the probability that the agent $a$ perishes from above by
	\begin{align}
	\p(\mathcal{A}_t \mid \mathcal{G}_t)\le \left( 1-\frac{d}{ 2 C_3 \log(2) m} \right)^{m-30\log(m)m^{1/2}} \underset{m\to \infty}{\longrightarrow} e^{-\frac{d}{2 \log(2)}}\text.\label{eq:singleaglossgood}
	\end{align}

    To extend this inequality without the restriction of $\mathcal{G}_t$, we compute 
	\begin{align}
	\p_\dep & \left( \mathcal{A}_t \right) = 
	\p_\dep \left( \mathcal{A}_t \mid \mathcal{G}_t \right) \p_\dep \left( \mathcal{G}_t \right)
	+ 
	\p_\dep \left( \mathcal{A}_t \mid \mathcal{G}_t^c \right) \p_\dep \left( \mathcal{G}_t^c \right)\notag\\
	&
	\leq \p_\dep \left( \mathcal{A}_t \mid \mathcal{G}_t \right) + \p_\dep \left( \mathcal{G}_t^c \right) 
	\leq 
	\p_\dep \left( \mathcal{A}_t \mid \mathcal{G}_t \right) + m^{-8}\text.\label{eq:singleagloss}
	\end{align}

    By \eqref{eq:singleaglossgood}, this can be bounded by $e^{-\frac{d}{2 \log(2)}}$ in the limit for $m$ tending to $\infty$. 
	Inserting this into \eqref{eq:integral greedy upper bound} finishes the proof.
\end{proof}

\subsection{Inevitable Loss}\label{sec:inevitableloss}

In the last section, we have seen that the loss of $\gdy$ is exponentially small if agents stay in the market for a guaranteed time period.
We will complement this by carving out reasons for a high loss of the greedy algorithm.
Our next lemma is a general statement about lower bounds for the loss.

\begin{restatable}{lemma}{greedylb}\label{lem:greedy_lb}
	Let $\eps,\delta >0$ and assume that the \dtimename $\sou$ has a probability distribution $\dep$ with $\dep\left( \left[ 0,\eps \right] \right) \geq \delta$. Then,
	\begin{align*}
	\loss[\gdy] \geq \frac{\delta}{2} e^{-\eps d}\text.
	\end{align*}
\end{restatable}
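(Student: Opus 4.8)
The plan is to show that a constant fraction of agents — namely those whose departure time falls in $[0,\eps]$, which happens with probability at least $\delta$ — have a substantial chance of perishing, and to lower bound that chance uniformly over time. By \Cref{lem:aux-loss}, it suffices to exhibit a time-uniform lower bound on $\p_\dep(\text{an agent arriving at time $t$ perishes})$. First I would condition on the event that the agent $a$ arriving at time $t$ draws a maximum sojourn time $\sou \leq \eps$; this has probability at least $\delta$ by assumption, and it is independent of everything else in the market.

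Conditioned on this event, agent $a$ can only be saved if she matches at arrival, or if some agent arriving in the window $(t, t+\eps]$ matches with her. I would bound each of these rescue probabilities from above. For the match-at-arrival part, note that the number of agents in the pool at time $t$ is some random variable $\psize_t$; given $\psize_t = k$, the probability $a$ matches immediately is $1 - (1 - d/m)^k$. The trouble is that $\psize_t$ is not bounded above almost surely, so I cannot get a clean lower bound on the survival-at-arrival probability without more care. The cleanest route here is to restrict attention to an event where $\psize_t$ is \emph{small} — for instance using \Cref{coro:greedy_pool_upper_bound}, which gives $\psize_t \le \tfrac{C_2 \log(2)}{d} m$ with probability $\ge 1 - m^{-9}$ — but that gives an \emph{upper} bound on $\psize_t$, which is the wrong direction for bounding the match-at-arrival probability. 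Instead I would argue directly: on the event $\{\psize_t \text{ small}\}$ the probability that $a$ matches at arrival or in the next $\eps$ units is at most $\bigl(1 - (1-d/m)^{\psize_t}\bigr)$ plus a union bound over the (Poisson, mean $\approx m\eps$) arrivals in $(t,t+\eps]$, each of which matches $a$ with probability at most $\tfrac{d}{m}$ (each arriving agent is compatible with $a$ with probability $d/m = p$, and even if compatible need not select $a$). So the conditional perishing probability is at least $(1-d/m)^{\psize_t} \cdot (1 - d/m)^{\numbarr_{(t,t+\eps]}} \geq (1-d/m)^{\psize_t + \numbarr_{(t,t+\eps]}}$, and I would take expectations, controlling $\psize_t + \numbarr_{(t,t+\eps]}$ in the exponent. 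Taking $\eps$ arbitrarily and using that $(1-d/m)^m \to e^{-d}$, the $\numbarr$ contribution alone yields a factor $e^{-\eps d}$ in the limit; the pool-size contribution is where the factor $\tfrac12$ comes in — e.g. by Markov's inequality $\psize_t$ is $O(m/d)$ with probability bounded away from $0$ uniformly in $t$, or more cleanly by restricting to a sub-event of probability $\ge \tfrac12$ on which $(1-d/m)^{\psize_t} \ge $ some constant that can be absorbed.

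The main obstacle is obtaining the uniform-in-$t$ control of $\psize_t$ from below in the exponent, i.e. ruling out that the pool is so large that $a$ is almost surely matched at arrival. I expect the paper handles this via \Cref{coro:greedy_pool_upper_bound}: on the high-probability event $\{\psize_t \le \tfrac{C_2\log(2)}{d}m\}$ one has $(1-d/m)^{\psize_t} \ge (1-d/m)^{C_2 \log(2) m / d} \ge \tfrac12 \cdot (1+o(1))$ by \Cref{lem:exp-estimate}, which supplies exactly the constant $\tfrac12$. Combining: $\p_\dep(\mathcal A_t) \ge \delta \cdot \tfrac12 (1-o(1)) \cdot (1-d/m)^{m\eps(1+o(1))} \to \tfrac{\delta}{2} e^{-\eps d}$, and plugging into the integral expression for $\loss[\gdy]$ from \Cref{lem:aux-loss} finishes the proof. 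The only genuinely delicate points are making the Poisson fluctuations of $\numbarr_{(t,t+\eps]}$ negligible (a Chernoff bound, \Cref{lem:Chernoff}, as used in the upper-bound proof) and confirming that conditioning on $\sou \le \eps$ does not disturb the pool-size estimates, which it does not since the sojourn time of $a$ is independent of the market state at her arrival.
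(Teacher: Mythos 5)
Your argument reaches the right bound and can be made rigorous, but it takes a genuinely different and heavier route than the paper's proof. The paper obtains the factor $\tfrac12$ for free from a deterministic counting argument: every agent matched at arrival is paired with a distinct agent who was \emph{not} matched at arrival (her partner was already sitting in the pool), so at most half of all agents can be matched at arrival, yielding $\p_\dep(\ag \text{ not matched upon arrival}) \ge \tfrac12$ with no pool-size estimate whatsoever. The paper also sidesteps your Chernoff step: by Poisson thinning, compatible agents arrive as a Poisson process of rate $d$, so the probability that no compatible agent arrives in $[t,t+\eps]$ is \emph{exactly} $e^{-\eps d}$ --- no concentration bound on $\numbarr_{(t,t+\eps]}$ is needed. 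Multiplying the three independent ingredients $\p_\dep(\sou_i \le \eps) \ge \delta$, the factor $e^{-\eps d}$, and the factor $\tfrac12$, and plugging into \Cref{lem:aux-loss}, finishes the proof in a few lines.

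Your route instead manufactures $\tfrac12$ from \Cref{coro:greedy_pool_upper_bound}: on the event $\{\psize_t \le \tfrac{C_2\log(2)}{d}m\}$, which holds with probability at least $1 - m^{-9}$, the survival-at-arrival probability $(1-d/m)^{\psize_t}$ is at least $(1-d/m)^{C_2\log(2) m/d} \to \tfrac12$ as $m \to \infty$, and you correctly note that the $o(1)$ error washes out under the $\limsup$ defining $\loss$. This is valid, but it imports the whole coupling and stationary-distribution machinery behind \Cref{coro:greedy_pool_upper_bound}, whereas the paper's argument is elementary and self-contained; it also pays for a fluctuation bound on $\numbarr_{(t,t+\eps]}$ that the Poisson-thinning observation renders unnecessary. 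Both proofs are correct; the paper's is the one you would want to write down.
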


\begin{proof}
	Consider an agent~$\ag$ arriving at time~$t$ which is not matched upon arrival. 
    Note that compatible agents come at Poisson rate $m\cdot p = m\cdot \frac{d}{m} = d$. 
    As the maximum sojourn time and the arrival of compatible agents are independent, we get
	\begin{align*}
	& \p_\dep\left( \ag \text{ perishes} \mid \ag \text{ not matched upon arrival} \right) \\
	&
	\geq \p_\dep\left( \text{no compatible agent arrives in } \left[ t,t+\sou \right]  \mid \ag \text{ not matched upon arrival} \right) \\
	&
	\geq \p_\dep\left( \sou \leq \eps \right) \p_\dep\left( \text{no compatible agent arrives in } \left[ t,t+\eps \right] \right) \geq \delta e^{-\eps d}.
	\end{align*}
    
	Additionally, $\p_\dep\left( \ag \text{ not matched upon arrival} \right) \ge \frac 12$, because at least half of the agents $\ag$ are not matched upon arrival. Together,
	\begin{align*}
		 \p_\dep\left( \ag \text{ perishes} \right) = &\p_\dep\left( \ag \text{ perishes} \mid \ag \text{ not matched upon arrival} \right)\\  
		 &\cdot\p_\dep\left( \ag \text{ not matched upon arrival} \right) \ge \frac 12 \delta e^{-\eps d}\text.
	\end{align*}
    This proves the desired lower bound.
\end{proof}

An immediate consequence of this lemma is an inevitable inverse linear loss of $\gdy$ if the distribution of $\sou$ can be lower-bounded by any linear function in any neighborhood of~$0$.

\begin{restatable}{theorem}{lbnowaiting}\label{thm:highloss}
	Let $c>0$, $\eps_0 > 0$, and assume that the \dtimename is distributed according to a probability measure $\dep$ with $\dep\left( \left[0,\eps \right] \right) \geq c \eps$ for all $\eps \leq \eps_0$. Then, for~$d$ with $\frac 1d\le \eps_0$, it holds that
	\begin{align*}
	\loss[\gdy] \geq \frac{c}{6}\cdot \frac {1}{d}\text.
	\end{align*}
\end{restatable}

\begin{proof}
    Consider any $d$ with $\frac 1d \le \eps_0$ and $\eps = \frac{1}{d}$.
    By assumption, we have $\dep([0,\eps]) \ge c\eps$.
    Hence, \Cref{lem:greedy_lb} yields 
    \begin{equation*}
        \loss[\gdy] \geq \frac{c\eps}{2} e^{-\eps d} = \frac{c\eps}{2} e^{-1} = \frac c{2e}\cdot\frac 1d \ge \frac c6\cdot\frac 1d\text.
    \end{equation*}
    This proves the desired bound.
\end{proof}

Hence, the inferiority of the greedy algorithm  for exponentially distributed \dtimenamepl observed by \citet{ALG20a} is caused by a non-negligible chance of staying in the market for a very small time.
Importantly, it is not the mean of the \dtimename that leads to the different loss bounds in \Cref{thm:greedy_loss_upper_bound,thm:highloss} but the qualitative behavior of the distribution around~$0$.
Note that \Cref{thm:highloss} encompasses the exponential distribution.
Assume that $\sou$ is exponentially distributed with mean~$1$, i.e., $\p(\sou \le x) = 1 - e^{-x}$ for all $x\ge 0$.
Note that for all $x\ge 0$, it holds that $1- e^{-x} \ge \frac x{1+x}$ (see \Cref{lem:exp-snd-estimate}).
Moreover, we have $\frac x{1+x} \ge cx$ whenever $x\le \frac 1c - 1$.
Hence, the precondition of \Cref{thm:highloss} is, for example, satisfied with $\eps_0 = 1$ and $c = \frac 12$.
Hence, for any \dparaname $d\ge 1$, and we then get a loss of $\frac 1{12d}$.

Moreover, another simple consequence of \Cref{lem:greedy_lb} is a general exponential lower bound for the loss of $\gdy$ under constant \dtimenamepl and, therefore, in particular, unit \dtimenamepl.

\begin{restatable}{proposition}{greedylbconst}\label{cor:greedylbconst}
	Let $k > 0$ and assume that the \dtimename $\sou$ is given by $\p(\sou = k) = 1$. 
    Then,
	\begin{align*}
	\loss[\gdy] \geq \frac{1}{2} e^{- kd}\text.
	\end{align*}
\end{restatable}

\begin{proof}
    This result follows from \Cref{lem:greedy_lb} by setting $\eps = k$ and $\delta = 1$.
\end{proof}

Hence, for the \dtimenamepl of \Cref{cor:greedylbconst} an exponentially small loss as derived in \Cref{thm:epslowerbound} is the best the greedy algorithm can achieve.
Up to a constant in the exponent, $\gdy$ cannot guarantee better outputs.

However, a much stronger statement is true, as we show next. 
An exponential loss is not only inevitable for $\gdy$ but for any algorithm, as long as the expected sojourn time is finite. 
Note that this result even holds for algorithms that have knowledge of all future agents and their compatibilities. 

\begin{restatable}{theorem}{lossgenlb}\label{thm:lossgenlb}
	Let $E := \E[\sou]$ be the expected \dtimename.\footnote{In other words, $E = \int x \dep(dx)$.}
    Then, for any algorithm $\alg$, it holds that
	\begin{align*}
		\loss[\alg] \geq e^{-2Ed}\text.
	\end{align*}
\end{restatable}

\begin{proof}
    We start with a technical observation.
	Let $X$ be any non-negative random variable with distribution $\dep$. 
    It is a well-known fact (tail integral formula) that
	\begin{align*}
		\int_{0}^{\infty} \dep\left(\left[s,\infty\right]\right) ds
		=  \E \left[X\right]\text.
	\end{align*}
	
	We proceed with the main proof.
	For some time $t \in \R_{\geq 0}$, let $V_t$ be the set of agents whose maximum sojourn time is reached after time $t$. 
	So $V_t$ is the set of all agents that are possibly in the pool at time $t$. Mecke's equation implies that
    \begin{align}
        \notag \E \left[ |V_t| \right] & = \int_{0}^{t} m \p_{\mu} \left( \text{an agent arriving at time $s$ has sojourn time at least $t-s$} \right) ds
        \\
        & \label{eq:intermediary}
        =
        m \int_{0}^t \mu \left( \left[ t-s , \infty \right] \right) ds
        =
        m \int_{0}^t \mu \left( \left[ s , \infty \right] \right) ds
        \leq
        m \int_{0}^{\infty} \mu \left( \left[ s , \infty \right] \right) ds
        =
        m \E \left[X \right] = mE
    \end{align}

	Now, assume that an agent $\ag$ arrives at time $t$. 
    We want to estimate the probability that no agent in $V_t$ is compatible with $\ag$.
    By the law of total probability we have
	\begin{align}\label{eq:earlier arrivals}
		\notag &\p \left(\text{No agent in $V_t$ is compatible with $\ag$}\right)\\
		\notag & = \sum_{k=0}^{\infty} \p \left(|V_t|=k\right) \p \left(\text{no agent in $V_t$ is compatible with $\ag$}  \ \big| \ |V_t|=k\right)\\
		& =
		\sum_{k=0}^{\infty} \p \left(|V_t|=k\right) \left(1-\frac{d}{m}\right)^k
		=
		\E \left[ \left(1-\frac{d}{m}\right)^{|V_t|} \right] 
		\geq
		\left(1-\frac{d}{m}\right)^{\E \left[|V_t| \right]} 
		\overset{\eqref{eq:intermediary}}{\geq}
		 \left(1-\frac{d}{m}\right)^{mE}\text.
	\end{align}
    
	The first inequality holds by Jensen's inequality, as the function $x\mapsto \left(1-\frac{d}{m}\right)^x$ is convex. 
	The second inequality holds because this function is also decreasing.
	
	Next, we estimate the probability that no agent compatible with $\ag$ arrives during the sojourn time of $\ag$. 
	Let $X_{\ag}$ denote the maximum sojourn time of $\ag$, i.e., the random variable $X_{\ag}$ is distributed according to $\dep$. 
	The next idea is to condition on the event $\{X_{\ag}=s\}$ for some positive real $s$. Then, the law of total probability and Jensen's inequality imply that
	\begin{align*}
		&\p \left( \text{no agent arriving in $\left[t,t+s\right]$ is compatible with $\ag$} | X_{\ag}=s \right)\\
		=
		&\sum_{k=0}^{\infty} \p\left(a_{\left[t,t+s\right]}=k | X_{\ag}=s\right) \p \left( \text{no agent arr. in $\left[t,t+s\right]$ is comp. with $\ag$} \ | \ X_{\ag}=s ,  a_{\left[t,t+s\right]}=k \right)\\
		=
		&\sum_{k=0}^{\infty} \p\left(a_{\left[t,t+s\right]}=k\right)\left(1-\frac{d}{m}\right)^k =
		\E\left[\left(1-\frac{d}{m}\right)^{a_{\left[t,t+s\right]}}\right] \geq
		\left(1-\frac{d}{m}\right)^{\E\left[{a_{\left[t,t+s\right]}}\right]}
		= \left(1-\frac{d}{m}\right)^{ms}\text.
	\end{align*}
    By the tower rule and the law of total probability we thus get
	\begin{align}\label{eq:later arrivals}
		\notag &\p \left( \text{no agent arriving in $\left[t,t+X\right]$ is compatible with $\ag$} \right)\\
		\notag &
		=
		\E \left[ \p \left( \text{no agent arriving in $\left[t,t+X\right]$ is compatible with $\ag$} | X \right) \right]\\
		&
		\geq 
		\E \left[  \left(1-\frac{d}{m}\right)^{mX} \right] \geq
		 \left(1-\frac{d}{m}\right)^{\E \left[ mX\right]}
		 =
		  \left(1-\frac{d}{m}\right)^{mE}\text.
	\end{align}
	In the last inequality, we have applied Jensen's inequality once again. 
	
	An agent that arrives at time $t$, is compatible with no agent in $V_t$, and for whom no compatible agent arrives during her maximal sojourn time, cannot get matched under any algorithm. 
    For an agent arriving at time $t$, the compatibility to agents that arrived before time $t$ and the compatibility to agents that arrive after time $t$ is independent. 
    Hence, combining \eqref{eq:earlier arrivals} with \eqref{eq:later arrivals} gives that
	\begin{align*}
		\p \left(\ag \text{ perishes}\right) \geq \left(1-\frac{d}{m}\right)^{2mE}
	\end{align*} 
	for any agent $\ag$. 
	For $m\to \infty$, we obtain that $\p \left(v \text{ perishes}\right) \geq e^{-2dE}$, and as the time $t$ at which $\ag$ arrives was arbitrary we get by \Cref{lem:aux-loss} that
	\begin{align*}
		\loss[\alg] \geq e^{-2dE}
	\end{align*}
	for any algorithm $\alg$.
\end{proof}

For unit \dtimenamepl, \Cref{thm:lossgenlb} implies that,
for any algorithm $\alg$, it holds that
\begin{align*}
\loss[\alg] \geq e^{-2d}\text.
\end{align*}

In comparison, \Cref{thm:greedy_loss_upper_bound} bounds the loss of $\gdy$ as
\begin{equation*}
	\loss[\gdy] \leq e^{-\frac{d}{2\log(2)}}\text.
\end{equation*}

Hence, the constant in the exponential function of the loss generated by $\gdy$ for unit \dtimenamepl can be at most a factor of $4\ln(2)\approx 2.77$ worse compared to an optimal algorithm.

\subsection{Waiting Time under the Greedy Algorithm}\label{sec:waiting time}

Our second goal is the minimization of waiting time.
For the greedy algorithm, since almost all agents get matched, approximately half of the agents get matched at arrival.
However, the waiting time of agents entering the pool is unclear. 
We will show that, in the greedy algorithm under an arbitrary \dtimename, $\waitobj[\gdy](m,\Tmax) \in \Theta \left(\frac{1}d \right)$. 

Interestingly, the average waiting time can be expressed by an integral over the expected average pool size. 
This can be seen as a variant of Little's Law in our context.
\begin{restatable}{proposition}{waitVsLoss}\label{prop:wait-vs-loss}
	Assume that the \dtimename is distributed according to an arbitrary probability measure $\dep$. Then, for any algorithm $\alg$, it holds that
	\begin{align*}
		\waitobj[\alg](m,\Tmax) = \frac 1{m\Tmax}\int_0^{\Tmax} \E_{\dep}\left[ \psize_s \right] ds \text. 
	\end{align*}
\end{restatable}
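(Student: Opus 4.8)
The plan is to compute the total waiting time $\twait$ by switching the order of "summation over agents" and "integration over time". For a single agent $v_i$ that is present in the pool, let $\sigma_i$ denote the (random) time she enters the pool and $\tau_i$ the (random) time she leaves it (either by being matched or by perishing), capped at $T$. Then the time she spends in the pool up to time $T$ is $\int_0^T \mathbf{1}\{v_i \in \pool_s\}\, ds$, so that
\begin{align*}
	\twait = \sum_{i} \int_0^T \mathbf{1}\{v_i \in \pool_s\}\, ds = \int_0^T \Big( \sum_i \mathbf{1}\{v_i \in \pool_s\} \Big) ds = \int_0^T \psize_s\, ds \text.
\end{align*}
This identity holds pathwise (almost surely), since at each fixed time $s$ the inner sum is by definition the pool size $\psize_s = |\pool_s|$. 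Taking expectations on both sides and applying Tonelli's theorem to the nonnegative integrand $\psize_s$ — which is legitimate because $(s,\omega) \mapsto \psize_s(\omega)$ is jointly measurable and nonnegative — gives $\E_\dep[\twait] = \E_\dep[\int_0^T \psize_s\, ds] = \int_0^T \E_\dep[\psize_s]\, ds$, as claimed.

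The only genuinely delicate point is the measurability and integrability needed to justify the interchange of expectation and integral; everything else is the elementary pathwise identity above. For measurability, one notes that $\psize_s$ is a right-continuous piecewise-constant process with jumps at the (countably many, almost surely distinct) arrival, matching, and perishing times, so $(s,\omega)\mapsto \psize_s(\omega)$ is jointly measurable as a function on $[0,T]\times\Omega$. For integrability, $\psize_s \le \numbarr_{[0,s]} \le \numbarr_{[0,T]}$, which is a Poisson random variable with mean $mT < \infty$, so $\E_\dep[\int_0^T \psize_s\, ds] \le T \cdot mT < \infty$ and Tonelli applies.

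I expect the main (and really the only) obstacle to be stating the joint-measurability argument cleanly; the rest is immediate once one writes the indicator representation of the time an agent spends in the pool. No appeal to the specific matching policy or departure distribution is needed — the identity is purely a bookkeeping fact about the point process of arrivals and departures — which is consistent with the proposition being stated for an arbitrary measure $\dep$.
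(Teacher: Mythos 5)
Your proposal is correct and takes essentially the same approach as the paper: a pathwise interchange of sum over agents and integral over time to get $\twait = \int_0^T \psize_s\,ds$, followed by another Fubini/Tonelli step to push the expectation inside. The additional care you take with measurability and integrability is sound but not spelled out in the paper, which simply invokes Fubini's theorem twice.
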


\begin{proof}
    By Fubini's Theorem, we have
	\begin{align*}
		 \sum_{i\ge 1} \int_{0}^{T} \mathbf{1}_{v_i \in Z_s} ds 
		=
		\int_{0}^{T} \sum_{i \ge 1} \mathbf{1}_{v_i \in Z_s} ds 
		=
		\int_{0}^{T}\psize_s ds\text.
	\end{align*}
	
    There, $\mathbf{1}$ denotes the indicator function.
    Hence,
    \begin{equation*}
        \waitobj(m,\Tmax)
        = \frac 1{m\Tmax}\E\left[\sum_{i\ge 1}\int_{0}^{\Tmax} \mathbf{1}_{v_i \in Z_s} ds\right] = \frac 1{m\Tmax}\E\left[\int_{0}^{T}\psize_s ds\right]\text.
    \end{equation*}
    
    Another application of Fubini's theorem yields 
	\begin{align*}
		\waitobj(m,T) = \frac 1{m\Tmax}\int_0^T \E_{\dep}\left[ \psize_s \right] ds \text.
	\end{align*}
    
    This proves the desired equation.
\end{proof}

We can use this relationship to give an upper bound on the waiting time.

\begin{restatable}{theorem}{aggWaiting}\label{lem:agg_waiting}
	Assume that the \dtimename is distributed according to an arbitrary probability measure $\dep$. Then, for large enough $m$, under the greedy algorithm, it holds that
	\begin{align*}
		\waitobj[\gdy](m,\Tmax) \leq \frac 65\cdot\frac{1}{d}\text.
	\end{align*}
\end{restatable}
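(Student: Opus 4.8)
The plan is to reduce the statement to a uniform-in-time bound on the expected pool size and then to control the latter via the coupling of \Cref{lem:coupling1}. By \Cref{prop:wait-vs-loss} we have $\E_{\dep}[\twait] = \int_0^T \E_{\dep}[\psize_s]\,ds$, so it suffices to show that $\E_{\dep}[\psize_s] \le \tfrac{6m}{5d}$ holds for every $s \ge 0$ once $m$ is large enough; integrating over $[0,T]$ then yields the claim. In particular, this turns a statement about the whole time window into a single pointwise estimate, and the departure distribution $\dep$ enters only through it.

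For the pointwise bound I would mimic the proof of \Cref{coro:greedy_pool_upper_bound}. Run the greedy algorithm under $\dep$ from the empty pool (as in the model) and, simultaneously, the greedy algorithm under $\dep_\infty$ from a pool whose size is drawn from the stationary distribution $\rho$. Since $\psize_0 = 0 \le \psize_0^\infty$, \Cref{lem:coupling1} provides a coupling with $\psize_s \le \psize_s^\infty + 1$ for all $s \ge 0$, and by stationarity $\psize_s^\infty$ has law $\rho$ for every $s$. Taking expectations gives $\E_{\dep}[\psize_s] \le \E[\psize_s^\infty] + 1 = \sum_{k\ge 1}\rho([k,\infty)\cap\N) + 1$, uniformly in $s$, so everything reduces to bounding the mean of the stationary pool size without perishing.

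To estimate $\sum_{k\ge 1}\rho([k,\infty)\cap\N)$ I would extract from the proof of \Cref{lem:pool_greedy_upper_bound2} not only its tail-probability conclusion but the geometric decay it rests on, namely $\rho(L+n) \le e^{-n\cdot 10/\log(m)}$ with $L = \lceil C_1\log(2)m/d\rceil$. Splitting the sum at $K^\ast := \tfrac{C_1\log(2)m}{d} + \tfrac 32\log(m)^2$, the terms with $k \le K^\ast$ contribute at most $K^\ast$ because $\rho$ is a probability measure, while the terms with $k > K^\ast$ sum to $O(\log(m)^2 m^{-10})$ by the geometric decay. Hence $\E[\psize_s^\infty] \le K^\ast + o(1) = \tfrac{\log(2)m}{d} + \tfrac{10m}{d\log(m)} + \tfrac 32\log(m)^2 + o(1)$, where I used $C_1\log(2) = \log(2) + \tfrac{10}{\log(m)}$. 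Since $\log(2) < \tfrac 65$, the leading term has a coefficient strictly below $\tfrac 65$, and the corrections $\tfrac{10m}{d\log(m)}$, $\tfrac 32\log(m)^2$ and the additive $1$ from the coupling are all of smaller order than $\tfrac m d$ in the regime where the constants $C_1,C_2,C_3$ are meaningful. Therefore $\E_{\dep}[\psize_s] \le \tfrac{6m}{5d}$ for $m$ large, and integrating over $[0,T]$ completes the argument.

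The only delicate point is this last step: \Cref{lem:pool_greedy_upper_bound2} bounds a tail \emph{probability}, whereas here one needs a bound on the tail \emph{expectation}, so one must genuinely invoke the geometric decay of $\rho$ and also check that the lower-order terms stay within the slack $(\tfrac 65 - \log 2)\tfrac m d$. If one prefers to avoid the stationary distribution altogether, an alternative is to write $\E_{\dep}[\psize_s] \le \tfrac{C_2\log(2)}{d}m + \E_{\dep}\bigl[\psize_s\,\mathbf{1}_{\{\psize_s > C_2\log(2)m/d\}}\bigr]$ and bound the second term by Cauchy--Schwarz, using $\p_\dep(\psize_s > C_2\log(2)m/d) \le m^{-9}$ from \Cref{coro:greedy_pool_upper_bound} together with $\psize_s \le \numbarr_{[0,s]}$ and the second moment of the Poisson variable $\numbarr_{[0,s]}$; this introduces a harmless dependence of the threshold for $m$ on $T$ but is otherwise identical in spirit.
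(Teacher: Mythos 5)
Your proposal is correct and follows essentially the same route as the paper: reduce via \Cref{prop:wait-vs-loss} to a uniform-in-$s$ bound on $\E_{\dep}[\psize_s]$, couple against the infinite-departure chain started at stationarity via \Cref{lem:coupling1}, and then bound the stationary mean. The only variation is in the last arithmetic step: the paper observes afresh that $p(k,k+1)\le 1/3$ once $k\ge \log(3)m/d$, whence $\rho(k+1)\le\tfrac12\rho(k)$ there, and computes $\E[\psize_0^\infty]=\sum_k k\,\rho(k)\le 1.01\lceil\log(3)m/d\rceil+11$ directly; you instead reuse the geometric decay $\rho(L+n)\le e^{-10n/\log m}$ from \emph{inside} the proof of \Cref{lem:pool_greedy_upper_bound2} (not just its stated conclusion) together with the tail-sum identity $\E[\psize_s^\infty]=\sum_{k\ge1}\rho([k,\infty))$ and the threshold $C_1\log(2)m/d+\tfrac32\log(m)^2$. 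Both routes give a leading coefficient strictly below $6/5$ ($\log 2\approx0.69$ for yours, $1.01\log 3\approx1.11$ for the paper's), and the lower-order corrections are negligible relative to $m/d$ for fixed $d$ and large $m$, so your argument closes. The paper's choice of threshold $\log(3)m/d$ is slightly cleaner because it yields a self-contained ratio-$\tfrac12$ geometric tail without borrowing an intermediate inequality from another proof, but that is a stylistic rather than a substantive difference. Your Cauchy--Schwarz fallback at the end is also sound, though as you note it imports a mild $T$-dependence through the crude bound $\psize_s\le\numbarr_{[0,s]}$, which the main argument avoids.
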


\begin{proof}
	Having in mind \Cref{prop:wait-vs-loss}, we would like to get an upper bound on $\E_{\dep}\left[ \psize_s \right]$. 
    The key idea for this is to consider the Markov chain $\left(\pool_t^\infty\right)_{t\geq 0}$ with infinite \dtimename started at stationarity and to apply \Cref{lem:coupling1}, our coupling lemma.

    Recall that we obtained the transition probabilities  with respect to $\dep_\infty$ in \Cref{eq:transition probs} in \Cref{subsec:greedy_compare} as $p(k,k+1) = \left(1-\frac{d}{m}\right)^k$.
	By \Cref{lem:exp-estimate}, we have $\left(1-\frac{d}{m}\right)^k \leq 1/3$ for $k\geq \frac{\log(3)m}{d}$. 
    Thus, the stationary distribution $\rho$ satisfies $\rho(k+1) = \frac{p(k,k+1)}{p(k+1,k)} \rho(k) \leq \frac{1}{2}\rho(k)$ for these $k$. 
	However, this already implies that when $\psize_0^\infty$ is distributed according to the stationary distribution, then
	\begin{align*}
		& \E_{\dep_\infty}\left[\psize_0^\infty\right] = \sum_{k=1}^{\infty} \rho(k)k
		=
		\sum_{k=1}^{\left\lceil \frac{\log(3)m}{d}\right\rceil + 10} \rho(k)k 
		+
		\sum_{\left\lceil \frac{\log(3)m}{d}\right\rceil + 11}^\infty \rho(k) k\\
		&
		\leq
		\left\lceil \frac{\log(3)m}{d}\right\rceil + 10 
		+ \sum_{k=11}^{\infty}2^{-k} \left(\left\lceil \frac{\log(3)m}{d}\right\rceil + k\right)\\
		&
		\leq  \left\lceil  \frac{\log(3)m}{d}\right\rceil + 10  +  2^{-10} \left\lceil \frac{\log(3)m}{d}\right\rceil + 1
		\leq 1.01 \left\lceil \frac{\log(3)m}{d}\right\rceil + 11\text.
	\end{align*}
Thus, we can apply \Cref{lem:coupling1} to obtain
\begin{align*}
	m\Tmax\cdot \waitobj[\gdy](m,T) &= \int_0^T \E_{\dep}\left[ \psize_s \right] ds \\
	&\leq \int_0^T \E_{\dep_\infty}\left[ \psize_s^\infty + 1 \right] ds \\
    &\leq T\left( 1.01 \left\lceil \frac{\log(3)m}{d}\right\rceil + 12 \right) \leq \frac{6mT}{5d}\text,
\end{align*}
where the last inequality holds for $m$ large enough.
\end{proof}

We also prove a lower bound on the average waiting time.
Even if an algorithm has access to the whole information of incoming agents, compatibilities, and the realization of maximum sojourn times, it has a waiting time that is only a constant factor apart from the waiting time of the greedy algorithm. 
Note that it is important for this result that algorithms are not allowed to remove unmatched agents from the market.
This would allow them to obtain an arbitrarily small waiting time at the expense of having a high proportion of unmatched agents.
Moreover, this lower bound needs two minor additional assumptions. 
First, the density of the sojourn distribution around a sojourn of~$0$ is not allowed to be too high.
Note that this does \emph{not} exclude the exponential distribution: the conditions of \Cref{prop:waitlb} hold for the exponential distribution with mean~$1$ whenever $d\ge 5$.
Moreover, we need a time horizon of at least $\frac 3d$.
As $d\ge 1$, this is satisfied whenever $T\ge 1$, i.e., the market runs for at least one time unit.

\begin{restatable}{proposition}{waitlb}\label{prop:waitlb}
	Assume that the \dtimename is distributed according to a probability measure $\dep$ with $\dep \left( \left[\frac{1}{2d}, \infty\right] \right) > \frac{9}{10}$. 
    Further, assume that $\Tmax\ge \frac 3d$. 
    Then, under any algorithm $\alg$, it holds that
	\begin{align*}
		\waitobj[\alg](m,\Tmax) \geq \frac 1{20}\cdot\frac{1}{d}\text.
	\end{align*}
\end{restatable}

\begin{proof}
	We say that an agent arriving at time $t \in \left[\frac{1}{2d}, T -\frac{1}{2d}\right]$ is {\sl blocked} if no compatible agent arrives in the interval $\left[t-\frac{1}{2d},t+\frac{1}{2d}\right]$, and the maximum sojourn time of this agent is at least $\frac{1}{2d}$. So in particular the agent will still be in the market at time $t+\frac{1}{2d}$. For some agent arriving at time $t$, compatible agents arrive at rate $d$. So the probability that there arrives no compatible agent in the time interval $\left[t-\frac{1}{2d},t+\frac{1}{2d}\right]$ is given by $\exp \left( -d \frac{1}{d} \right) = e^{-1}$. 
	
	As the arrival process of compatible agents is independent of the maximum sojourn time, we get that for an agent $\ag$ arriving at time $t \in \left[\frac{1}{2d}, T -\frac{1}{2d}\right]$
	\begin{align*}
		\p \left( \ag \text{ is blocked} \right) = \dep \left(\left[\frac{1}{2d},\infty\right]\right) e^{-1} \geq \frac{0.9}{e} 
	\end{align*}
	by our assumption on $\dep$. 	
	Let $B(T)$ be the number of blocked agents arriving in the interval $\left[\frac{1}{2d}, T -\frac{1}{2d}\right]$. 
	An application of Mecke's equation as in \Cref{lem:aux-loss}, yields that
	\begin{align*}
		\E \left[|B(T)|\right] \geq m\left(T-\frac{1}{d}\right)\frac{0.9}{e}\text.
	\end{align*}

    In the following, we will argue that under {\sl any} matching algorithm one, it holds that 
    \begin{equation*}
        m\Tmax\cdot \waitobj[\alg](m,\Tmax) \geq \frac{\E[|B(T)|]}{4d}\text.    
    \end{equation*}

	Agents can leave the market either alone when they perish or in pairs of two. 
    Assume that a blocked agent $\ag$ leaves the market alone, then it had a waiting time of at least $\frac{1}{2d}$. 
	If a blocked agent $\ag$ gets matched with an agent that arrived after $\ag$, then the blocked agent also had to wait for a time of at least $\frac{1}{2d}$. 
	If a blocked agent $\ag$ gets matched with an agent that arrived before $\ag$, then the agent with whom $\ag$ got matched had to have a waiting time of at least $\frac{1}{2d}$, as no agent compatible with $\ag$ arrived in the time window of length $\frac{1}{2d}$ before the arrival of $\ag$. This also leads to a higher total waiting time. However, we need to take care of the possibility of double counting here. 
	If two blocked agents are blocked and compatible with each other, then we must not accumulate a contribution of $\frac{1}{2d}$ to the total waiting time for each of them but only for the pair. 
	We can take care of this by a division by $2$. Hence, by monotonicity of expectations, it holds that
	\begin{align*}
		m\Tmax \cdot \waitobj[\alg](m,T) \geq \frac{1}{2d} \frac{\E[|B(T)|]}{2}\text. 
	\end{align*}
    
	Thus,
	\begin{align*}
		m\Tmax \cdot \waitobj[\alg](m,T) \geq \frac{1}{4d} \E \left[|B(T)|\right] \geq \frac{9}{40ed} m \left(T-\frac{1}{d}\right) \geq \frac{mT}{20d}\text.
	\end{align*}
	The last inequality holds, as $T\ge \frac 3d$ which is equivalent to $T-\frac{1}{d} \geq \frac{2T}{3}$,  and
	$\frac{9}{40e}\frac{2}{3} \geq \frac{1}{20}$.
\end{proof}

\subsection{Market Thickness}\label{sec:thickness}

As we already mentioned in the introduction, a feature that makes our loss bound in \Cref{thm:epslowerbound} special is that we operate on a thin market.
The previous section allows us to quantify more precisely what this means in our context.
Combining \Cref{prop:wait-vs-loss} with \Cref{lem:agg_waiting} and \Cref{prop:waitlb}, we obtain

\begin{restatable}{proposition}{PropThickness}\label{prop:thickness}
	Assume that the \dtimename is distributed according to an arbitrary probability measure $\dep$. 
    Then, for large enough $m$, under the greedy algorithm, it holds that
	\begin{align*}
		\frac 1{\Tmax}\int_0^{\Tmax} \E_{\dep}\left[ \psize_s \right] ds \le \frac 65\cdot\frac{m}{d}\text.
	\end{align*}
    Moreover, if $\dep \left( \left[\frac{1}{2d}, \infty\right] \right) > \frac{9}{10}$ and $\Tmax\ge \frac 3d$, then any algorithm $\alg$ satisfies
	\begin{align*}
		\frac 1{\Tmax}\int_0^{\Tmax} \E_{\dep}\left[ \psize_s \right] ds \ge \frac 1{20}\cdot\frac{m}{d}\text.
	\end{align*}
\end{restatable}

Hence, the average pool size of the greedy algorithm is of order $\mathcal O(\frac md)$, i.e., a $\frac 1d$-fraction of the arrival rate.
Moreover, no algorithm can maintain a thinner market for most sojourn times.

For comparison, the average market size under the patient algorithm is of order $\Theta(m)$. This is derived by \citet{ALG20a} under exponentially distributed \dtimenamepl but also holds for any guaranteed sojourn time (because at least half of the agents have to stay until their maximum sojourn times).
Hence, a number of agents equivalent to a constant fraction of the arrival rate are kept in the market in average, ensuring a significantly thicker market.
The greedy algorithm maintains a pool size that is smaller by a magnitude of $d$ compared to the arrival rate.

\section{Analysis of the Patient Algorithm}

In this section, we analyze the loss and waiting time of the patient algorithm.

\subsection{Loss Guarantee of the Patient Algorithm}

The fundamental insight by \citet{ALG20a} is that the patient algorithm outperforms the greedy algorithm in terms of loss if the \dtimenamepl are exponentially distributed.
Even though the greedy algorithm performs very well under other \distnamepl, the patient algorithm still is an alternative worth considering if access to the departure information is not costly.
Here, we show that the patient algorithm still yields exponentially small loss for the case of unit \dtimenamepl.
The central idea of the proof is to use the intuition that the probability of being matched is high if the pool size is large, or, in other words, the market is thick. 
Therefore, the key step is to show that the pool size is moderately large with high probability. 
To this end, we perform a case analysis of the near past of a fixed point in time, considering four time intervals of length $1/3$.
The critical insight is that few agents arriving during the first time interval can be present at the beginning of the last time interval.
To show this, we perform a comparison with an urn model.
 
\begin{restatable}{theorem}{patientlb}\label{lem:patient_lb}
	Assume that the \dtimename is distributed according to the degenerate probability measure $\dep$ with $\dep(\{1\}) = 1$. Then, it holds that
	\begin{align*}
	\loss[\pat] \leq e^{-\frac d5}\text.
	\end{align*}
\end{restatable}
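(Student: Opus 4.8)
The plan is to invoke \Cref{lem:aux-loss}, which reduces $\loss[\pat]$ to the time-averaged probability that an agent arriving at time~$t$ perishes, and then to bound this probability for an arbitrary~$t$. The starting point is a structural fact special to the patient algorithm with unit departure: at any time~$\tau$ the pool consists precisely of the still-unmatched agents who arrived in $(\tau-1,\tau]$, since everyone who arrived earlier has already reached their deadline. Hence, when the tagged agent~$v$ arriving at time~$t$ becomes critical at time~$t+1$, all of her potential partners in $Z_{t+1}$ arrived strictly after~$v$, during $(t,t+1)$.

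The second ingredient turns the heuristic ``large pool $\Rightarrow$ likely matched'' into a clean bound. Condition on all randomness except the compatibility bits linking~$v$ to the agents arriving in $(t,t+1)$. None of those bits affects whether such an agent is grabbed before time~$t+1$ — the agents that can grab during $(t,t+1)$ are exactly those critical during $(t,t+1)$, and these arrived in $(t-1,t)$, before~$v$ — so the set $Z'$ of agents from $(t,t+1)$ surviving to time~$t+1$ is determined by the conditioning, while each member of $Z'$ is independently compatible with~$v$ with probability $d/m$. Since $v$ perishes only if no member of $Z'$ is compatible with her, this gives $\p\left(v\text{ perishes}\right) \le \E\left[(1-\tfrac dm)^{|Z'|}\right]$.

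It therefore suffices to prove that $|Z'| \ge \tfrac m5(1+o(1))$ with probability tending to~$1$; combined with the above and Poisson concentration of arrival counts (as in the greedy analysis) this yields $\loss[\pat]\le e^{-d/5}$. I would partition $[t-\tfrac13, t+1]$ into four intervals $I_1,\dots,I_4$ of length~$\tfrac13$, so that $I_4 = [t+\tfrac23, t+1]$. The agents critical during~$I_4$ are precisely those who arrived during~$I_1$; consequently the number of $I_4$-arrivals removed before time~$t+1$ is at most the number~$N$ of $I_1$-agents still present at time $t+\tfrac23$, the start of~$I_4$. Thus $|Z'| \ge a_{I_4} - N$, and $a_{I_4}$ concentrates around $m/3$, so it remains to show $N \le \tfrac{2m}{15}$ with high probability — that is, that enough of the $\approx m/3$ agents arriving in~$I_1$ have been matched by the time~$I_4$ begins.

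I expect this last bound to be the main obstacle, since the rate at which $I_1$-agents are grabbed is controlled by the number of critical agents, which depends on the non-Markovian pool size in the more distant past, making a direct estimate circular. To break the circularity I would use the essentially trivial bound that with high probability $z_s \le (1+o(1))m$ for all relevant~$s$, a case analysis on the pool sizes at the endpoints of $I_1,\dots,I_4$, and a comparison with an urn model: treat the $I_1$-agents as marked balls in an urn to which, after~$I_1$, only unmarked balls (later arrivals) are added, while each match removes an (approximately) uniformly chosen ball. In the regime where the pool is small the bound $N \le \tfrac{2m}{15}$ is immediate; where it is large the number of matches during the relevant window can be read off directly from the inflow; and in the intermediate ``critical'' regime the urn comparison shows that a definite fraction of the marked balls is removed by time $t+\tfrac23$. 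Combining the cases gives $N \le \tfrac{2m}{15}$ w.h.p., and substituting back into $\p(v\text{ perishes}) \le \E[(1-\tfrac dm)^{|Z'|}]$ and then into \Cref{lem:aux-loss} completes the proof; the constant~$\tfrac15$ arises from the $\tfrac13$-lengths of the intervals and the slack in the urn estimate.
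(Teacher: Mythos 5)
Your high-level plan matches the paper's: invoke \Cref{lem:aux-loss}, exploit the structural fact that under $\pat$ with unit departure the pool at time $\tau$ is exactly the unmatched arrivals from $(\tau-1,\tau]$, cut the preceding unit window into four thirds-length intervals, and use an urn-model comparison to control how many first-interval agents survive. Your conditioning argument for $\p(v\text{ perishes})\le \E[(1-d/m)^{|Z'|}]$ is also sound, and is in fact a cleaner justification of the step the paper states tersely as ``an agent getting critical in a pool of size $k$ perishes with probability $(1-d/m)^{k-1}$.''

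However, there is a genuine gap in how you bound $|Z'|$. You write $|Z'|\geq a_{I_4}-N$ with $N$ the number of $I_1$-agents still present at time $t+\tfrac23$, and then claim that $N\le \tfrac{2m}{15}$ w.h.p. This inequality is correct but far too lossy, and the claim $N\le \tfrac{2m}{15}$ w.h.p.\ is simply false in the typical regime. The urn comparison gives $\E[\tilde K_1]\approx \tfrac{k_1}{k_1+k_2+k_3}\,\psize_{t+2/3}\approx \tfrac13\psize_{t+2/3}$, and since the pool under the patient algorithm is of order $\Theta(m)$ (the paper's own discussion and heuristic place it near $m/(2\log 2)\approx 0.72m$), one expects $N$ to be of order $0.2m$, comfortably above $2m/15\approx 0.13m$. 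Worse, when $\psize_{t-1/3}$ happens to be small — so few agents are critical during $(t-\tfrac13, t+\tfrac23]$ and hence few $I_1$-agents are grabbed — $N$ is close to $a_{I_1}\approx m/3$, and your bound degenerates to $|Z'|\gtrsim 0$. Your sketched case analysis (``where it is large the number of matches ... can be read off directly from the inflow'') does not repair this: a large pool at $t+\tfrac23$ means \emph{few} matches occurred, not many, so $N$ is large there too; and a small pool at $t-\tfrac13$ forces $N$ to be large, not small. The urn model only gives $N\lesssim \psize_{t+2/3}/2$, never an absolute bound of the form $N\le 2m/15$.

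The paper avoids this trap by proving the stronger inequality $\psize_{t+1}\ge \psize_{t+2/3}-2N+a_{I_4}$: each of the $N$ first-interval agents critical during $I_4$ removes herself and at most one partner, so the pool loses at most $2N$ but gains all of $a_{I_4}$. Then it suffices to show $N\le \psize_{t+2/3}/2$ (rather than $N\le 2m/15$), which the urn/hypergeometric comparison does deliver, because the $I_1$-fraction of the surviving colored agents is at most about $\tfrac13<\tfrac12$. This relative bound works in every regime: if $\psize_{t+2/3}$ is large, $N$ may be large too but $\psize_{t+2/3}-2N\ge 0$ still holds and one gets $\psize_{t+1}\ge a_{I_4}$; if $\psize_{t+2/3}$ is small (handled by conditioning on $\psize_{t-1/3}<m/8$), a direct counting argument gives $\psize_t\ge a_{(t-1/3,t)}-\psize_{t-1/3}$. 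Your bound discards the $I_2\cup I_3$ survivors and the $\psize_{t+2/3}$ term, and thereby loses exactly the cancellation that makes the argument work; to close the gap you should keep the pool-size term and target $N\le \psize_{t+2/3}/2$ instead of an absolute $2m/15$ bound.
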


\begin{proof}
	As $\dep$ is fixed throughout the proof, we omit it as subscript. By \Cref{lem:aux-loss} and unit \dtimenamepl, we have
	\begin{align*}
	\loss[\pat] = \limsup_{m,T\to \infty}\frac 1T \int_1^T \p \left( \text{an agent arriving at time $t-1$ perishes} \right) dt\text.
	\end{align*}
	
	Assume that the pool has a size of $k$ at time $t$ and an agent gets critical. Then, the probability that the agent perishes is given by
	$\left( 1- \frac{d}{m} \right)^{k-1}$.
	Thus, in order to prove small loss, it suffices to show that the pool size is large with relatively high probability. Let $t\geq \frac{4}{3}$ be arbitrary. Our goal is to show that $\psize_t>\frac{m}{5}$ with high probability. For this, we condition on the pool size at time~$t-\frac{1}{3}$. By the law of total probability, we have
	\begin{align}
	\p \left( \psize_t \leq \frac{m}{5} \right) = 
	&
	\p \left( \psize_t \leq \frac{m}{5} \ \big| \ \psize_{t-\frac 13} < \frac{m}{8} \right) \p \left( \psize_{t-\frac 13} < \frac{m}{8} \right)\nonumber\\
	&
	+ \p \left( \psize_t \leq \frac{m}{5} \ \big| \ \psize_{t-\frac 13} \geq \frac{m}{8} \right) \p \left( \psize_{t-\frac 13} \geq \frac{m}{8} \right)
	\label{eq:law_of_total_prob}
	\end{align}
	and thus it suffices to bound each of the two summands above. 
	
	We start with the  first one. Each agent that is in the pool at time $t-\frac{1}{3}$ can take at most one additional agent out of the pool when she gets critical. Thus, conditioned on the event where $\psize_{t-\frac 13} < \frac{m}{8}$, it holds that
	$\psize_t \geq \numbarr_{\left( t-\frac{1}{3} , t \right)} - \psize_{t-\frac 13} \geq \numbarr_{\left( t-\frac{1}{3} , t \right)} - \frac{m}{8}$.
	By independence of $\numbarr_{\left( t-\frac{1}{3} , t \right)} $ and $ \psize_{t-\frac 13} $, this already implies that
	\begin{align*}
	\p \left( \psize_t \leq \frac{m}{5} \ \big| \ \psize_{t-\frac 13} < \frac{m}{8} \right)  \leq \p\left(  \numbarr_{\left( t-\frac{1}{3} , t \right)} \leq \frac m5 + \frac m8 \ \big| \ \psize_{t-\frac 13} < \frac{m}{8} \right) & =
	\p\left(  \numbarr_{\left( t-\frac{1}{3} , t \right)} \leq \frac {13}{40} m \right)\\
	& \overset{\text{\Cref{lem:Chernoff}}}{\leq} m^{-2}
	\end{align*}
	for $m$ large enough. This directly implies that
	\begin{align*}
	\p \left( \psize_t \leq \frac{m}{5} \ \big| \ \psize_{t-\frac 13} < \frac{m}{8} \right) \p \left( \psize_{t-\frac 13} < \frac{m}{8} \right) \leq m^{-2}
	\end{align*}
	for $m$ large enough. Now let us consider the bound for the second addend in \eqref{eq:law_of_total_prob}. Here we have that 
	\begin{align}\label{eq:bound total prob2}
	\p \left( \psize_t \leq \frac{m}{5} \ \big| \ \psize_{t-\frac 13} \geq \frac{m}{8} \right) \p \left( \psize_{t-\frac 13} \geq \frac{m}{8} \right) =
	\p \left( \psize_t \leq \frac{m}{5},  \psize_{t-\frac 13} \geq \frac{m}{8} \right)\text.
	\end{align}
	
	We now condition on the pool size at time $t-{\frac{1}{3}}$ and on the number of arriving agents in certain intervals. This is necessary to avoid possible coherences between the evolution of the pool size and the arrival of new agents. For this, define the vector $a=(a_1,a_2,a_3)\coloneqq\left( \numbarr_{\left[t-\frac{4}{3},t-1\right)} , \numbarr_{\left[t-1,t-\frac{2}{3}\right)}, \numbarr_{\left[t-\frac{2}{3},t-\frac{1}{3}\right)}  \right)$ to get
	\begin{align}\label{eq:conditioning on arrivals}
	&\p \left( \psize_t \leq \frac{m}{5},  \psize_{t-\frac 13} \geq \frac{m}{8} \right) 
	= 
	\sum_{k_1,k_2,k_3 =0}^{\infty} \sum_{l \geq \frac{m}{8}}
	\p \left( \psize_t \leq \frac{m}{5},  \psize_{t-\frac 13} = l , a=(k_1,k_2,k_3) \right)\\
	& \notag
	=
	\sum_{k_1,k_2,k_3 = \lfloor m/3-m^{2/3} \rfloor}^{\lfloor m/3+m^{2/3} \rfloor} \sum_{l \geq \frac{m}{8}}
	\p \left( \psize_t \leq \frac{m}{5},  \psize_{t-\frac 13} = l ,  a=(k_1,k_2,k_3) \right) \\
	& \notag + \sum_{\substack{(k_1,k_2,k_3) \in \N^3 : \\ (k_1,k_2,k_3) \notin \left[ \lfloor m/3-m^{2/3} \rfloor , \lfloor m/3+m^{2/3} \rfloor \right]^3}} \sum_{l \geq \frac{m}{8}}
	\p \left( \psize_t \leq \frac{m}{5},  \psize_{t-\frac 13} = l ,  a=(k_1,k_2,k_3) \right) \ .
	\end{align}
	The second summand in the above sum can be bounded by
	\begin{align*}
		& \sum_{\substack{(k_1,k_2,k_3) \in \N^3 : \\ (k_1,k_2,k_3) \notin \left[ \lfloor m/3-m^{2/3} \rfloor , \lfloor m/3+m^{2/3} \rfloor \right]^3}} \sum_{l \geq \frac{m}{8}}
		\p \left( \psize_t \leq \frac{m}{5},  \psize_{t-\frac 13} = l ,  a=(k_1,k_2,k_3) \right) \\
		& \leq \sum_{\substack{(k_1,k_2,k_3) \in \N^3 : \\ (k_1,k_2,k_3) \notin \left[ \lfloor m/3-m^{2/3} \rfloor , \lfloor m/3+m^{2/3} \rfloor \right]^3}}
		\p \left(  a=(k_1,k_2,k_3) \right)\\
		& \leq \sum_{\substack{k \in \N : \\ k \notin \left[  m/3-m^{2/3} , m/3+m^{2/3} \right]}}
		\p \left(  a_1 = k \right) + 
		\p \left(  a_2 = k \right) + 
		\p \left(  a_3 = k \right) \\
		& =	3 \sum_{\substack{k \in \N : \\ k \notin \left[  m/3-m^{2/3} , m/3+m^{2/3} \right]}}
		\p \left(  a_1 = k \right) 
		\overset{\text{\Cref{lem:Chernoff}}}{\leq} 6e^{-m^{1/3}} \leq m^{-2}\text.
	\end{align*}
	The equality follows because $a_1$, $a_2$, $a_3$ are identically distributed, and the last two inequalities hold for large enough $m$. 
	Inserting this into \eqref{eq:conditioning on arrivals}, we see that
	\begin{align*}
		\notag &\p \left( \psize_t \leq \frac{m}{5},  \psize_{t-\frac 13} \geq \frac{m}{8} \right) \leq 
		\sum_{\substack{k_1,k_2,k_3\\=\lfloor m/3-m^{2/3} \rfloor}}^{\lfloor m/3+m^{2/3} \rfloor} \sum_{l \geq \frac{m}{8}}
		\p \left( \psize_t \leq \frac{m}{5},  \psize_{t-\frac 13} = l ,  a=(k_1,k_2,k_3) \right) + m^{-2}
	\end{align*}
	for all large enough $m$. Our next goal is to show that
	\begin{align}\label{eq:toshow patient1}
		& \p \left( \psize_t \leq \frac{m}{5},  \psize_{t-\frac 13} = l ,  a=(k_1,k_2,k_3) \right) 
		\leq m^{-2} \p \left( \psize_{t-\frac 13} = l ,  a=(k_1,k_2,k_3) \right) \text,
	\end{align}
	as this already implies that
	\begin{align*}
		\notag &\p \left( \psize_t \leq \frac{m}{5},  \psize_{t-\frac 13} \geq \frac{m}{8} \right) \leq 
		\sum_{\substack{k_1,k_2,k_3\\=\lfloor m/3-m^{2/3} \rfloor}}^{\lfloor m/3+m^{2/3} \rfloor} \sum_{l \geq \frac{m}{8}}
		\p \left( \psize_t \leq \frac{m}{5},  \psize_{t-\frac 13} = l ,  a=(k_1,k_2,k_3) \right) + m^{-2}\\
		& \leq  \sum_{\substack{k_1,k_2,k_3\\=\lfloor m/3-m^{2/3} \rfloor}}^{\lfloor m/3+m^{2/3} \rfloor} \sum_{l \geq \frac{m}{8}} m^{-2}
		\p \left( \psize_{t-\frac 13} = l ,  a=(k_1,k_2,k_3) \right) + m^{-2}
		\leq 2m^{-2}\text.
	\end{align*}
	By the law of conditional probability, it suffices to show
	\begin{align}\label{eq:toshow patient2}
		\p \left( \psize_t \leq \frac{m}{5} \ \big| \ \psize_{t-\frac 13} = l ,  a=(k_1,k_2,k_3) \right)
		\leq m^{-2} 
	\end{align}
	for all $k_1, k_2, k_3 \in \left[ \lfloor m/3-m^{2/3} \rfloor ,  \lfloor m/3 + m^{2/3} \rfloor \right]$ and $l \geq \frac{m}{8}$, because this already implies \eqref{eq:toshow patient1}.
	
	For $i\in \{1,2,3\}$, let $v_{1}^i, v_2^i, \ldots, v_{k_i}^i$ be the $k_i$ many agents that arrive during the time interval  $\left[t-\frac{5-i}{3},t-\frac{4-i}{3}\right)$. We define the random variable $K_1$ by
	 \begin{align*}
	K_1 = \sum_{j=1}^{k_1} \mathbf{1}_{\left\{ v_{j}^1 \text{ is in the pool at time } t-\frac{1}{3} \right\}}\text,
	 \end{align*}
	where~$\mathbf 1$ denotes the indicator function. From our conditions, we already know that $K_1 \leq k_1$. 
	
	In the next step, we make use of the following observation: For each agent $x$ that arrives during the interval $\left[t-\frac{4}{3},t-1\right)$ and each agent $y$ that arrives during the time interval $\left[t-1,t-\frac{1}{3}\right)$, the chances of agent $y$  of making it to the pool at time $t-\frac{1}{3}$ are at least as high as for agent $x$, as the agent $x$ witnesses at least as many critical agents as $y$. 
	Moreover, since $k_1,k_2,k_3 \in \left[\lfloor \frac{m}{3} -m^{2/3} \rfloor, \lfloor \frac{m}{3} +m^{2/3} \rfloor\right]$, it holds that $\frac{k_1}{k_1+k_2+k_3} \leq \frac 25$ for $m$ large enough. 
	We will assume that $\frac{k_1}{k_1+k_2+k_3} \leq \frac 25$ henceforth.
	Thus, we can infer that
	\begin{align*}
	\E\left[ K_1 \ \big| \ \psize_{t-\frac{1}{3}} = l ,  a=(k_1,k_2,k_3) \right] \leq l \frac{k_1}{k_1+k_2+k_3} \leq \frac 25 l\text.
	\end{align*}
	
	We want to translate this bound on $K_1$, that holds in expectation, to a bound that holds with high probability, i.e., our next goal is to show that $K_1 < l/2$ with high probability. The distribution of $K_1$ highly depends on the arrival/departure-process, as agents that arrive at a later point in time typically witness less critical agents up to time $t-\frac{1}{3}$. 
	We expect $K_1$ to be maximal, when all agents getting critical in the time frame $\left[t-\frac{4}{3} , t-\frac{1}{3} \right)$ get critical just before time step $t-\frac{1}{3}$, as all agents arriving in the time frame $\left[t-\frac{4}{3} , t-\frac{1}{3} \right)$ have the same chance of making it to the pool $\pool_{t-\frac{1}{3}}$ in this case. 
	In all other cases, agents that arrive earlier always have a lower chance of making it to the pool. 
	Thus, we can bound the probability of $K_1>\frac{l}{2}$ with the outcome of a different experiment, namely an urn experiment, that we describe now.
	
	Let $n=k_2+k_3$ and consider an urn with $N=k_1+k_2+k_3=k_1+n$ balls in it, where $k_1$ of them are red and $k_2+k_3$ of them are blue. We draw $l\leq N$ balls out of this urn without replacement. Let $\Tilde{K_1}$ be the number of red balls taken. 
	We defer the technical proof of the next steps to two lemmas shown in \Cref{app:patient}.

\begin{restatable}{lemma}{stochdom}\label{lem:stochdom}
	$\tilde{K_1}$ stochastically dominates $K_1$.
\end{restatable}

    This lemma implies that we only have to bound the probability that $\tilde{K_1} \geq l/2$.
    The crucial step for this is done in our second lemma.

\begin{restatable}{lemma}{urnbound}\label{lem:urnbound}
    It holds that $\p\left( \tilde{K_1} \geq \frac{l}{2} \right) \leq 2m \, 0.98^{m/8}$.
\end{restatable}

	Hence, combining \Cref{lem:stochdom,lem:urnbound}, it holds that 
	 \begin{align*}
		\p\left( K_1 \geq \frac{l}{2} \right) \leq \p\left( \tilde{K_1} \geq \frac{l}{2} \right) \leq 2m 0.98^{m/8}
		\leq 0.998^m \leq m^{-3},
	 \end{align*}
	where the last two inequalities hold for $m$ large enough. If $K_1 \leq \frac{l}{2}$, then	
		$\psize_t \geq \psize_{t-\frac 13} - 2 K_1 + \numbarr_{(t-\frac{1}{3},t)} \geq \numbarr_{(t-\frac{1}{3},t)}$.
	So, we can conclude in particular that
	\begin{align*}
		&\p \left( \psize_t \leq \frac{m}{5} \ \big| \ \psize_{t-\frac 13} = l ,  a=(k_1,k_2,k_3) \right) \\
		&
		\leq
		\p \left( K_1 > \frac{l}{2} \ \big| \ \psize_{t-\frac 13} = l ,  a=(k_1,k_2,k_3) \right)
		+
		\p \left( \numbarr_{(t-\frac{1}{3},t)} < \frac{m}{5}  \ \big| \ \psize_{t-\frac 13} = l ,  a=(k_1,k_2,k_3) \right) \\
		& \overset{\text{\Cref{lem:Chernoff}}}{\leq} m^{-3} + 	\p \left( \numbarr_{(t-\frac{1}{3},t)} < \frac{m}{5}  \right) \leq m^{-2}
	\end{align*}
	for $m$ large enough, which finally shows \eqref{eq:toshow patient2}. For the second inequality we used that $\numbarr_{(t-\frac{1}{3},t)}$ is independent of $\psize_{t-\frac 13}$ and $a=(a_1,a_2,a_3)$. Together with \eqref{eq:law_of_total_prob}, this already implies that
	\begin{align*}
	\p\left( \psize_t \leq \frac{m}{5} \right) \leq m^{-1}
	\end{align*}
	for $m$ large enough. With this result we get that
	\begin{align*}
	&\p \left( \text{an agent arriving at time $t-1$ perishes } \right)\\
	&\leq \p \left( \text{an agent getting critical at time $t$ perishes } \right)\\
	&= \p \left( \text{an agent getting critical at time $t$ perishes } \ \big| \ \psize_t\geq \frac{m}{5} \right) \p\left( \psize_t\geq \frac{m}{5} \right) \\
	&
	+ 
	\p \left( \text{an agent getting critical at time $t$ perishes } \ \big| \ \psize_t < \frac{m}{5} \right) \p\left( \psize_t < \frac{m}{5} \right)\\
	&
	\leq \left(1-\frac{d}{m}\right)^{\frac m5 -1} + m^{-1}
	\underset{m \to \infty}{\longrightarrow} e^{-\frac d5}.
	\end{align*}
This proves the desired loss bound of $\pat$.
\end{proof}

We remark that \Cref{lem:patient_lb} also holds for an arbitrary constant sojourn time. 
In fact, \Cref{lem:timechange}, which deals with time shift, also holds for the patient algorithm (with the identical proof).
Hence, we obtain the following corollary. 

\begin{corollary}
    Let $k>0$. 
    Assume that the \dtimename is distributed according to the degenerate probability measure $\dep$ with $\dep(\{k\}) = 1$. 
    Then, it holds that
	\begin{align*}
	\loss[\pat] \leq e^{-\frac {kd}5}\text.
	\end{align*}
\end{corollary}

\subsection{Waiting Time of the Patient Algorithm}

For the patient algorithm, at least half of the agents are held in the market for their maximum possible sojourn.
Indeed, we can split the agents into those that stay until they would perish and those that get matched by those agents.
We can then pair each agent in the latter set with the unique agent in the former set that they are matched with.
Hence, under unit \dtimenamepl and for arrivals until time $\Tmax-1$, the expected waiting time is at least $1/2$, a fixed constant.
In our next result, we generalize this observation for a general distribution of the maximum sojourn time.
We obtain a constant average waiting time, where the constant depends on how much mass the \distname has close to a \dtimename of~$0$.
The idea of the proof of this statement follows a similar reasoning as the above pairing.
We divide the set of agents into a large set of agents whose maximum sojourn time is high and a small set of agents with low maximum sojourn time.
At least half the agents in the first set which do not match with an agent in the second set have to sojourn in the market for a long time.
	
\begin{restatable}{proposition}{wtPatLb}\label{prop:pattime}
	Let $h > 0$ and assume that the \dtimename is distributed according to a probability measure $\dep$ with $\dep \left( \left[h, \infty\right] \right) > \frac{4}{5}$. Then, for sufficiently large $m$ and $\Tmax$, it holds that
	\begin{align}\label{eq:patient waiting lowerbound}
	\waitobj[\pat](m,\Tmax) \geq \frac{h}{4} \text.
	\end{align}
\end{restatable}

\begin{proof}
	Assume that the agents $v_1,\ldots,v_K$ arrive up to time $T-h$, and the agents $u_1,\ldots,u_L$ arrive in the time interval $\left(T-h,T\right]$. 
	We will assume that $20L\leq K$, which holds with high probability for $T$ large enough.
	
	We will first focus on the agents $v_1,\ldots,v_K$. 
	The expected number among them with a maximal sojourn time in $ \left[h, \infty\right]$ is $\dep \left( \left[h, \infty\right] \right)K$. We call such an agent an agent with high maximal sojourn. By the strong law of large numbers there will be at least $\frac{4K}{5}$ such agents with high probability, as $K \to \infty$. We denote by $\mathcal{A}(m,T)$ the event that $20L\leq K$ and that at least $\frac{4K}{5}$ agents have a high maximal sojourn. Then $\limsup_{m,T\to \infty} \p \left(\mathcal{A}(m,T)\right) = 1$.	
	 Assume that there are at least $\frac{4K}{5}$ agents with high maximal sojourn. At most $\frac{K}{5}+L$ of them can leave the pool by matching with an agent that does not have a high maximal sojourn or arrived after time $T-h$.
	So at least $\frac{4K}{5} - \left( \frac{K}{5} + L \right) =\frac{3K}{5}-L$ agents of $v_1,\ldots,v_K$ that have a high maximal sojourn need 
    to stay in the market for a time of at least $h$, or they match with another agent that of $v_1,\ldots,v_K$ with high maximal sojourn. Since the latter case can only affect half of them, we obtain a total waiting time of at least 
	\begin{align*}
		\frac{1}{2}\left(\frac{3K}{5}-L\right)h \geq 0.275Kh  
	\end{align*}
	where we used the assumption $20L\leq K$ in the last step. So in particular we have that
	\begin{align}\label{eq:intermediate claim}
		\waitobj[\pat](m,T)  
		\geq
		\frac 1{m\Tmax}\E \left[0.275 Kh  \mathbf{1}_{\mathcal{A}(m,T)} \right]
	\end{align}

    The division by $m\Tmax$ is for getting from total to average waiting time.
    
    Now, $K$ is a random variable with Poisson$(m(T-h))$ distribution. So in particular its second moment $\E\left[K^2\right]$ is given by $(m(T-h))^2+m(T-h)$. Thus we get by Cauchy-Schwarz that
	\begin{align*}
		&\E \left[0.275 Kh  \mathbf{1}_{\mathcal{A}(m,T)^C} \right]
		\leq h \sqrt{\E\left[K^2\right]}\sqrt{\p \left(\mathcal{A}(m,T)^C\right)}\\
		&\leq h \sqrt{ (m(T-h))^2+m(T-h) } \sqrt{\p \left(\mathcal{A}(m,T)^C\right)}
        \\
        &
        \leq h \sqrt{ 2 (m(T-h))^2 } \sqrt{\p \left(\mathcal{A}(m,T)^C\right)}
        \leq 2h m(T-h)  \sqrt{\p \left(\mathcal{A}(m,T)^C\right)},
	\end{align*}
    where the second to last inequality holds for all $T$ large enough. Inserting this into \eqref{eq:intermediate claim}, we see that
	\begin{align*}
		&\waitobj[\pat](m,T)  \geq \frac 1{m\Tmax}\E \left[0.275 Kh  \mathbf{1}_{\mathcal{A}(m,T)} \right]\\
		&
		=
		\frac 1{m\Tmax}\left(\E \left[0.275 Kh  \right] - \E \left[0.275 Kh  \mathbf{1}_{\mathcal{A}(m,T)^C} \right]\right) \\
		&
		\geq \frac 1{m\Tmax}\left(0.275 h \E \left[K \right]  - 2 h m(T-h) \sqrt{\p \left(\mathcal{A}(m,T)^C\right)}\right)
        \\
        &
        =
        \frac {h}{m\Tmax}\left(0.275 m (T-h)  - 2 h m(T-h) \sqrt{\p \left(\mathcal{A}(m,T)^C\right)}\right)
        \geq
		0.25 h = \frac{h}{4},
	\end{align*}

    We use $\E \left[K \right] = m (T-h)$.
    Moreover, the last inequality holds for $m,T$ large enough, since $\p \left(\mathcal{A}(m,T)^C\right) \to 0$ for $m,T \to \infty$.
\end{proof}

\section{Simulations}

In the previous sections, we have provided theoretical bounds on the loss of the greedy and patient algorithm, particularly for the case of unit \dtimenamepl.
These bounds are guaranteed to hold for $m$ and $T$ tending to infinity and give an indication on the market behavior for fixed $m$ or $T$.
In order to complete the picture, we provide simulations for rather small matching markets. 
Note that such markets are often of large scale in reality, see, e.g., \citet{Unve10a} for a discussion of kidney exchange markets and \citet{BCFY14a} for a discussion of child adoption markets. 
Their respective data on the pool size (greater than $50\,000$) is confirmed by recent numbers from the Organ Procurement and Transplantation Network and the US Census 2020, respectively.
To be conservative, we run our simulations for arrival rates of $m = 500$ and $m = 1000$.

First, we investigate our two key objectives, i.e., 
the percentage of unmatched agents and the average waiting time of agents.
In particular, this analysis gives insights on the significance of error terms in practical examples.
Moreover, we compare the loss of the greedy algorithm with a variant of the greedy algorithm where the choice of the matching partner is not uniformly at random.

\begin{figure}
    \centering
	\begin{subfigure}[t]{0.47\textwidth}
	\begin{tikzpicture}
		\begin{axis}[
			xlabel = \dparaname,
			ymode = log,
			ylabel = loss,
			xmin = 0,
			xmax = 15,
			ymin = 1e-6,
			ymax = 1,
			width = \textwidth,
			legend style={font=\tiny},
		]
		\addplot[
			domain = 0:15,
			samples = 21,
			color = red, 
			style = dashed]
			{exp(-x/(2*ln(2)))};
		\addlegendentry{GDY ub}
		\addplot[
			color=red,
			mark=o, 
			only marks
			]
			coordinates {
				(1, 0.2856215783816214)(2, 0.12320273815034577)(3, 0.057778503220431314)(4, 0.027594554311836668)(5, 0.013360191811966633)(6, 0.006591974308818731)(7, 0.003141411524909763)(8, 0.001560165443586359)(9, 0.0007545623538347758)(10, 0.0003769513732728478)(11, 0.00017497042999733045)(12, 8.799102491545862e-05)(13, 4.203875973647703e-05)(14, 2.4989679262464602e-05)(15, 1.0994854408136993e-05)
			};
		\addlegendentry{GDY}
		\addplot[
			color=blue,
			mark=x, 
			only marks
			]
			coordinates {
				(1, 0.2860570338632315)(2, 0.12215991443508162)(3, 0.05790535550231477)(4, 0.02750520398419427)(5, 0.013119507311595042)(6, 0.0065560008394879125)(7, 0.0032426804423020127)(8, 0.001600657459303709)(9, 0.0006896710269201591)(10, 0.00036888970197910826)(11, 0.00020318611848453184)(12, 9.497207820900656e-05)(13, 4.59726003302032e-05)(14, 2.599285196570943e-05)(15, 1.0021074319293474e-05)
			};
		\addlegendentry{PAT}
		\addplot[
			domain = 0:15,
			samples = 21,
			color = red,
			style = dotted]
			{.5*exp(-x)};
		\addlegendentry{GDY lb}
		\addplot[
			domain = 0:15,
			samples = 21,
			color = blue,
			style = dotted]
			{exp(-2*x)};
		\addlegendentry{lb}
			
		\end{axis}
	\end{tikzpicture}
	\caption{
		Logarithmic plot of average proportion of unmatched agents for $\gdy$ and $\pat$.
	\label{fig:greedy_patient_loss}}
	\end{subfigure}
	\hfill
	\begin{subfigure}[t]{0.47\textwidth}
	\begin{tikzpicture}
		\begin{axis}[
			boxplot/draw direction=y,
			xtick = {3,4,5,6,7},
			ymin = 0,
			xlabel = \dparaname,
			ylabel = loss,
			width = \textwidth
		]
			\addplot+ [boxplot prepared={
				draw position=3,
				lower whisker = 0.05338577696960114,
				lower quartile = 0.05601562972215509,
				median = 0.05694610667113194,
				upper quartile = 0.057595077113308774,
				upper whisker = 0.0595920008039393},
				color = red
			] coordinates {};
			\addplot+ [boxplot prepared={
				draw position=4,
				lower whisker = 0.024985508405125027,
				lower quartile = 0.026899763282179235,
				median = 0.02733426515468061,
				upper quartile = 0.02800195504062277,
				upper whisker = 0.030217129071170086},
				color = red
			] coordinates {};
			\addplot+ [boxplot prepared={
				draw position=5,
				lower whisker = 0.011682056997293425,
				lower quartile = 0.012929155409674842,
				median = 0.013230723044868102,
				upper quartile = 0.013630579136600558,
				upper whisker = 0.01495473432763916},
				color = red
			] coordinates {};
			\addplot+ [boxplot prepared={
				draw position=6,
				lower whisker = 0.005507834247991728,
				lower quartile = 0.0061464752525925564,
				median = 0.006418996781636926,
				upper quartile = 0.006669912376939353,
				upper whisker = 0.0073681283762313315},
				color = red
			] coordinates {};
			\addplot+ [boxplot prepared={
				draw position=7,
				lower whisker = 0.00230078226597043,
				lower quartile = 0.002962994618259492,
				median = 0.0031252568037668767,
				upper quartile = 0.0033058182619123276,
				upper whisker = 0.003891756734957251},
				color = red
			] coordinates {};
		\end{axis}
	\end{tikzpicture}
	\caption{
		Box-and-whisker plots describing average proportion of unmatched agents for $\gdy$ under small \dparaname.
		\label{fig:greedy_loss_boxplot}}
\end{subfigure}
\begin{subfigure}[t]{0.47\textwidth}
\begin{tikzpicture}
	\begin{axis}[
		xlabel = \dparaname,
		ymode = log,
		ylabel = loss,
		xmin = 0,
		xmax = 12,
		ymin = 1e-6,
		ymax = 1,
		width = \textwidth,
		legend style={font=\tiny},
	]
	\addplot[
		color=red,
		mark=o,
		only marks
		]
		coordinates {
			(1, 0.28696756794122097)(2, 0.12327469864353378)(3, 0.05775488720600714)(4, 0.027561581722109044)(5, 0.013493389728193299)(6, 0.006514332530698293)(7, 0.0031322647253421605)(8, 0.0014410313995257063)(9, 0.0008060588422954876)(10, 0.00034288239133977046)(11, 0.00019520770099385747)(12, 9.09881715377001e-05)(13, 3.3998878037024776e-05)(14, 2.00160128102482e-05)(15, 4.999580035277037e-06)
		};
	\addlegendentry{GDY random}
	\addplot[
		color=green,
		mark=square,
		only marks
		]
		coordinates {
			(1, 0.2776922246177107)(2, 0.10904771431429429)(3, 0.0440351913662995)(4, 0.017269250877885303)(5, 0.006517210633848713)(6, 0.0024149901459598437)(7, 0.0009439467786189548)(8, 0.0003356633296803306)(9, 0.00012483796032749492)(10, 3.804885472947264e-05)(11, 1.7009933801339983e-05)(12, 5.002150924897706e-06)(13, 9.99841025276981e-07)
		};
	\addlegendentry{GDY sojourn}
	\addplot[
		domain = 0:15,
		samples = 16,
		color = red,
		style = dotted]
		{.5*exp(-x)};
	\addlegendentry{GDY lb}
	\addplot[
		domain = 0:15,
		samples = 16,
		color = blue,
		style = dotted]
		{exp(-2*x)};
	\addlegendentry{lb}
		
	\end{axis}
\end{tikzpicture}
\caption{Logarithmic plot of average proportion of unmatched agents for $\gdy$ under different match selection. 
\label{fig:random_first_exit_loss}}
\end{subfigure}
\hfill
\begin{subfigure}[t]{0.47\textwidth}
	\centering
	\begin{tikzpicture}
		\begin{axis}[
			xlabel = \dparaname,
			ylabel = waiting time,
			xmin = 0,
			xmax = 20,
			ymin = 1,
			ymax = 30,
			ytick = {5,10,15,20,25,30},
			yticklabels = {$5^{-1}$,$10^{-1}$,$15^{-1}$,$20^{-1}$,$25^{-1}$,$30^{-1}$},
			y dir = reverse,
			width = \textwidth,
			legend style = {font=\tiny},
		]
		\addplot[
			domain = 0:20,
			samples = 21,
			color = red, 
			style = dashed]
			{x*5/6};
		\addlegendentry{GDY ub}
		\addplot[
			color=red,
			mark=o, 
			only marks
			]
			coordinates {
				(1, 2.267444246134425)(2, 3.455865756976548)(3, 4.7185242864597985)(4, 6.023168672164145)(5, 7.378060027240382)(6, 8.731340369840217)(7, 10.157193395844041)(8, 11.5781559301739)(9, 12.979319850436383)(10, 14.45923199975333)(11, 15.88588911602022)(12, 17.33831739886411)(13, 18.804432738739912)(14, 20.225333234665182)(15, 21.703782250844842)(16, 23.164956154043107)(17, 24.55578191543234)(18, 26.052769806908653)(19, 27.48920594046567)(20, 29.03035625294548)
			};
			\addlegendentry{GDY}
			\addplot[
				domain = 0:20,
				samples = 21,
				color = red, 
				style = dotted]
				{x*20};
			\addlegendentry{lb}
		\end{axis}
	\end{tikzpicture}
	\caption{
		Inverse linear plot of average waiting time for $\gdy$.
		\label{fig:waiting_time}}
\end{subfigure}

\caption{Visualization of results of simulations. 
Each datapoint in the first, third, and fourth figures corresponds to $10$ runs for $m=1000$ and $T=100$. Each box-and-whisker plot in the second figure represents $200$ runs for $m=500$ and $T=100$. 
	Note that the linear behavior in the logarithmic plots of the first and third figure means that the decay is exponential.
	For reference, we add our theoretically obtained bounds in the large market limit, i.e., the upper bound on the performance of $\gdy$ (cf.~\Cref{thm:greedy_loss_upper_bound}), and our general (cf.~\Cref{thm:lossgenlb}) and $\gdy$-specific (cf.~\Cref{cor:greedylbconst}) lower bounds obtained in \Cref{sec:gdyanalysis}.\label{fig:simu}}
\end{figure}

We run an event-driven simulation with two event types (arrivals, critical times). At each step we process the next event time; almost surely only one event occurs at a time.
We start with an empty pool at time $0$, i.e., no agents are present initially.
Then, agents join the pool at Poisson rate $m$.
This means in particular that the difference between the arrival time of two successively arriving agents is distributed with respect to an exponential distribution with expectation $\frac{1}{m}$, i.e., the arrival of the next agent can always be simulated by evaluating a random variable.
Every agent stays for a unit \dtimename.
During the simulation, a matching algorithm is employed and used to determine pairs of agents from the pool to be matched.
Our focus lies on the greedy algorithm.
We evaluate statistics up to time $T$; with $T = 100$ and unit \dtimenamepl, boundary effects are negligible.

The results of our simulations are depicted in \Cref{fig:simu}. 
First, we consider the average proportion of unmatched agents for $\gdy$ and $\pat$ in \Cref{fig:greedy_patient_loss}. 
The simulations indicate an exponentially small loss with respect to~$d$ for both algorithms, which is in line with the bounds obtained in our theoretical findings. 
Interestingly, the average proportion of unmatched agents in both algorithms appears to be surprisingly similar, hinting at a possible connection between two basic algorithms using neither structural information of the compatibility relation nor timing information beyond arrivals and critical states of agents.
We provide an intuition for this observation in \Cref{sec:gdyEQUpat}.

In \Cref{fig:greedy_loss_boxplot}, we complement this aggregated view by displaying 
the distribution of unmatched agents of the greedy algorithm under small \dparanamepl by means of box-and-whisker plots.\footnote{
The whiskers denote the minimum and maximum of the 
proportion of unmatched agents, respectively, while the lower and upper bounds of the box represent the first and third quartile.
Finally, the middle line corresponds to the median.}
We can see that there is little spread, 
indicating the robustness of the theoretically proved statements for very sparse markets.

Next, we consider the assumption of $\gdy$ to perform matching decisions uniformly at random among the set of compatible agents.
In principle, it is also possible to use further structural or timing information in order to perform a selection.
If the social planner receives additional information on the remaining \dtimename of individual agents, one natural matching choice is to use the \emph{sojourn-based tie-breaking} rule, which selects the compatible agent with the least remaining \dtimename as the matching partner.\footnote{This agent is almost surely unique.}
This may avoid the perishing of an agent with small remaining sojourn time.
In fact, this algorithm performs even closer to our lower bound for the necessary loss of greedy algorithms of $\frac{1}{2} e^{-d}$, as can be seen in \Cref{fig:random_first_exit_loss}.
Note that this lower bound does not depend on the uniformly random selection of matched pairs, but even holds for an arbitrary tie-breaking mechanism when matching under $\gdy$.

Finally, we investigate our second objective, the average waiting time of agents for the greedy algorithm.
Our results are depicted in \Cref{fig:waiting_time}. 
Note that the upper bound on the waiting time in \Cref{lem:agg_waiting} already holds for relatively small arrival rates, approximately at $m\approx 125$. 
Our simulations show that the aggregated waiting time is consistently below the expected upper bound, again demonstrating the superiority of the greedy algorithm over the patient algorithm with respect to waiting time minimization.

\section{Discussion and Conclusion}\label{sec:conclusion}

We have analyzed instantaneous matching as captured by the greedy algorithm and matching at the moment of perishing as captured by the patient algorithm. 
We found that, beyond market thickness, the performance of the greedy algorithm strongly depends on the distribution of the \dtimenamepl. 
In \Cref{table:comparison}, we display loss bounds of the greedy and patient algorithm.
We showcase our results for the important special case of unit waiting times, and compare our bounds with the bounds for exponentially distributed waiting times considered by \citet{ALG20a}.
In particular, if agents are guaranteed to stay in the market for a minimum amount of time, then the greedy algorithm achieves an exponentially small loss.
This is striking because we operate in a thin market, whereas existing guarantees critically relied on market thickness (see \Cref{sec:thickness} for a discussion of market thickness in our model). 
In line with this, \citet{ALG20a} argue that information gain is essential because it can be used to achieve market thickness.
By contrast, we demonstrate that coarse knowledge about agent behavior can substitute for detailed departure information to achieve strong performance guarantees.

It is interesting to carve out more precisely the reason why the loss of the greedy algorithm is sensitive to the sojourn distribution.
Under $\gdy$, at least half of the agents do not get matched at their arrival (because every matched pair consists of a newly arrived agent and an agent from the pool).
Hence, loss is caused by the agents entering the pool.
For these, all agents present at their arrival are secondary because they are incompatible, and hence their matching opportunities depend on newly arriving agents.
Now, when comparing sojourn distributions like an exponential distribution with mean~$1$ and unit sojourn times, they lead to a similar pool size and an identical mean sojourn.
Moreover, compatibility is identical.
Hence, for every agent arriving during their sojourn, the probability of getting matched is also similar.
Thus, whether an agent entering the pool gets matched depends more crucially on the mere number of agents arriving during their sojourn.
This is essentially captured by the sojourn distribution.
Under a distribution like the exponential distribution, a rather long sojourn is possible, and such agents will match with a very high probability.
However, a very short sojourn also occurs with moderately high probability, and such agents are likely to perish.
By contrast, a guaranteed sojourn time leads to a guarantee on the number of matching opportunities, which avoids a high loss probability from ever occurring.

In addition, we believe that there are significant reasons to promote usage of the greedy algorithm beyond its good performance with respect to both of our measures.
First, the process of gaining knowledge of exact \dtimenamepl might be costly or lead to ethical concerns \citep{RBG15a}.
This is avoided when using an algorithm like the greedy algorithm that does not rely on such information. 

Second, it is debatable whether market thickness is desirable at all.
A large pool not only increases the waiting time but also indicates a backlog of unmatched agents and the risk of operational congestion.
In fact, the analysis of real-life data indicates that market thickness may lead to a reduced compatibility of agents and hence a worse outcome \citep{Fong20}.
Moreover, a thicker market might lead to a measurably higher loss:
In a quasiexperiment on a holiday property rental platform, \citet{LiNe20a} find that a sudden boost in market thickness leads to a loss of 5.6\% matches.
They explain this with search friction, i.e., figuring out good options in a limited time window.

This observation follows a general paradigm that market inefficiencies can be caused by an abundance of matching possibilities \citep{Roth18a}.
Even though our model does not explicitly contain a cost for identifying the compatibility of agents, these studies promote that maintaining a manageable pool size can be important.
Hence, using the greedy algorithm might yield low market congestion as an additional beneficial side effect.

To conclude the paper, we elaborate on two important aspects of our work that open up prospects for further investigation. 
First, we consider the loss equivalence of the greedy and patient algorithm under unit waiting times as observed in our simulations. 
Second, we discuss possible extensions of our work to the case of heterogeneous agents.

\begin{table*}[t!]\centering
		\caption{Bounds for the loss of $\gdy$ and $\pat$ under different \dtimenamepl. The bounds for unit \dtimenamepl are novel, the bounds for exponential \dtimenamepl are due to \citet[Theorem~4]{ALG20a}. The loss of $\gdy$ is exponentially small for unit \dtimenamepl.}\label{table:comparison}	
		\begin{tabular}{@{}lllllll@{}}
			\toprule
			& 
			\multicolumn{1}{c}{} & 
			\multicolumn{2}{c}{unit \dtimename} & 
			\phantom{a}& 
			\multicolumn{2}{c}{exponential \dtimename}
			\\ 
			\cmidrule{3-4} 
			\cmidrule{6-7}
			\alg && lower bound & upper bound && lower bound & upper bound
			\\
			\midrule
			\gdy && $\frac 12 e^{-d}$ & $e^{-\frac d{2\ln(2)}}$ && $\frac 1 {2d+1}$ & $\frac{\ln 2}d$ \\
			\midrule
			\pat && $e^{-2d}$ & $e^{-\frac d5}$ && $\frac {e^{-d}}{d+1}$ & $\frac 12 e^{-\frac d2}$ 
			\\
			\bottomrule
		\end{tabular}

\end{table*}

\subsection{Equivalence of Greedy and Patient Algorithm}\label{sec:gdyEQUpat}

While our theoretical results only show an exponential loss in both cases, \Cref{fig:greedy_patient_loss} suggests that the loss of the two algorithms is (approximately) identical. Indeed, there is some intuition why the loss should be approximately $\frac 12 e^{-\frac{d}{2 \ln(2)}}$ for both the greedy and the patient algorithm. In this section, we provide two different, non-rigorous arguments which lead to exactly the loss observed in our simulations in both cases. Unfortunately, these arguments are heavily based on steady-state analysis and are therefore hard to be made precise for {\sl non}-Markovian processes. Identifying a relationship of the two algorithms that establishes an identical loss is an intriguing problem for further research.

\paragraph*{Greedy algorithm} We start by reasoning about the pool size similar to the proof sketch in \Cref{sect:gdysketch}. Let $z_{\mathit{Eq}}$ denote the pool size at equilibrium. Assuming that the loss is negligible, half of the agents instantaneously match at arrival. The probability of not forming an edge at arrival is about
\begin{align*}
	\left(1-\frac{d}{m}\right)^{z_{\mathit{Eq}}} \approx e^{-d \frac{z_{\mathit{Eq}}}{m}} \overset{!}{=} \frac{1}{2}
\end{align*}
which suggests that the pool in equilibrium should be of size $\frac{\log(2)}{d}m$. If an agent $v$ does form an edge at arrival, but joins the pool, approximately $m$ agents will arrive during her maximum \dtimename and approximately $\frac{m}{2}$ of these agents will match at arrival. The probability that a specific agent, matching at arrival, matches with $v$ equals $\frac{1}{z_{\mathit{Eq}}}$, by symmetry. So assuming that $z_{\mathit{Eq}}$ stays approximately constant over time, the probability that $v$ perishes can be estimated as
\begin{align*}
	\left( 1- \frac{1}{z_{\mathit{Eq}}} \right)^{\frac{m}{2}} \approx 
	\left( 1- \frac{d}{m\log(2)} \right)^{\frac{m}{2}} \approx e^{-\frac d {2\log(2)}}\text.
\end{align*}
We conditioned on $v$ not matching at arrival, but joining the pool instead. This happens for approximately half of all agents. Thus, the total loss equals approximately $\frac{1}{2} e^{-\frac d {2\log(2)}}$.\\

\paragraph*{Patient algorithm} Again, we first want to control the pool size.  Assuming a negligible loss, for a fixed agent $v$, approximately $\frac{m}{2}$ agents get critical in the interval $\left[t_v, t_v+1\right]$, where $t_v$ is the arrival time of $v$. Under the assumption that the pool size stays constant over time, by symmetry, the probability that none of these agents matches with $v$ should equal
\begin{align*}
	\left(1-\frac{1}{z_{\mathit{Eq}}}\right)^{\frac{m}{2}} \approx e^{- \frac{m}{2\, z_{\mathit{Eq}}} } \overset{!}{=} \frac{1}{2}.
\end{align*}
Solving this yields $z_{\mathit{Eq}} \approx \frac{m}{2\log(2)}$. Now, the probability that an agent getting critical perishes equals approximately
\begin{align*}
	\left(1-\frac{d}{m}\right)^{z_{\mathit{Eq}}} \approx \left(1-\frac{d}{m}\right)^{\frac{m}{2 \log(2)}} \approx e^{-\frac{d}{2 \log(2)}}\text.
\end{align*}
Assuming the loss is small, approximately half of the agents get critical, i.e., they are not matched with an other agent before the end of their maximum \dtimename. Thus, the total loss among all agents equals approximately $\frac{1}{2} e^{-\frac d {2\log(2)}}$.
Hence, both arguments lead to an approximately equal loss.

\subsection{Heterogeneous Agents}

Many applications of dynamic matching markets comprise \emph{heterogenous} agents differing in their market behavior, for instance with respect to their urgency to be matched or with respect to their capability to form matches with other agent types. 
In this section, we give a brief overview of how to extend our results to markets with heterogeneous agents.

First, it is possible to extend our results to a setting with heterogeneous agents regarding their urgency to be matched, that is, agents differing with respect to their maximum \dtimenamepl.
A first extension is to consider a setting where agents have different guaranteed \dtimenamepl.
This allows us to directly extend our results: 
As long as all agents \emph{individually} satisfy the conditions on the maximum \dtimename, our results still apply. 
To be more explicit, consider, for example, the following scenario. 
There are two types of agents arriving to the market at random. 
The first type consists of \emph{urgent} agents, which only have a guaranteed maximum \dtimename of $\eps < 1$. 
The second type of agents consists of \emph{persistent} 
agents having a guaranteed maximum \dtimename of $1$.
We can model this with an urgency rate $q\in [0,1]$.
Agents still arrive with Poisson rate $m$ and any two agents are compatible with probability $p=\frac{d}{m}$, where $d$ is a \dparaname.
Additionally, upon their arrival, an agent is urgent with probability $q$ and persistent with probability $1-q$.
This probability is independent of other random events such as the arrival of an agent or the compatibility of an agent.
Hence, urgent agents arrive at the market at Poisson rate $m_1 = qm$, whereas persistent agents independently from urgent agents arrive at Poisson rate $m_2 = (1-q)m$.
As the \dtimename of every agent guarantees a maximum sojourn of $\epsilon$,
 \Cref{thm:epslowerbound} still holds and yields 
\begin{equation*}
    \loss[\gdy] \leq e^{-\frac{\eps d}{2\log(2)}}\text.
\end{equation*}

However, knowledge of the nature of the agents, i.e., their division into urgent and persistent agents allows us to derive a better loss bound.
In fact, we can extract a loss probability conditioned on a sojourn guarantee from the proof of \Cref{thm:greedy_loss_upper_bound}.

\begin{lemma}\label{lem:singleagentloss}
    Let $\eps >0$ and let $\ag_i$ be an agent with \dtimename $\sou_i$ arriving at time $t$.
    Then, under $\gdy$, it holds for large enough $m$ that 
    \begin{equation*}
        \p(\ag_i \text{ perishes}\mid \sou_i \ge \eps) \le e^{-\frac{\eps d}{2 \log(2)}}\text.
    \end{equation*}
\end{lemma}

\begin{proof}
    We define a good event as

    \begin{align*}
	\mathcal{G}_t := 
	&
	\left\{ \psize_t \leq  \frac{C_2 \log(2)}{d}m \right\}
	\cap \left\{ \psize_s \leq  \frac{C_3 \log(2)}{d}m \forall s\in \left[t,t+1\right] \right\}
	\cap \left\{ \numbarr_{\left[t,t+\eps\right]} \geq \eps(m- 30\log(m)m^{1/2})\right\} \text.
	\end{align*}

    This differs from the good event in the proof of \Cref{thm:greedy_loss_upper_bound} only with respect to the third part.
    Then, \eqref{eq:poolovertimebound} is adjusted as 
    \begin{align*}
		\p_\dep \left( \numbarr_{\left[t,t+\eps\right]} < \eps(m- 30\log(m)m^{1/2})\right) 
		&
		= \p_\dep \left( \numbarr_{\left[t,t+\eps\right]} < \eps m \left( 1- \frac{30\log(m)}{m^{1/2}} \right) \right)\\
		&
		\leq e^{- \frac{\eps m}{3} \left(\frac{30\log(m)}{m^{1/2}}\right)^2} \leq m^{-10}\text.
	\end{align*}

    Hence, the good event still applies with high probability.
    Moreover, \eqref{eq:singleaglossgood} turns into

	\begin{align*}
	\p(\ag_i \text{ perishes}\mid \mathcal{G}_t, \sou_i \ge \eps) \le \left( 1-\frac{d}{ 2 C_3 \log(2) m} \right)^{\eps(m-30\log(m)m^{1/2})} \underset{m\to \infty}{\longrightarrow} e^{-\frac{\eps d}{2 \log(2)}}\text.
	\end{align*}

    Inserting this into the analogue of \eqref{eq:singleagloss}, we obtain $\p(\ag_i \text{ perishes}\mid \sou_i \ge \eps) \le  e^{-\frac{\eps d}{2 \log(2)}}$ for $m$ large enough.
\end{proof}

We can get a refined loss bound for the heterogeneous market with urgent and persistent agents.

\begin{theorem}\label{thm:heterobounds}
    Consider the matching market with urgent and persistent agents that have a guaranteed \dtimename of $\eps$ and $1$, respectively.
    Then, for $d\geq 2$, we have
	\begin{equation*}
	   \loss[\gdy] \leq q \cdot e^{-\frac{\eps d}{2\log(2)}} + (1-q) \cdot e^{-\frac{d}{2\log(2)}}\text.
	\end{equation*}
\end{theorem}

\begin{proof}
    Consider an agent $\ag_i$ arriving at time $t$ with \dtimename $\sou_i$.
    By \Cref{lem:singleagentloss}, it holds that   
    \begin{align*}
        &\p\left(\text{an agent arriving at time $t$ perishes} \right) \\
        &= q \cdot \p\left(\text{an agent arriving at time $t$ perishes }\mid \sou_i \ge \eps\right) \\
        &\phantom{=} + (1-q)\cdot \p\left(\text{an agent arriving at time $t$ perishes }\mid \sou_i \ge 1\right) \\
        &\le q \cdot e^{-\frac{\eps d}{2\log(2)}} + (1-q) \cdot e^{-\frac{d}{2\log(2)}}\text.
    \end{align*}

    Inserting this to \Cref{lem:aux-loss} yields the desired bound.
\end{proof}

However, urgent agents might not even have any guaranteed \dtimename, i.e., they might have a very short sojourn with moderate probability.
As a second case, we consider the model where the urgent agents, instead of having a guaranteed sojourn time of $\eps$ have an exponentially distributed sojourn time with mean~$1$.
Hence, these agents might stay for a longer time, but they are possibly very urgent.
As before, we assume that an agent is urgent or persistent with probability $q$ and $1-q$, respectively.

In this case, high loss is inevitable.
However, it is only caused by the urgent agents.
In fact, for $0\le s < t\le \Tmax$, let $\arr^{P}_{\left[s,t\right]}$ denote the set of persistent agents arriving in the time interval $[s,t]$.
We define the \emph{loss of persistent agents} of algorithm $\alg$ as 
$$\loss^P := \limsup_{m,\Tmax\to \infty} \frac{\E[|\arr^P_{[0,\Tmax]} \setminus (\alg(m,\Tmax) \cup \pool_\Tmax)|]}{(1-q)m\Tmax}\text.$$

Then, while loss is inevitable, the loss of $\gdy$ among persistent agents is exponentially small. 

\begin{theorem}\label{thm:heteroexp}
    Consider the matching market with urgent and persistent agents, where the \dtimename of the former is exponentially distributed with mean~$1$ and the latter have a guaranteed \dtimename of~$1$.
    Then, $\gdy$ satisfies

    \begin{equation*}
        \loss[\gdy] \geq \frac{q}{12}\cdot \frac {1}{d} \quad\text{ and }\quad \loss[\gdy]^P\le e^{-\frac{d}{2 \log(2)}}\text.
    \end{equation*}
\end{theorem}

\begin{proof}
    As in the discussion after the proof of \Cref{thm:highloss}, we can see that the \dtimename where agents have an exponential \distname with mean~$1$ with probability $q$ satisfies the conditions of \Cref{thm:highloss} with $c = \frac q2$ and $\eps_0 = 1$.
    Hence, 
    \begin{equation*}
        \loss[\gdy] \geq \frac{q}{12}\cdot \frac {1}{d}\text.
    \end{equation*}

    In addition, following the proof of \Cref{lem:aux-loss}, one can derive
    \begin{align}
	\loss^P = \limsup_{m,\Tmax\to \infty} \frac 1{\Tmax} \int_{0}^{\Tmax} \p\left(\text{an agent $\ag$ arriving at time $t$ perishes}\mid \ag \text{ persistent}\right) dt \text.\label{eq:persistentloss}
	\end{align}

    By \Cref{lem:singleagentloss}, we have that 
    $$\p\left(\text{an agent $\ag$ arriving at time $t$ perishes}\mid \ag \text{ persistent}\right) \le e^{-\frac{d}{2 \log(2)}}\text.$$
    
    Inserting this into \eqref{eq:persistentloss}, we obtain the desired loss bound for persistent agents.
\end{proof}

Finally, we want to briefly discuss another type of heterogeneity considered in the literature that differentiates easy-to-match and hard-to-match agents \citep{ABJM19a,ANS22a}, which are distinguished in terms of compatibility. 
In our setting, this can be modeled by having agents arrive with Poisson rate $m$ where membership to a class is determined by a random variable. 
Then, the compatibility of agents is decided according to different \dparanamepl $d_{\mathit{HH}}$, $d_{\mathit{EH}}$, and $d_{\mathit{EE}}$ with $d_{\mathit{HH}} \le d_{\mathit{EH}} \le d_{\mathit{EE}}$ representing the density of edges between two hard-to-match, two different-type, and two easy-to-match agents, respectively. 
This leads to matching probabilities $p_{\mathit{HH}} = d_{\mathit{HH}}/m$, $p_{\mathit{EH}} = d_{\mathit{EH}} / m$, and $p_{\mathit{EE}} = d_{\mathit{EE}} / m$. 
It is possible to extend our results under certain assumptions on the relationship between the different \dparanamepl. Still, a rigorous treatment of this model is beyond the scope of this paper, and we leave it as an intriguing direction for further research. 

\Cref{thm:heterobounds,thm:heteroexp} give insights about the loss of the $\gdy$ algorithm in settings with heterogeneous agent types.
Using $\gdy$ in these settings is useful because it maintains its other advantages.
For example, our results on the waiting time still apply.
By \Cref{lem:agg_waiting}, we obtain an inverse linear average waiting time, which is close to optimal for most $d$ according to \Cref{prop:waitlb}.
However, a reasonable approach when dealing with heterogeneous agents is to algorithmically react to their type.
For example, following the strong performance of $\gdy$ for a guaranteed \dtimename and of $\pat$ for exponentially distributed \dtimenamepl, one could match the former greedily and the latter patiently.
This gives rise to study hybrid algorithms that meet agent demands by use different algorithms for different agent types.
We leave their consideration as an interesting avenue for future work.

\section*{Acknowledgements}
An extended abstract of this article appeared in the 
Proceedings of the 24th ACM Conference on Economics and Computation (2023).
This work was supported by the Deutsche Forschungsgemeinschaft (German Research Foundation) under grants {BR~2312/11-2}, {BR~2312/12-1}, and {277991500/GRK2201}. 
We thank Matthias Greger, Shunya Noda, and Peng Shi as well as the reviewers from ACM EC and JET for valuable feedback. 
The source code for the simulations is publicly available at \url{https://github.com/stefan-kober/dynamicmatching}.

\clearpage
\appendix

\section*{Appendix}

In the appendix, we present proofs of technical and auxiliary lemmas.

\section{Technical Lemmas in \Cref{sec:lossguarantee}}\label{app:greedyloss}

This section contains the proofs of technical lemmas in our analysis of the loss guarantee of the greedy algorithm. 
First, we prove our coupling lemma.

\coupling*

\begin{proof}
	We consider the procedure where we have two pools to which two identical copies of agents arrive according to the same Poisson process of rate $m$. 
    The first pool acts according to the greedy algorithm with \dtimename $\sou$ and initial pool $\pool_0$ and the second one according to the greedy algorithm under $\sou_{\infty}$ and initial pool $\pool_0^\infty$.
    Let  $\dep_\infty$ be the distribution of $\sou_{\infty}$, which is given by $\dep_\infty(\{+\infty\}) = 1$.
	Recall that, attached to each arriving agent~$\ag$, there is a random vector $U(\ag) \in \{0,1\}^{\N}$ storing the random compatibility with other agents, as described in \Cref{sec:model}. In particular, if the pool is of size $K$ and an agent $\ag$ arrives, then the agent directly forms an edge if and only if $U_j(\ag) = 1$ for some $j \in \{1,\ldots, K\}$. If there are compatible agents in the pool, then the agent $\ag$ selects a partner among them chosen uniformly at random.
    Hence, we couple the two systems by using the same arrival times (one Poisson process of rate $m$), the same compatibility vectors $U(\ag)$ for each arriving agent $\ag$, and the same tie-breaking vectors. The only difference is the \dtimename: $\sou$ versus $\sou_\infty$.

	We show the assertion \eqref{eq:coupling1} by a case distinction. The statement is clear for the case $t=0$. For $t > 0$, the only interesting times are the ones where agents arrive, as agents can only perish from the first pool. Hence, $\psize_t^\infty$ cannot get smaller without the arrival of a new agent.

	If $\psize_t<\psize_t^\infty$ and an agent arrives at time $t$, then $\psize_t$ can increase by at most $1$ and $\psize_t^\infty$ can decrease by at most $1$, so \eqref{eq:coupling1} is still satisfied after the arrival of the agent. 
	If $\psize_t = \psize_t^\infty$ and an agent arrives at time~$t$, then the agent matches under the measure $\dep$ if and only if she matches under the measure~$\dep_\infty$.
    This holds because in both systems the decision to match for an arrival at $t$ depends only on whether one of the first $K$ components of $U(\ag)$ equals~$1$; these components are shared across the two systems by our coupling.
    So \eqref{eq:coupling1} is also satisfied after the arrival of the agent. If $\psize_t = \psize_t^\infty +1$ and an agent arrives, then if the agent does not match with an element in the first pool, she also does not match with an element in the second pool, so both pools increase in size.
	So it is not possible that $\psize_t$ gets larger while $\psize_t^\infty$ gets smaller at the same time which shows that \eqref{eq:coupling1} is always satisfied.
\end{proof}

Now, we prove the two lemmas regarding the pool size.

\greedyUbPool*

\begin{proof}
	Let $j\geq \frac{C_1 \log(2) m}{d}$. Then, we have
	\begin{align*}
	p(j,j+1)=\left(1-\frac{d}{m}\right)^j \leq \left(1-\frac{d}{m}\right)^\frac{C_1 \log(2) m}{d} \leq \frac{1}{2} e^{-\frac{10}{\log(m)}}\text.
	\end{align*}
	On the other hand, we also have that $p(j,j-1)\geq 1/2$. Inserting this into \eqref{eq:balance_equations}, we get
	\begin{align*}
	\rho(j+1) = \rho(j) \frac{p(j,j+1)}{p(j+1,j)} \leq e^{-\frac{10}{\log(m)}} \rho(j)\text.
	\end{align*}
	Thus, we get inductively
	\begin{align*}
	\rho\left( \left\lceil \frac{C_1 \log(2) m}{d} \right\rceil + n \right) \leq 
	e^{-n\frac{10}{\log(m)}} \rho\left( \left\lceil \frac{C_1 \log(2) m}{d} \right\rceil\right)  \leq e^{-n\frac{10}{\log(m)}}\text.
	\end{align*}
	Summing this over different values of $n$, we get for $m$ large enough
	\begin{align*}
	\sum_{k > \frac{C_1 \log(2) m}{d} + \frac 32 \log(m)^2} \rho(k) & \leq \sum_{k=\lceil \log(m)^2 \rceil }^{\infty} \rho\left( \left\lceil \frac{C_1 \log(2) m}{d} \right\rceil + k \right) 
	\leq \sum_{k=\lceil\log(m)^2\rceil}^{\infty} e^{-k\frac{10}{\log(m)}}\\
	&
	= e^{-\lceil \log(m)^2 \rceil \frac{10}{\log(m)}} \sum_{k=0}^{\infty} e^{-k\frac{10}{\log(m)}}\\
	&
	\leq m^{-10} \frac{1}{1-e^{-\frac{10}{\log(m)}}} 
	\leq m^{-10} \frac{\log(m)}{5} \leq m^{-9}\text.
	\end{align*}
	There, we used that $1-e^{-\frac{10}{\log(m)}} \geq \frac{5}{\log(m)}$ for $m$ large enough in the second to last step.
\end{proof}

\poolgdyub*

\begin{proof}
	We seek an upper bound for the probability that the pool size reaches $C_3 \frac{\log(2)}{d} m$ when starting at a size of at most $C_2 \frac{\log(2)}{d} m$. Let $M=\lceil C_2 \frac{\log(2)}{d} m \rceil$. When the pool has a size of at least $M$ and an agent arrives, then the probability that the agent joins the pool, i.e., that she is not matched directly, equals
	\begin{equation}\label{eq:drift}
	\left( 1-\frac{d}{m} \right)^{M} \leq \left( 1-\frac{d}{m}\right)^{C_2 \frac{\log(2)}{d} m} 
	\leq e^{- C_2 \log(2)}
	\leq \frac{1}{2} e^{-\frac{10}{\log(m)}}.
	\end{equation}
    
	There, we used \Cref{lem:exp-estimate} in the second inequality.
	Whenever a new agent arrives, the pool size can increase or decrease by $1$, where we have a tendency towards a decrease of the pool size. Thus, in particular, if the pool starts at size $M$, the probability that the pool size increases beyond $C_3 \frac{\log(2)}{d} m$ before decreasing to $C_2 \frac{\log(2)}{d} m$ is bounded by the probability of this event for the case of a random walk with drift to the left of at least $\frac{1}{2} e^{-\frac{10}{\log(m)}}$. The Euclidean distance between $C_2 \frac{\log(2)}{d} m$ and $C_3 \frac{\log(2)}{d} m$ is $2\log(m)^2$. Hence, there are more than $\log(m)^2$ integer points between these values, for $m\geq 3$.
	We can apply \Cref{lem:random walk with drift} with $\frac{1}{2}-\eps = \frac{1}{2}e^{-\frac{10}{\log(m)}}$ in order to bound the probability that the Markov chain with drift crosses such an interval. Thus, the probability of this event is bounded by 
	\begin{equation*}
	\left( 2 \frac{1}{2} e^{-\frac{10}{\log(m)}} \right)^{\log(m)^2} = m^{-10}.
	\end{equation*}
	If the pool size starts at some value less than $\frac{C_2 \log(2)}{d} m$ and then grows to some value  greater than $\frac{C_3 \log(2)}{d} m$ then at some time interval in between the pool size starts with size $M$ and reaches size $\frac{C_3 \log(2)}{d} m$ before going back to $M-1$. 
	
	Assume that $k$ agents arrive during the time interval $\left[ 0,1 \right]$, i.e., $\numbarr_{\left[0,1\right]}=k$. For $i\in \{1,\ldots,k\}$, the probability that the pool size is $M$ when the $i$th agent arrives and then increases beyond $\frac{C_3 \log(2)}{d} m$ before going to $M-1$ is bounded by $m^{-10}$ by the previous calculations. Thus, by a union bound, the probability that the pool size increases beyond $\frac{C_3 \log(2)}{d} m$ is bounded by $km^{-10}$. Conditioning on the number of arrivals in the time interval $\left[ 0,1 \right]$, we get
	\begin{align*}
	&\p_{\dep}\left( \psize_t > \frac{C_3 \log(2) m }{d} \text{ for some } t \leq 1 \  \big| \ \pool_0=\pool  \right) \\
	&
	= \sum_{k=1}^{\infty }
	\p_{\dep}\left( \psize_t > \frac{C_3 \log(2) m }{d} \text{ for some } t \leq 1 \  \big| \ \pool_0=\pool, \numbarr_{\left[0,1\right]=k}  \right) \p \left( \numbarr_{\left[0,1\right]}=k \right)\\
	&
	\leq
	\sum_{k=1}^{\infty } k m^{-10} \p \left( \numbarr_{\left[0,1\right]}=k \right) = m^{-9}
	\end{align*}
	as $\numbarr_{\left[0,1\right]}$ is a Poisson random variable with expectation value $m$.
\end{proof}

Next, we provide the technical lemma that allows us to compute the loss by integrating over the probability of perishing.

\auxloss*

\begin{proof}
	In the proof, we omit the subscript $\dep$, because the distribution of the \dtimenamepl is fixed.
	Recall that $\sou_i$ denotes the maximum \dtimename of agent $\ag_i$. In \Cref{sec:model}, we described that every agent $\ag$ carries a random variable $U(\ag)$ that determines with which agents in the pool she forms an edge at arrival. 
    We also assume that attached to agent $\ag$ there is a random variable $\tiebreak(\ag)$, which has the distribution $\text{Uniform}\left[0,1\right]^{\otimes \N}$. 
    Whenever an agent $\ag$ decides to match with another agent, but has several possibilities, it chooses the agent $\ag_i$, where $\tiebreak_i(\ag)$ is minimal. As the uniform distribution is absolutely continuous with respect to the Lebesgue measure, there is almost surely a unique minimizer. 
    By symmetry, the agent that $\ag$ matches with is chosen uniformly among all possibilities. So $\tiebreak(\ag)$ helps with breaking ties, in case there are some. 
	For the complete process we have four different sources of randomness: The incoming agents arrive at random, according to a Poisson process, their \dtimenamepl are random with distribution $\mu$, the forming of edges which is determined by the vectors $U(\ag)$, and the random breaking of ties, if necessary, that is determined by the vectors $\tiebreak(\ag)$.
	Let $Z = \left((X_i, U(\ag_i), \tiebreak(\ag_i))\right)_{i\ge 1}$ be the random variable containing all information about the random variables $X_i, U(\ag_i)$ and $\tiebreak(\ag_i)$ for $i\ge 1$. Given $Z$, the evolution of an agent $v_i$ up to time $\Tmax$ depends only on the random variable $Y = (Y_t)_{t\in[0,\Tmax]}$ distributed according to the increments of a Poisson process with rate $m$, where $Y_t$ is the number of agents arriving at time $t$. 
    Thus, almost surely, we have $Y_{t} \in \{0,1\}$ for all $t$, with $Y_t=1$ if and only if an agent arrives at time $t$.
	Using $Y$ and $Z$, we can define the function 
	\begin{align*}
		f_Z(t,Y) \coloneqq \mathbf{1} \left\{ \text{an agent arrives at time $t$ and perishes before time $\Tmax$}  \right\}\text, 
	\end{align*}
	where $\mathbf 1$ is the indicator function.
	The loss until time $\Tmax$ can clearly be expressed as $\loss(m,\Tmax) = \frac 1{m\Tmax}\E  \left[ \int_0^{\Tmax} f_Z(t,Y) Y(dt) \right]$.
	Note that $Z$ and $Y$ are independent, so the process $Y$ is still the same Poisson process, given the information contained in $Z$. Furthermore, the intensity measure of this Poisson process is simply $m$ times the Lebesgue measure on $\left[0,\Tmax\right]$. 
    Thus, we can apply the tower property in the first and fourth equality and Mecke's equation \citep[see, e.g.,][Theorem~4.1]{LaPe17a} in the second equality to compute
	\begin{align*}
		  \E & \left[ \int_0^{\Tmax} f_Z(t,Y) Y(dt)  \right]\\
		 & = \E \left[\E \left[ \int_0^{\Tmax} f_Z(t,Y) Y(dt) \Big| \sigma(Z)  \right] \right]
		 = \E \left[ \int_0^{\Tmax} \E \left[  f_Z(t,Y+ \delta_t) \Big| \sigma(Z) \right] m dt \right]\\
		 & = m \int_0^{\Tmax} \E \left[  \E \left[  f_Z(t,Y+ \delta_t)  \Big| \sigma(Z)  \right]\right] dt
		 = m \int_0^{\Tmax} \E \left[    f_Z(t,Y+ \delta_t)   \right] dt \\
		 & = m \int_0^{\Tmax} \p\left( \text{an agent arriving at time $t$ perishes before time $\Tmax$}  \right)  dt\text.
	\end{align*}
	There, $\sigma(Z)$ denotes the $\sigma$-algebra generated by $Z$, and $\delta_t$ is the Dirac delta function, i.e., $\delta_t(s)=\mathbf{1}_{\{s=t\}}$. Now, let us consider the second statement of the lemma. By the equality that we proved just before, we know that
	\begin{align}\label{eq:substituion}
		\notag \loss & = \limsup_{m,\Tmax\to \infty} \frac 1{\Tmax} \int_{0}^{\Tmax} \p_{\dep}\left(\text{an agent arriving at time $t$ perishes before time $\Tmax$} \right) dt \\
		& = \limsup_{m,\Tmax\to \infty} \int_{0}^1 \p_{\dep}\left(\text{an agent arriving at time $s\Tmax$ perishes before time $\Tmax$} \right) dt \text.
	\end{align}
	where the last equality holds by integration by substitution. For $\Tmax>0$, we define the functions $\tilde{f}_{\Tmax}, f_{\Tmax} : \left[0,1\right] \rightarrow \left[0,1\right]$ by
	\begin{align*}
		f_{\Tmax}(s) & \coloneqq \p_{\dep}\left(\text{an agent arriving at time $s\Tmax$ perishes before time $\Tmax$} \right)\text, \\
		\tilde{f}_{\Tmax}(s) & \coloneqq \p_{\dep}\left(\text{an agent arriving at time $s\Tmax$ perishes} \right) \text.
	\end{align*}
	We clearly have $f_t(s) \leq \tilde{f}_{\Tmax}(s) $ for all $s \in \left[0,1\right]$, and for fixed $s<1$ and $m>0$ we also have
	\begin{align*}
		0 \leq \tilde{f}_{\Tmax}(s) - f_{\Tmax}(s) &=
		\p_{\dep} \left(\text{an agent arriving at time $s\Tmax$ perishes after time $\Tmax$}\right)
		\\
		& \leq \p_{\dep} \left(\text{the maximum sojourn of an agent is in $\left[(1-s)\Tmax,\infty\right)$}\right)\\
		& = \mu \left(\left[(1-s)\Tmax,\infty\right)\right)
	\end{align*}
	which converges to $0$ as $\Tmax$ tends to infinity, since $\mu(\{+\infty\})=0$. As the difference $f_{\Tmax}(s) - \tilde{f}_{\Tmax}(s)$ is bounded by $1$ for all $s \in \left[0,1\right]$ we get that
	\begin{align*}
		\lim_{\Tmax \to \infty} \int_0^1 f_{\Tmax}(s) - \tilde{f}_{\Tmax}(s) ds = 0
	\end{align*}
	and as this convergence holds uniformly for all $m$, we get 
	\begin{align*}
		\loss & = \limsup_{m,\Tmax\to \infty} \frac 1{\Tmax} \int_{0}^{\Tmax} \p_{\dep}\left(\text{an agent arriving at time $t$ perishes before time $\Tmax$} \right) dt\\
		& = \limsup_{m,\Tmax\to \infty} \frac 1{\Tmax} \int_{0}^{\Tmax} \p_{\dep}\left(\text{an agent arriving at time $t$ perishes} \right) dt \text.
	\end{align*}
    
    This completes the proof.
\end{proof}

\section{Analysis of the Patient Algorithm}\label{app:patient}

In this appendix, we provide the proof of the technical lemmas in the proof of \Cref{lem:patient_lb}.
Note that the statements of the lemmas refer to the notation of this proof.

\stochdom*

\begin{proof}
	We first describe three different experiments: In the first one, we have a pool $Z_{t-\frac{4}{3}}$ of some size. We call the agents in the pool at time $t-\frac{4}{3}$ the uncolored {\sl black agents}. These agents get critical exactly one time step after their arrival, so they will not be in the pool at time $t-\frac{1}{3}$ anymore. The $k_1$ many agents arriving in the interval $\left[t-\frac{4}{3} , t-1 \right)$ are called the {\sl red agents} and the $n=k_2+k_3$ many agents arriving in the interval  $\left[ t-1, t-\frac{1}{3} \right)$ are called the {\sl blue agents}. Then agents can depart from the pool by perishing or by matching, so that $l$ colored, i.e., red or blue, agents are in the pool at time $t-\frac{1}{3}$. We are interested in the number $K_1$, which is the number of red agents in the pool at time $\left[t-\frac{1}{3}\right]$.
    
	In the second experiment, we have an urn with $k_1$ red balls and $n$ blue balls and we take $k_1+n-l$ balls out of the urn, uniformly at random without replacement. We are interested in the number of red balls $K_1^\prime$ that are left in the urn after this procedure.
    
	In the third experiment, we have an urn with $k_1$ red balls and $n$ blue balls and we take $l$ balls out of the urn, uniformly at random without replacement. We are interested in the number of red balls $\tilde{K_1}$ that are taken out of the urn.
	
	Uniformly choosing $l$ balls to be left in an urn or uniformly choosing $l$ balls to be taken out of an urn gives rise to the exact same distribution. 
    Thus it suffices to show that $K_1^\prime$ stochastically dominates $K_1$. For this, we couple the urn to the pool in the following way: At time $t-\frac{4}{3}$, we start with the initial urn, i.e., the one with $k_1$ red balls and $n$ blue balls in it. Let $\left(R_s\right)_{s\in \left[t-\frac{4}{3}, t-\frac{1}{3}\right]}$, respectively $\left(B_s\right)_{s\in \left[t-\frac{4}{3}, t-\frac{1}{3}\right]}$, be the number of red, respectively blue, agents in the pool at time $s$. Analogously, let $\left(R_s^\prime\right)_{s\in \left[t-\frac{4}{3}, t-\frac{1}{3}\right]}$, respectively $\left(B_s^\prime\right)_{s\in \left[t-\frac{4}{3}, t-\frac{1}{3}\right]}$, be the number of red, respectively blue, balls in the urn at time $s$. The underlying algorithm of the matching process in the first experiment does not distinguish between red and blue agents. So whenever an agent gets critical and matches with a colored agent, only the proportion of red/blue agents matters. So we can think of the process of the pool as follows: When an agent gets critical, say at time $t_0$, and matches with a colored agent, there is an underlying random variable $U_{t_0}\sim \text{Uniform}\left[0,1\right]$ that determines whether the agent matches with a blue or with a red agent. When $U_{t_0}\leq \frac{R_{t_0}}{R_{t_0}+B_{t_0}}$, then we choose uniformly at random one of the $R_{t_0}$ many red agents. Otherwise we choose one of the $B_{t_0}$ many blue agents uniformly at random.
	Whenever there is a matching with a colored agent in the pool, we also remove a ball from the urn. We pick a red ball if $U_{t_0}\leq \frac{R_{t_0}^\prime}{R_{t_0}^\prime + B_{t_0}^\prime }$, and otherwise we pick a blue ball. As $U_{t_0}$ is the uniform distribution, this also gives rise to the correct probabilities. Note that the proportion of red agents among the colored agents in the pool can change either when agents depart by matching or when a new agent arrives. Contrary, the proportion of red balls in the urn changes only at times when colored agents match in the pool. Let $S_s$, respectively $S_s^\prime$ be the number of red agents, respectively balls, that are removed from the pool, respectively urn, by time $s$. We claim that
	\begin{align}\label{eq:S}
		S_s \geq S_s^\prime \text{ for all } s\in \left[t-\frac{4}{3}, t-\frac{1}{3} \right] \ .
	\end{align}
	This is clear for $s < t-1$, as $S_s$ is simply the number of removed colored agents, as no blue agent entered the pool yet. Thus we can assume $s\geq t-1$ from here on. Whenever $S_s>S_s^\prime$ just before some time $t_0$, at which an agent gets critical and matches with a colored agent, then $S_s\geq S_s^\prime$ still holds after this match. If $S_s = S_s^\prime$ and $s\geq t-1$, then all red agents already arrived, which gives $R_s = k_1- S_s = R_s^\prime$. As the number of colored balls in the urn is at least as high as the number of colored agents in the pool, we have
	\begin{align*}
		\frac{R_s^\prime}{R_s^\prime + B_s^\prime} \leq  \frac{R_s}{R_s + B_s} \text.
	\end{align*}
	Assume that $S_s=S_s^\prime$ just before a match with a colored agent in the pool occurs. If a red ball is drawn from the urn, i.e., $U_s \leq \frac{R_s^\prime}{R_s^\prime + B_s^\prime}$, then also a red agent leaves the pool in a matching. Thus $S_s \geq S_s^\prime$ still holds directly after the match. So we also get that $S_{t-\frac{1}{3}} \geq S_{t-\frac{1}{3}}^\prime$ which directly gives
	\begin{align*}
		K_1^\prime = R^\prime_{t-\frac{1}{3}} = k_1 - S^\prime_{t-\frac{1}{3}} 
		\geq
		k_1 - S_{t-\frac{1}{3}} = R_{t-\frac{1}{3}}
	\end{align*}
	and thus finishes the proof.
\end{proof}

\urnbound*

\begin{proof}
	Consider a set $U \subseteq \{1,\ldots,l\}$.
	Then, the probability of drawing a red ball at the $i$th draw if and only if $i \in U$ is given by
	\begin{align*}
		\prod_{i=0}^{|U|-1} \frac{k_1-i}{N-i} \prod_{i=0}^{l-|U|-1} \frac{n-i}{N-|U|-i} = \frac{1}{\prod_{i=0}^{l-1} N-i } \prod_{i=0}^{|U|-1} (k_1 - i) \prod_{i=0}^{l-|U|-1} (n-i) \text.
	\end{align*}
	Among all $U \subseteq\{1,\ldots,l\}$ with $|U|\geq \frac{l}{2}$, this is clearly maximized when $|U|=\lceil \frac{l}{2} \rceil$, as $n\geq k_1$. Let us assume that $l$ is even from here on. The proof for $l$ odd works completely similar, just with one additional term. So we have for all $l$ even and all $U \subseteq\{1,\ldots,l\}$ with $|U|\geq \frac{l}{2}$ the bound
	\begin{align}\label{eq:product}
			\notag & \frac{1}{\prod_{i=0}^{l-1} N-i } \prod_{i=0}^{|U|-1} (k_1 - i) \prod_{i=0}^{l-|U|-1} (n-i)
			\leq \frac{1}{\prod_{i=0}^{l-1} N-i } \prod_{i=0}^{l/2-1} (k_1 - i) \prod_{i=0}^{l/2-1} (n-i)
			\\
			& =  \prod_{i=0}^{l/2 - 1} \frac{(k_1-i)(n-i)}{(N-2i)(N-2i-1)} 
			= 
			\prod_{i=0}^{l/2 - 1} \frac{(k_1-i)(n-i)}{(N-2i)(N-2i)} \prod_{i=0}^{l/2-1} \frac{N-2i}{N-2i-1}\text.
	\end{align}
	The second factor in \eqref{eq:product} can be bounded by
	\begin{align*}
		\prod_{i=0}^{l/2-1} \frac{N-2i}{N-2i-1} \leq \prod_{i=0}^{l/2-1} \frac{N-2i+1}{N-2i-1} = \frac{N+1}{N-l+1} \leq N+1 \ .
	\end{align*}
	In order to bound the components of the first factor, first notice that
	\begin{align*}
		\frac{k_1-i}{N-2i}+\frac{n-i}{N-2i} = \frac{ N-2i}{N-2i} = 1\text,
	\end{align*}
	so we multiply two numbers that sum to one. Furthermore, $\frac{k_1-i}{N-2i} \leq  \frac{2N/5-i}{N-2i} = \frac 25 \frac{N-5i/2}{N-2i} \leq \frac 25$. Recall that we assumed that $\frac {k_1}N\le \frac 25$, which is used in the first inequality. The product of $z(1-z)$ with $z \in \left[0,\frac 25\right]$ is maximized for $z=\frac 25$. Thus we already have
	\begin{align*}
		\prod_{i=0}^{l/2 - i} \frac{(k_1-i)(n-i)}{(N-2i)(N-2i)} \leq \left(\frac 25\cdot\frac 35\right)^{l/2} \leq 0.24^{l/2}.
	\end{align*}
	Inserting this into \eqref{eq:product} we get that for all $U\subseteq\{1,\ldots,l\}$ with $|U|\geq l/2$
	\begin{align*}
		\p \left( \text{The $i$th draw is red if and only if $i\in U$} \right) \leq (N+1)0.24^{l/2}.
	\end{align*}
	Recall that we have $l\geq m/8$ and, due $k_i\le m/3 + m^{2/3}$ for $i \in \{1,2,3\}$, $N+1 = k_1+k_2+k_3 +1 \leq 2m$ for $m$ large enough. By a union bound, we thus get
	\begin{align*}
		\p\left( \tilde{K_1} \geq \frac{l}{2} \right) & = \sum_{\substack{U \subseteq \{1,\ldots,l\}: \\ |U|\geq l/2}}
		\p \left( \text{The $i$th draw is red if and only if $i\in U$} \right) \leq 2^l(N+1)0.24^{l/2} \\
		& \leq 2m  \left(2\sqrt{0.24}\right)^l \leq 2m \, 0.98^l \leq 2m \, 0.98^{m/8}\text.
	\end{align*}
    
This concludes the proof of the lemma.
\end{proof}

\section{Auxiliary Statements}

In this section we give several auxiliary statements that we used above.

The first lemma is an auxiliary inequality that follows from the inequality $1+x\leq e^x$ with $x = - \frac{c}{m}$.

\begin{lemma}\label{lem:exp-estimate}
	Let $c\geq 0$ and $m>c$. Then,
	\begin{equation}\label{eq:Inequality for exponential approximation}
	\left( 1- \frac{c}{m} \right)^m \leq e^{-c}\text.
	\end{equation}
\end{lemma}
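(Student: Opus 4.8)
The plan is to reduce everything to the single elementary inequality $1+x \le e^x$, which holds for all real $x$ (for instance by convexity of $\exp$, whose tangent at $0$ is the line $1+x$, or by checking that $g(x)=e^x-1-x$ has its global minimum at $x=0$). Everything else is bookkeeping about signs.

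First I would instantiate this with $x = -\frac{c}{m}$. The hypotheses $c\ge 0$ and $m>c$ give $-1 < -\frac{c}{m} \le 0$, so the inequality $1+x\le e^x$ becomes
\begin{equation*}
0 < 1-\frac{c}{m} \le e^{-c/m}\text,
\end{equation*}
where I also record that the base $1-\frac{c}{m}$ is (strictly) positive. Then I would raise both sides to the power $m$. Since $m>0$, the map $t\mapsto t^m$ is nondecreasing on $[0,\infty)$, and both sides above are nonnegative, so the inequality is preserved:
\begin{equation*}
\left(1-\frac{c}{m}\right)^m \le \left(e^{-c/m}\right)^m = e^{-c}\text,
\end{equation*}
which is exactly \eqref{eq:Inequality for exponential approximation}. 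The degenerate case $c=0$ is covered as well, since then both sides equal $1$.

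There is no real obstacle here; the one point deserving a word is that the final step uses monotonicity of $t\mapsto t^m$ on the nonnegative reals, which is legitimate only because the base $1-\frac{c}{m}$ is nonnegative — and this is precisely guaranteed by the assumption $m>c$. I would state that explicitly so the lemma's hypothesis is visibly used.
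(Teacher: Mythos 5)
Your proof is correct and follows the paper's approach exactly: the paper also derives the bound from the elementary inequality $1+x\le e^x$ with $x=-\frac{c}{m}$, then raises both sides to the $m$-th power. Your write-up just makes the sign and monotonicity bookkeeping explicit, which is fine.
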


\begin{lemma}\label{lem:exp-snd-estimate}
    For all $x\ge 0$, it holds that $1- e^{-x} \ge \frac x{1+x}$.
\end{lemma}

\begin{proof}
    Taking reciprocals in the inequality $1+x\leq e^x$ (note that both sides are positive for $x\ge 0$) yields $e^{-x} \le \frac 1{1+x}$.
    Hence,
    \begin{equation*}
        1- e^{-x} \ge 1- \frac 1{1+x} = \frac x{1+x}\text.
    \end{equation*}
\end{proof}

An application of Chernoff's inequality \citep[see, e.g.,][]{HaRu90a} yields the next result.

\begin{lemma}\label{lem:Chernoff}
	Let $X$ be the sum of independent Bernoulli random variables or a Poisson random variable with expectation value $\mu$ and let $0<\delta \leq 1$. Then,
	\begin{align}
	\label{eq:chernoff1}
	& \p\left( X \geq (1+\delta)\mu \right) \leq e^{-\mu\delta^2/3}, \\
	\label{eq:chernoff2}
	& \p\left( X \leq (1-\delta)\mu \right) \leq e^{-\mu\delta^2/3}\text.
	\end{align}
\end{lemma}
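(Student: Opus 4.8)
The plan is the standard exponential-moment (Chernoff) argument, handling the two hypotheses on $X$ in a unified way. In the Bernoulli case I would write $X=\sum_i X_i$ with the $X_i$ independent, $\E[X_i]=p_i$, and $\sum_i p_i=\mu$. The first step is the moment generating function bound: for every $t\in\R$,
\[
\E[e^{tX}]=\prod_i\E[e^{tX_i}]=\prod_i\bigl(1+p_i(e^t-1)\bigr)\le\prod_i e^{p_i(e^t-1)}=e^{\mu(e^t-1)},
\]
using $1+x\le e^x$. In the Poisson case the identity $\E[e^{tX}]=e^{\mu(e^t-1)}$ holds exactly, so from here on both cases become identical.

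For \eqref{eq:chernoff1} I would apply Markov's inequality to $e^{tX}$ with the optimal choice $t=\log(1+\delta)>0$, giving
\[
\p\bigl(X\ge(1+\delta)\mu\bigr)\le e^{-t(1+\delta)\mu}\,\E[e^{tX}]\le e^{\mu\bigl(e^t-1-t(1+\delta)\bigr)}=\Bigl(\tfrac{e^{\delta}}{(1+\delta)^{1+\delta}}\Bigr)^{\mu}.
\]
For \eqref{eq:chernoff2} I would instead apply Markov's inequality to $e^{-tX}$ with $t=-\log(1-\delta)>0$, obtaining in the same way
\[
\p\bigl(X\le(1-\delta)\mu\bigr)\le e^{t(1-\delta)\mu}\,\E[e^{-tX}]\le\Bigl(\tfrac{e^{-\delta}}{(1-\delta)^{1-\delta}}\Bigr)^{\mu}.
\]

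It then remains to verify the two elementary one-variable inequalities
\[
\frac{e^{\delta}}{(1+\delta)^{1+\delta}}\le e^{-\delta^2/3}\quad\text{and}\quad\frac{e^{-\delta}}{(1-\delta)^{1-\delta}}\le e^{-\delta^2/2}\le e^{-\delta^2/3}\qquad(0<\delta\le 1),
\]
equivalently $g(\delta)\coloneqq(1+\delta)\log(1+\delta)-\delta-\delta^2/3\ge 0$ and $h(\delta)\coloneqq\delta-\delta^2/2+(1-\delta)\log(1-\delta)\ge 0$ on $(0,1]$. For $h$ this is immediate, since $h(0)=h'(0)=0$ and $h''(\delta)=\delta/(1-\delta)\ge 0$. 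The bound for $g$ is the only mildly delicate point: because $g''(\delta)=\tfrac{1}{1+\delta}-\tfrac23$ changes sign at $\delta=\tfrac12$, the naive convexity argument fails near $\delta=1$, so instead I would note that $g'(0)=0$, that $g'$ is increasing on $(0,\tfrac12)$ and decreasing on $(\tfrac12,1)$ with $g'(1)=\log2-\tfrac23>0$, hence $g'\ge 0$ on $[0,1]$ and $g$ is nondecreasing from $g(0)=0$. This verification is the main (and very minor) obstacle; alternatively one may simply cite it from \citet{HaRu90a}. Raising the two displays above to the power $\mu$ then yields \eqref{eq:chernoff1} and \eqref{eq:chernoff2}.
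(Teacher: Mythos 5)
Your proof is correct. The paper does not actually prove this lemma; it simply cites \citet{HaRu90a}, so your derivation is the standard exponential-moment Chernoff argument that such references contain. Every step checks out: the MGF bound $\E[e^{tX}]\le e^{\mu(e^t-1)}$ (exact for Poisson, by $1+x\le e^x$ for Bernoulli sums), the optimized Markov bounds, and the two calculus verifications, including your correct handling of the sign change of $g''$ at $\delta=\tfrac12$ by observing $g'(0)=0$, $g'$ unimodal, and $g'(1)=\log 2-\tfrac23>0$. The only thing worth adding for completeness is a word about the endpoint $\delta=1$ in \eqref{eq:chernoff2}, where $t=-\log(1-\delta)$ is not finite; there one can either take a limit or note directly that $\p(X\le 0)=\p(X=0)\le e^{-\mu}\le e^{-\mu/3}$ in both the Bernoulli and Poisson cases.
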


The next lemma is a typical application of Markov chains to analyze random walks with drift, and its proof uses a classical technique.

\begin{lemma}\label{lem:random walk with drift}
	Let $0<\eps<\frac{1}{2}$ and let $(X_n)_{n\in \N}$ be an irreducible Markov chain with state space $\N$, that makes jumps between nearest neighbours only and with transition probabilities satisfying 
    % $p(z,z+1)\geq\frac{1}{2}-\eps$ for all $z\geq M$. %MB: I changed the inequality in this line, please double check
    $p(z,z+1)\leq\frac{1}{2}-\eps$ for all $z\geq M$. 
    When the Markov chain starts at $X_0=M$, the probability of hitting some integer $N>M$ before going to $M - 1$ is bounded by $(1-2\eps)^{N-M}$.
\end{lemma}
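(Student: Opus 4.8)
The plan is to use the standard gambler's-ruin argument via a harmonic (or supermartingale) function, but carefully upper-bounding by the worst case allowed by the hypothesis $p(z,z+1)\geq \tfrac12-\eps$. First I would reduce to a clean one-dimensional walk: since the Markov chain makes nearest-neighbour jumps only and we only care about the event of hitting $N$ before $M-1$ when started at $M$, all transitions that matter occur at states $z$ with $M\leq z \leq N-1$, where we know $p(z,z-1)=1-p(z,z+1)\leq \tfrac12+\eps$. The key observation is that the probability of reaching $N$ before $M-1$ is monotone in the rightward transition probabilities: making the walk \emph{more} biased to the right can only increase this probability. Hence it suffices to prove the bound for the homogeneous random walk on $\{M-1,M,\dots,N\}$ with $p(z,z+1)=\tfrac12-\eps=:q$ and $p(z,z-1)=\tfrac12+\eps=:1-q$ for all interior $z$, with $M-1$ and $N$ absorbing. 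I would make this monotonicity precise by a coupling: run both walks with the same ``uniform randomness'' at each step and argue by induction on time that the dominated walk stays weakly below the dominating one until absorption; since the boundaries are the same, the dominated walk cannot reach $N$ first unless the dominating one does.

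For the homogeneous walk, I would invoke the classical gambler's ruin formula. Writing $r = \frac{1-q}{q} = \frac{1/2+\eps}{1/2-\eps} > 1$, the probability that the walk started at $M$ reaches $N$ before $M-1$ equals
\begin{align*}
\frac{r^{1} - 1}{r^{N-(M-1)} - 1} = \frac{r-1}{r^{N-M+1}-1}.
\end{align*}
I would then bound this crudely: $r - 1 = \frac{2\eps}{1/2-\eps} = \frac{4\eps}{1-2\eps}$ and $r^{N-M+1}-1 \geq r^{N-M}(r-1)$ (since $r>1$ and $r^{N-M+1} = r^{N-M}\cdot r \geq r^{N-M}(1 + (r-1)) = r^{N-M} + r^{N-M}(r-1)$, while $1 \leq r^{N-M}$), so the ratio is at most $r^{-(N-M)} = \left(\frac{1/2-\eps}{1/2+\eps}\right)^{N-M} \leq (1/2-\eps)^{N-M}\cdot 2^{N-M}$... which is not quite the stated $(1-2\eps)^{N-M}$. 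A cleaner route: use the supermartingale $f(z) = (1-2\eps)^{-z}$ — actually the exact harmonic function for the biased walk is $z\mapsto r^z$ with $r=\frac{1-q}{q}$; the bound $(1-2\eps)^{N-M}$ follows because $\tfrac{1}{r} = \frac{1/2-\eps}{1/2+\eps} \le 1 - 2\eps$ is false in general, so instead I would directly verify that $g(z):=(1-2\eps)^{-(z-M)}$ is a submartingale bounded appropriately, or simply note $r \ge \frac{1}{1-2\eps}$ since $\frac{1/2+\eps}{1/2-\eps}\ge \frac{1}{1-2\eps}$ iff $(1/2+\eps)(1-2\eps)\ge 1/2-\eps$ iff $1/2-\eps-2\eps^2\ge 1/2-\eps$, which is false. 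So the correct comparison is the other way, and the honest bound comes from: probability of ever reaching $N$ before $M-1$ is at most probability of making $N-M$ net rightward steps, which for the biased walk is at most $(1-2\eps)^{N-M}$ by a union bound over first-passage decompositions — i.e. the probability the walk started at $M$ ever reaches $M+1$ before $M-1$ is exactly $q/( \dots)$; iterating, the probability of reaching $N$ before returning to $M-1$ is at most the product of $N-M$ such first-passage-upward probabilities, each bounded by... I would finalize by using the elementary fact that for a left-drifting walk the probability of a net displacement of $+k$ to the right before ever stepping left past the start is at most $\left(\frac{p(z,z+1)}{p(z,z-1)}\right)^k \le (1-2\eps)^k$ after telescoping, which is the classical one-sided escape estimate.

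The main obstacle, and the step I would be most careful about, is the monotonicity/coupling reduction: the chain is only assumed to satisfy $p(z,z+1)\ge \tfrac12-\eps$ for $z\ge M$, with \emph{no} control for $z<M$, so I must argue that the event ``hit $N$ before hitting $M-1$, started from $M$'' genuinely depends only on the transition probabilities at states $\{M,\dots,N-1\}$ — which is true because the walk is killed the instant it reaches $M-1$ — and then that replacing each such $p(z,z+1)$ by the smaller value $\tfrac12-\eps$ only decreases the escape probability. The coupling making this rigorous is routine but needs to be stated; after that, the gambler's-ruin estimate is textbook.
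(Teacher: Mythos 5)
Your strategy---reduce to a homogeneous biased walk by coupling, then apply the gambler's-ruin formula---is sound, and in fact your first ``crude'' computation already finishes the proof; the rest of your proposal derails on an arithmetic slip. You correctly derived that the escape probability is at most $r^{-(N-M)} = \bigl(\tfrac{1/2-\eps}{1/2+\eps}\bigr)^{N-M}$, and then wrote it as $(1/2-\eps)^{N-M}\cdot 2^{N-M}$, remarking that this is ``not quite'' the claim; but $2(1/2-\eps)=1-2\eps$, so it \emph{is} exactly $(1-2\eps)^{N-M}$. You then second-guessed yourself by computing $(1/2+\eps)(1-2\eps)=1/2-\eps-2\eps^2$, which is wrong: the correct product is $1/2-2\eps^2$, and since $2\eps^2\le\eps$ for $\eps\le\tfrac12$, this is $\ge 1/2-\eps$. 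Hence $r\ge\tfrac{1}{1-2\eps}$, equivalently $\tfrac{1/2-\eps}{1/2+\eps}\le 1-2\eps$, is in fact true, contrary to what you concluded. Fixing that arithmetic closes the argument; the later detour through first-passage decompositions is both unnecessary and left unfinished.

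Two further points. First, the lemma statement as printed has a sign typo: with $p(z,z+1)\ge\tfrac12-\eps$, the conclusion is false (take $p(z,z+1)=1$, so the walk hits $N$ with probability one). The intended hypothesis is $p(z,z+1)\le\tfrac12-\eps$, which is what the paper's proof uses (it bounds $\tfrac{p(j,j-1)}{p(j,j+1)}\ge\tfrac{1/2+\eps}{1/2-\eps}$) and what your monotonicity reduction implicitly assumes when you treat $\tfrac12-\eps$ as the worst-case \emph{upper} bound on the rightward probability; you should flag that correction rather than silently using it. Second, on method: the paper bypasses the comparison-chain reduction entirely by building the harmonic function $f(n)=\sum_{i=0}^{n-1}\prod_{j=1}^{i}\tfrac{p(j,j-1)}{p(j,j+1)}$ directly for the given inhomogeneous chain, applying optional stopping to $f(X_{n\wedge\tau})$, and then bounding $f(N)$ from below using the hypothesis termwise. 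That route avoids the coupling step you rightly note would need to be spelled out; your route is equally valid once the coupling is made rigorous, and both ultimately rest on the same elementary inequality $\tfrac{1/2-\eps}{1/2+\eps}\le 1-2\eps$.
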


\begin{proof}
	We provide a proof for the case $M = 1$. 
	The proof for general $M$ works completely analogous.
	
	We define the function 
	\begin{equation*}
	f:\N \rightarrow \R \ , \ n \mapsto 
	\sum_{i=0}^{n-1} \prod_{j=1}^{i} \frac{p(j,j-1)}{p(j,j+1)}\text,
	\end{equation*}
	i.e.,
	\begin{equation*}
		f(0)=0, f(1)=1, \text{ and } f(n+1)= f(n) + \prod_{j=1}^{n} \frac{p(j,j-1)}{p(j,j+1)} \text{ for } n\geq 1\text.
	\end{equation*}
    
	The function $f$ is harmonic on $\N_{>0}$ for the transition probabilities $\left(p(k,l)\right)_{k,l \in \N}$, as 
	\begin{align*}
		& p(n,n+1)f(n+1) + p(n,n-1)f(n-1) \\
		&
		= 
		p(n,n+1) \left(f(n)+\prod_{j=1}^{n} \frac{p(j,j-1)}{p(j,j+1)}\right)
		+
		p(n,n-1) \left(f(n)-\prod_{j=1}^{n-1} \frac{p(j,j-1)}{p(j,j+1)}\right)\\
		& = f(n) + p(n,n+1) \prod_{j=1}^{n} \frac{p(j,j-1)}{p(j,j+1)}- p(n,n-1)\prod_{j=1}^{n-1} \frac{p(j,j-1)}{p(j,j+1)}
		= f(n)
	\end{align*}
	for all $n\geq 1$. For an integer $t\in \Z$, define the stopping time
	$\tau_t=\min\{n:X_n=t\}$, and define $\tau=\tau_0\wedge \tau_N=\min\{n:X_n\in \{0,N\}\}$.
	Assume that the Markov chain starts at $X_0=1$. Then
	the process $f(X_{n\wedge \tau})$ is a martingale with respect to the natural filtration. As the state space is finite, this martingale is also uniformly integrable. By the optional sampling theorem \citep[][Theorem~10.21]{Klen13a}, we have
	\begin{align}\label{eq:martingale1}
	\notag 1 & = \E \left[ f(X_\tau) | X_0 =1 \right] 
	= 
	f(0) \cdot \p\left( X_\tau = 0 | X_0 =1 \right) + f(N) \cdot \p\left( X_\tau = N | X_0 =1 \right)\\
	&
	=
	f(N) \cdot \p\left( \tau_N < \tau_0 | X_0 =1 \right) \text.
	\end{align}
    
	Next, we bound $f(N)$ by 
	\begin{align*}
		f(N) & = \sum_{i=0}^{N-1} \prod_{j=1}^{i} \frac{p(j,j-1)}{p(j,j+1)} \geq \sum_{i=0}^{N-1} \prod_{j=1}^{i} \frac{\frac{1}{2}+\eps}{\frac{1}{2}-\eps}
		=
		\sum_{i=0}^{N-1} \left( \frac{1+2\eps}{1-2\eps} \right)^i
		= \frac{ \left( \frac{1+2\eps}{1-2\eps} \right)^N - 1}{ \frac{1+2\eps}{1-2\eps} - 1 }\\
		& =	\frac{ \left( \frac{1+2\eps}{1-2\eps} \right)^N - 1}{ \frac{4\eps}{1-2\eps} }
		= \frac{ \left( 1+2\eps \right)^N - (1-2\eps)^N}{4\eps (1-2\eps)^{N-1}} \geq  \frac{ 4\eps }{4\eps (1-2\eps)^{N-1}} =  \frac{ 1 }{ (1-2\eps)^{N-1}} \text.
	\end{align*}
    
	Inserting this inequality into \eqref{eq:martingale1} and solving for $\p\left( \tau_N < \tau_0 | X_0 =1 \right)$ gives that
	\begin{align*}
	\p\left( \tau_N < \tau_0 | X_0 =1 \right) \leq (1-2\eps)^{N-1} \text.
	\end{align*}

    This concludes the proof.
\end{proof}

\end{document}